\newcommand{\cmark}{\ding{51}}%
\newcommand{\xmark}{\ding{55}}%
\newtheorem{theorem}{Theorem} 
\newtheorem{lemma}{Lemma}
\newtheorem{proposition}{Proposition}
\newtheorem{corollary}{Corollary}
\newtheorem{remark}{Remark}
\newtheorem{definition}{Definition}
\newcommand{\myexpectation}[1]{\mathsf{E}{\left[#1\right]}}
\newcommand{\mse}{\mathsf{MSE}}
\newcommand{\pw}{\mathsf{PW}}
\begin{document}

\title{Over-the-Air Computation Systems: Optimization, Analysis and Scaling Laws}
\author{Wanchun Liu, \emph{Member, IEEE}, Xin Zang, Yonghui Li, \emph{Fellow, IEEE},\\ Branka Vucetic, \emph{Fellow, IEEE}\\
}
\maketitle
\vspace{-2cm}
\begin{abstract}
	\let\thefootnote\relax\footnote{
		The authors are with School of Electrical and Information Engineering, The University of 
		Sydney, Australia. Emails:	\{wanchun.liu,\ xin.zang,\ yonghui.li,\ branka.vucetic\}@sydney.edu.au.	
	}
For future Internet-of-Things based Big Data applications, data collection from ubiquitous smart sensors with limited spectrum bandwidth is very challenging.
On the other hand, 
to interpret the meaning behind the collected data, 
it is also challenging for an edge fusion center running computing tasks over large data sets with a limited computation capacity.
To tackle these challenges, by exploiting the superposition property of  multiple-access channel and  the functional decomposition, the recently proposed technique, over-the-air computation (AirComp), enables an effective joint data collection and computation from concurrent sensor transmissions.
In this paper, we focus on a single-antenna AirComp system consisting of $K$ sensors and one receiver.
We consider an optimization problem to minimize the computation mean-squared error (MSE) of the $K$ sensors' signals at the receiver by optimizing the transmitting-receiving (Tx-Rx) policy, under the peak power constraint of each sensor.
Although the problem is not convex, we derive the computation-optimal policy in closed form.
Also, we comprehensively investigate the ergodic performance of the AirComp system,
and the scaling laws of the average computation MSE (ACM) and the average power consumption (APC) of different Tx-Rx policies with respect to $K$.
%
For the computation-optimal policy, we 
show that the policy has a vanishing ACM and a vanishing APC with the increasing $K$.




\end{abstract}
\vspace{-0.7cm}
\begin{IEEEkeywords}
Wireless sensor networks, over-the-air computation, remote estimation, mean-squared error, optimal power allocation, scaling-law analysis.
\end{IEEEkeywords}
\vspace{-0.8cm}
\section{Introduction}\label{sec:intro}
Under the fast development of wireless communication, networking, data collection and storage, the era of Big Data has arrived~\cite{wu2013data}.
According to a recent DOMO technical report~\cite{domo}, more than $2$ quintillion bytes of data are created every day, and about $1.7$ megabytes of data will be created every second per single person on earth by 2020.
Also, the Internet of Things (IoT), which connects smart devices that interact with each other and collect all kinds of data, is exponentially growing from $2$ billion devices in 2006 to a predicted $200$ billion by 2020, and will be one of the primary drivers of data explosion.
How to effectively collect and leverage Big Data and interpret the meaning behind it have attracted much attention in the areas of public health, manufacturing, agriculture and farming, energy, transportation, supply chain management and logistics.
In such IoT-based Big Data applications (e.g., smart cities), wireless data collection from ubiquitous massive smart sensors/devices with limited spectrum bandwidth is very challenging, especially when the data needs to be dealt in a timely manner. 
On the other hand, 
due to a large number of data sources, we do not care much about the value of each individual data source anymore, but shift our focus on the fusion of massive data and unleash its power, which is actually a computing problem. The computation of a large amount of data is also challenging for an edge devices with a limited computation capacity.


To tackle these challenges, the technique, over-the-air computation (AirComp), has been developed to enable an efficient data-fusion of sensing data from many concurrent sensor transmissions by leveraging the inherent broadcast nature of wireless communications and the application of a beautiful mathematical tool of function representation.

%

\subsection{What is AirComp?}
\subsubsection{Preliminaries}
Assume an ideal multiple-access channel (MAC) of $K$ sensors with the signal-superposition property that 
\begin{equation}\label{idealMac}
r = \sum_{k=1}^{K} u_k,
\end{equation}
where $u_k$ is the transmitted signal of sensor $k$ and $r$ is the received signal at the receiver.

Consider $K$ wireless smart sensors, each having a measurement signal $s_k \in \mathbb{R}, \forall k \in \{1,\cdots,K\}$, and a $K$-variate computing task (e.g., sum, multiplication and arithmetic/geometric mean) $\phi:\mathbb{R}^K \rightarrow \mathbb{R}$ at a designated receiver.
By using a mathematical property in the area of theoretical computer science that 
every real-valued multivariate function is representable in its nomographic form as a function of a finite sum of univariate functions~\cite{buck1979approximate}, there always exists a set of pre-processing functions $\psi_k:\mathbb{R}\rightarrow \mathbb{R},\forall k\in \{1,\cdots,K\}$ and a post-processing function $\varphi:\mathbb{R}\rightarrow \mathbb{R}$ such that
\begin{equation}\label{aircomp}
\phi(s_1,\cdots,s_K)=\varphi\left(\sum_{k=1}^{K}\psi_k(s_k)\right).
\end{equation}

\subsubsection{The Brief Idea} 
Based on \eqref{idealMac} and \eqref{aircomp}, the overall idea of AirComp proposed in~\cite{GoldenbaumTCOM,GoldenbaumTSP} is to let each sensor pre-process its own signal and send $u_k=\psi_k(s_k)$ to the receiver simultaneously, and the receiver processes the received sum of signals $\sum_{k=1}^{K}\psi_k(s_k)$ with function $\varphi(\cdot)$ and thus obtains the desired computation of $K$ sensors' measurement signals $\phi(s_1,\cdots,s_K)$.
Therefore, the transmission and computation of a large number of sensors' signals of an AirComp system are completed in one single time slot (i.e., in a symbol level) in contrast to an intuitive one-by-one-transmit-then-compute protocol.
In other words, AirComp effectively integrates communication and computation by harnessing interference for computing~\cite{GoldenbaumTSP}.
Moreover, the receiver's original computation task $\phi(\cdot)$ of processing $K$ signals has been decomposed into $(K+1)$ small tasks $\{\psi_1(\cdot),\cdots,\psi_K(\cdot),\varphi(\cdot)\}$, and each sensor or the receiver only needs to take one lightweight task with only one signal to be processed.
In this way, the computation complexity of the receiver is significantly reduced, especially when $K$ is large.

\subsubsection{Use Cases} 
In addition to smart-city applications, e.g., using unmanned aerial vehicles for real-time data computation and collection from sensors embedded in ground vehicles, buildings and other infrastructures,
AirComp has been developed and applied in two important emerging mobile applications with highly integrated communication and computation tasks:

	$\bullet$ Over-the-air consensus. For example, in drone swarming and connected car platooning applications, each node concurrently sends its current states including velocity and acceleration in real-time, while the central controller can receive and compute the average current states of the mobile nodes, and generate control commands to drive each node approaching a consensus status~\cite{consensus}.
	
	$\bullet$ Wireless distributed machine learning. In wireless machine learning applications that adopt distributed stochastic gradient descent algorithms for model training, each mobile user calculates the gradient of its local cost function of its own data set in terms of the model parameters and concurrently sends it to the parameter server (i.e., a base station), which then computes a weighted average of the gradients and broadcasts it to the mobile users for further iteration until convergence~\cite{GuangxuWangyong,Gunduz,Osvaldo}.

%
%
%
%

\subsection{Previous Work}
The research of AirComp mostly focuses on two aspects: the pre-processing and post-processing functions design in \eqref{aircomp}, and the analysis of the impact of practical wireless MAC (rather than the ideal one in \eqref{idealMac}) on the performance of AirComp and the transceiver design issues for reducing the impact.
For the former, the theoretical properties and how to design the pre- and post-processing functions with a given multi-variable target function $\phi(\cdot)$ (e.g. geometric mean) have been extensively investigated in~\cite{GoldenbaumTCOM,GoldenbaumTSP,GoldenbaumWIOPT,Katabi}.
For the latter, the computation of the sum of pre-processed signals in \eqref{idealMac}, $\sum_{k=1}^{K}u_k$, is not perfect due to  the non-zero receiver noise and unequal channel coefficients, and hence a key design target is to make the computation distortion of the sum of signals as small as possible.
For a multi-antenna AirComp system, an optimization problem of transmitting and receiving beamforming design was considered to minimize the computation distortion~\cite{GuangxuMIMO}, based on which a wireless-powered AirComp system was studied in \cite{Xiaoyang}.
%
%
Also, the effect of the lack of synchronization between different sensors and the imperfect channel estimation on the distortion of the sum of signals were studied in \cite{GoldenbaumTCOM} and \cite{Goldenbaumletter}, respectively.

We would note that using the signal-superposition property of a MAC for direct information fusion from multiple terminals is not new and it has been successfully utilized in solving the classic central estimation officer (CEO) problem in the area of remote estimation of traditional wireless sensor networks (WSNs)~\cite{CEO}. In this application, each sensor takes a noisy observation of the same source (a deterministic parameter or a random process), and concurrently sends the uncoded (linearly scaled) signal to the fusion center through a MAC. The fusion center receives the superimposed signals and reconstructs the source signal of interest.
The CEO problems of deterministic parameter estimation under single and multiple antenna settings were investigated in \cite{Alex} and \cite{Cui}, respectively.
A CEO problem for Gauss-Markov process estimation was considered in \cite{Jiang1}.
Although both the traditional CEO and the recent AirComp systems make use of MACs for data fusion, the former only needs to estimate a single-source signal while the latter has to estimate a function of distributed multi-source signals.
Thus, the design targets and the optimal solutions to these two problems are fundamentally different.

\subsection{Contributions and Paper Organization}
In this paper, we focus on a baseline single-antenna AirComp system with non-zero receiver noise and unequal channel coefficients, where the sensors send linearly scaled pre-processed signals simultaneously to the receiver, which then linearly scales the received signal as the computing output of the sum of the pre-processed signals.\footnote{Such the AirComp system actually uses an analog (or coding-free) transmission method. Note that in the area of WSNs and wireless remote estimation/control systems, extensive research has focused on the fusion of analog data instead of encoded digital data~(see \cite{Jiang1,Alex,Cui}, \cite{AlexControl,liu2019wireless} and references therein). This is mainly because digital transmission achieves an exponentially worse performance than analog signaling in terms of distortion between the source signal and the recovered signal, which has been proved by the pioneer work in~\cite{Gastpar1}.}
The main contributions of the paper are summarized as follows.

{
	$\bullet$ We consider an optimization problem to minimize the computation mean-squared error (MSE) of the sum of the pre-processed signals at the receiver by optimizing the transmitting-receiving (Tx-Rx) scaling policy of a single-antenna AirComp system, where each sensor has a peak power limit for transmission.
	We note that such a type of problem was originally proposed under a more complicated setting of multi-antenna multi-modal sensing in \cite{GuangxuMIMO}, which, however, only gave a suboptimal solution due to the non-convexity of the problem.
	Under the single-antenna setting, although the problem is still non-convex, we derive the optimal solution in closed form.\footnote{We note that the same solution has independently arrived in a parallel work~\cite{CAO}, which will be discussed in the latter part of the introduction.}
	Actually, when considering the single-antenna and single-modal setting, the solution in~\cite{GuangxuMIMO} degrades to a Tx-Rx scaling policy of a channel-inversion type, which leads to a much larger computation MSE than that of the optimal policy obtained in our paper.
	Furthermore, we extend our optimal solution of the single-antenna case to a more practical case of AirComp, where each sensor with simple hardware components has one antenna and the receiver has $N$ antennas.
	Our results show that the solution provides a significant  reduction of computation MSE compared to the existing work~\cite{GuangxuMIMO}.

	$\bullet$ We also investigate a MAC system for distributed remote estimation, which is closely related to the AirComp system.
	In this system, each sensor sends its scaled measurement signal to the receiver simultaneously and the receiver needs to recover each sensor's signal as accurately as possible. We formulate and solve an optimization problem to minimize the sum of the estimation MSE of each sensor's signal at the receiver by optimizing the Tx-Rx scaling policy, under the peak power constraint of each sensor.\footnote{Note that the research of traditional MAC is mostly focused on rate-centric systems, where the design target is commonly to maximize the achievable sum rate of the $K$ users~\cite{MACRate1}. {For estimation-centric MAC, the optimization problem for minimizing the sum of estimation MSEs has not been considered in the open literature. Only its dual system, i.e., an estimation-centric broadcast channel (BC) system, was investigated in \cite{MACMSE}, under a sum power constraint.}}
	Interestingly, such an estimation problem of the MAC system is a sum-of-MSE problem, while the AirComp system introduces an MSE-of-sum problem.
	We also prove the condition under which the optimal Tx-Rx scaling policies of the MAC and AirComp systems are equivalent to each other.

	$\bullet$
	 We investigate the ergodic performance of AirComp systems in terms of the average computation MSE and the average power consumption under Rayleigh fading channels.
	 We comprehensively study the scaling laws of the average computation MSE and the average power consumption of different Tx-Rx scaling policies with respect to the number of sensors $K$. 
	 Also, we define two types of policies: the computation-effective policy, which has a vanishing average computation MSE, and the energy-efficient policy, which has a vanishing average power consumption, with the increasing number of sensors $K$.
	Since there is a tradeoff in policy design between the computation effectiveness and the energy efficiency of the AirComp system,	it is not clear whether there exists a policy that is both computation-effective and energy-efficient.
	We rigorously prove the existence of such the type of policy.	
	Moreover, for the computation-optimal policy, we prove that the policy is a computation-effective one and its average computation MSE has a decay rate of $O(1/\sqrt{K})$. Our numerical results show that the policy is also energy-efficient.

Note that this paper is conducted in parallel and independent of~\cite{CAO}, which has also obtained the same computation-optimal policy in the single-antenna case but with a different proof structure. 
Recall that 
the scaling-law analysis, the tradeoff between the computation effectiveness and the energy efficiency, and the comparison study between the sum-of-MSE (MAC) and the MSE-of-sum (AirComp) problems in our paper have not been considered in the open literature of AirComp including~\cite{GuangxuMIMO} and~\cite{CAO}.
}

{The remainder of the paper is organized as follows: Sec. II describes the AirComp system and formulates a computation-MSE-minimization problem.
In Sec. III, we solve the optimization problem and obtain the computation-optimal Tx-Rx scaling policy.
In Sec. IV, we study a MAC-based remote estimation problem which is closely related to the AirComp problem, and then compare the optimal policies of the two problems.
In Sec. V, the ergodic performance of the AirComp system with different Tx-Rx scaling policies are investigated in terms of the average computation MSE and the average power consumption.
Sec. VI numerically evaluates the performance of AirComp systems with different policies.
Finally, Sec. VII concludes this work.}

Notations:
For two
functions $g(x)$ and $f(x)$, the notations $g(x) = O(f(x))$ and $g(x) = o(f(x)), x\rightarrow \infty$, mean that
$\limsup_{x\rightarrow\infty} g(x)/f(x)<\infty$ and $\lim\limits_{x\rightarrow\infty} g(x)/f(x)=0$, respectively. We denote $f(x) \sim g(x)$ if $g(x) = O(f(x))$ and $f(x) = O(g(x))$.
$\mathbb{R}$ and $\mathbb{R}_0$ denote the set of real number and the set of non-negative real number, respectively. $\mathbb{N}$ and $\mathbb{C}$ denotes the set of positive integers and complex numbers, respectively.
\begin{figure}[t]
	\renewcommand{\captionfont}{\small} \renewcommand{\captionlabelfont}{\small}	
	\centering
	\includegraphics[scale=0.5]{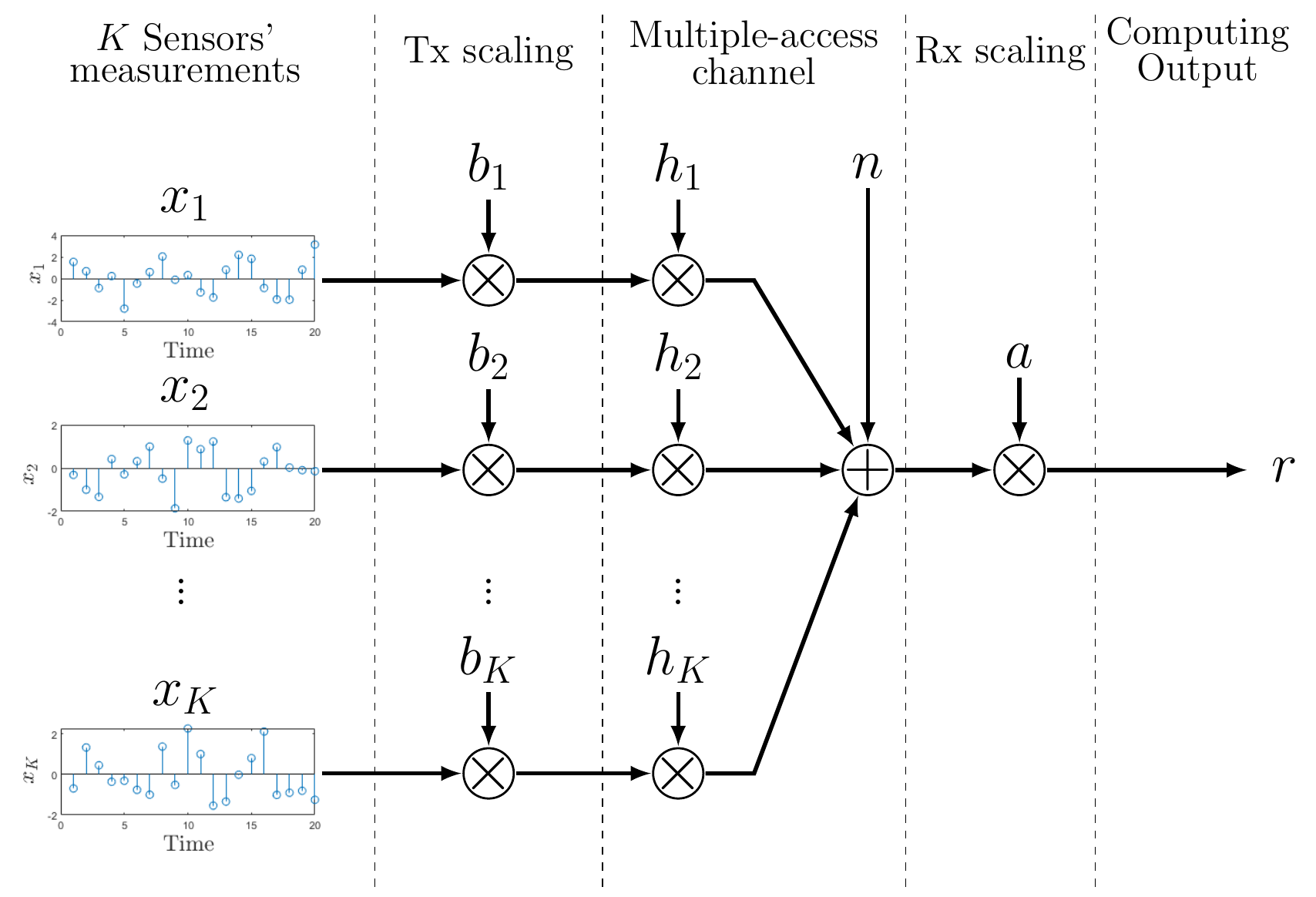}
	\vspace{-0.5cm}
	\caption{Illustration of the AirComp system.}
	\label{fig:system}
	\vspace{-0.7cm}
\end{figure}
\section{System Model}\label{sec:sys}
We consider a $K$-sensor single-antenna AirComp system as illustrated in Fig.~\ref{fig:system}. 
Each sensor's pre-processed signal $x_k\in\mathbb{R},\forall k\in\{1,\cdots,K\}$, is scaled by its Tx-scaling factor $b_k\in\mathbb{C}$ and sent to the receiver simultaneously through a MAC.
The receiver applies a Rx-scaling factor $a$ to the received signal as the computing output of the sum of the $K$ sensors' signals as
\begin{equation} \label{y}
r = a\left(\sum_{k=1}^{K} h_k b_k x_k +n\right),
\end{equation}
{where $h_k\in\mathbb{C}$ is the channel coefficient between sensor $k$ to the receiver and $n$ is the receiver's additive white Gaussian noise (AWGN).
Note that the Rx-scaling factor applied to both the signal and the noise is for the estimation of the computation output $\sum_{k=1}^{K}x_k$, rather than the improvement of signal-to-noise ratio (SNR).}
It is assumed that the channel coefficients are known by both the sensors and the receiver, and the sensors' transmissions are well synchronized~\cite{GoldenbaumWIOPT,GuangxuMIMO,GuangxuWangyong,Gunduz,Osvaldo}.\footnote{The effects of imperfect channel estimation and synchronization have been discussed in \cite{GoldenbaumTCOM,GoldenbaumTSP,Goldenbaumletter}.}
We assume that the pre-processed signal $x_k\in\left[-v,v\right]\subset \mathbb{R}, \forall k\in\{1,\cdots,K\}$, is zero-mean with normalized variance, and is independent with the others~\cite{GuangxuMIMO,Xiaoyang}.

The computation distortion of the ideal sum of the signals $\sum_{k=1}^{K} x_k$ is measured by its MSE~as
\begin{equation}\label{mse}
\mathsf{MSE}=\myexpectation{\lvert r -\sum_{k=1}^{K} x_{k} \rvert^2},
\end{equation}
where the expectation is taken with respect to the randomness of the original signals $\{x_k\}$ and the receiver noise $n$.
Substituting \eqref{y} into \eqref{mse}, the computation MSE of the AirComp system is rewritten as
\begin{equation} \label{MSE}
\mathsf{MSE} = \sum_{k=1}^{K} \vert a h_k b_k -1 \vert^2 + \sigma^2 \vert a \vert^2.
\end{equation}
{Note that the Rx-scaling factor is operated on the digital domain, i.e., the received signal $\left(\sum_{k=1}^{K} h_k b_k x_k +n\right)$ is sampled and quantized before scaling by $a$, and there is no constraint on the Rx-scaling factor.}
Considering a peak power constraint of each sensor's transmission, $P'$, we have
\begin{equation}
\max_{x_k \in \left[-v,v\right]}{\vert b_k x_k\vert^2} =\vert b_k \vert^2 v^2 \leq P' \iff \vert b_k \vert^2 \leq P, \forall k,
\end{equation}
where $P\triangleq P'/v^2$.

Since the pre-processed signal $x_k$ has a normalized variance, the average transmission power of sensor $k$ is $\myexpectation{\vert b_k x_k\vert^2}=\vert b_k \vert^2$. Thus, the power consumption of the $K$-sensor AirComp system~is 
\begin{equation} \label{def_pw}
\pw \triangleq \sum_{k=1}^{K} \vert b_k \vert^2.
\end{equation}

Given the channel coefficients $\{h_k\}$, the MSE-minimization problem in terms of the Tx-Rx scaling policy under the peak power constraint is formulated as
\begin{subequations}\label{problem}
	\begin{alignat}{2}
	&\!\min_{a, \{b_k\}}        &\qquad& \mathsf{MSE}\label{eq:optProb}\\
	&\text{subject to} &      & \vert b_k \vert^2 \leq P, \forall k. \label{eq:constraint1}
	\end{alignat}
\end{subequations}
From the target function definition in \eqref{MSE}, 
given the complex Rx-scaling factor $a$ and the channel coefficient $h_k$, sensor $k$ is always able to adjust the phase of its Tx-scaling factor $b_k$ for phase compensation without changing its magnitude such that the term $ah_kb_k$ is real and non-negative and thus minimizes $\vert ah_kb_k -1 \vert$ in \eqref{MSE}. In this sense, only the magnitudes of $a$, $\{h_k\}$ and $\{b_k\}$ have effect on achieving the minimum $\mse$ in problem~\eqref{problem}. Therefore, without loss of generality, we set $a,h_k\in \mathbb{R}_0,$ and $b_k\in [0,\sqrt{P}],\forall k$, in the rest of the paper.

Problem \eqref{problem} is non-convex as the target function \eqref{eq:optProb} is not a convex function of $a$ and $\{b_k\}$. However, the problem is convex when $a$ or $\{b_k\}$ is fixed. Thus, an intuitive method for solving the problem can be the alternating-direction method, i.e., fixing $\{b_k\}$ that satisfies \eqref{eq:constraint1} and solving the optimal $a$, and then in turn solving the optimal $\{b_k\}$ with the optimal $a$, and so on. However, such the algorithm may not guarantee the convergence to a global optimal solution.

In the following section, we derive the closed-form Tx-Rx scaling policy of problem \eqref{problem}, which is named as the \emph{computation-optimal policy}.

\section{Computation-Optimal AirComp System}\label{sec:opti}
\subsection{Computation-Optimal Policy}\label{subsec:opti}
Without loss of generality, we assume the channel coefficients have the property that $0\triangleq h_0<h_1\leq h_2 \leq \cdots \leq h_K<h_{K+1}\triangleq \infty$.
We then introduce a sequence of $(K+1)$ disjoint intervals $\{\mathcal{S}_k\}$ that covers $\mathbb{R}_0$ as
\begin{equation}\label{sets}
\mathcal{S}_k \triangleq 
\left\lbrace \begin{aligned}
&\left(\frac{1}{h_{1} \sqrt{ P}},\infty\right), k=0,\\
&\left(\frac{1}{h_{k+1} \sqrt{ P}}, \frac{1}{h_{k} \sqrt{P}}\right], k=1,\cdots,K-1,\\
&\left[0, \frac{1}{h_{K} \sqrt{ P}}\right], k=K.
\end{aligned}
\right.
\end{equation}

From \eqref{MSE}, it is clear that if $a\in \mathcal{S}_i$,
where $i$ is named as the \emph{critical number} of the Tx-Rx scaling policy,
$\left\lvert a h_k b_k -1  \right\rvert^2$ monotonically decreases with the increasing of $b_k \in [0,\sqrt{P}], \forall k \in \{1,\cdots,i\}$; while $\left\lvert a h_k b_k -1  \right\rvert^2$ is minimized and equal to zero when $b_k=1/(a h_k) < \sqrt{P}, \forall k \in \{i+1,\cdots,K\}$.
Thus, we have the following result.
\setcounter{section}{1}
\begin{lemma}\label{lem:1}
	\normalfont
	If the Rx-scaling factor $a\in \mathcal{S}_i$, $i=0,1,\cdots,K$, the optimal Tx-scaling factors $\{b_k\}$ are given as
	\begin{equation} \label{optimal_b}
	b_k=
	\left\lbrace
	\begin{aligned}
	&\sqrt{P},&& 1\leq k\leq i \\
	&\frac{1}{a h_{k}},&&  i <k \leq K.
	\end{aligned}
	\right.
	\end{equation}

\end{lemma}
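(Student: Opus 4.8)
The plan is to exploit the fact that, for a fixed Rx-scaling factor $a$, the objective in \eqref{MSE} separates additively across the sensors: the noise term $\sigma^2 a^2$ does not involve $\{b_k\}$, and each remaining summand $\lvert a h_k b_k - 1\rvert^2$ depends only on its own $b_k$. Hence minimizing $\mse$ over $\{b_k\}$ decouples into $K$ independent one-dimensional box-constrained problems, namely minimizing $f_k(b_k) \triangleq (a h_k b_k - 1)^2$ over $b_k \in [0,\sqrt{P}]$, where we have already reduced to the real, non-negative setting established just before the lemma.

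First I would observe that each $f_k$ is an upward parabola in $b_k$ whose unconstrained minimizer is $b_k^\star = 1/(a h_k) \ge 0$, with $f_k$ strictly decreasing on $[0,b_k^\star]$. Consequently the constrained minimizer is the projection of $b_k^\star$ onto $[0,\sqrt{P}]$, that is, $b_k = \min\{\sqrt{P},\, 1/(a h_k)\}$. The threshold deciding which of the two values is selected is whether $1/(a h_k)$ is at least or at most $\sqrt{P}$, equivalently whether $a \le 1/(h_k\sqrt{P})$ or $a \ge 1/(h_k\sqrt{P})$.

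The key step is to turn the hypothesis $a \in \mathcal{S}_i$ into a per-sensor classification, using the ordering $h_1 \le \cdots \le h_K$ so that the thresholds $1/(h_k\sqrt{P})$ are nonincreasing in $k$. For $1 \le k \le i$ I would invoke the upper endpoint of $\mathcal{S}_i$: since $a \le 1/(h_i\sqrt{P}) \le 1/(h_k\sqrt{P})$, we get $1/(a h_k) \ge \sqrt{P}$, so $f_k$ is decreasing throughout $[0,\sqrt{P}]$ and the minimizer saturates at $b_k = \sqrt{P}$. For $i < k \le K$ I would invoke the lower endpoint: since $a > 1/(h_{i+1}\sqrt{P}) \ge 1/(h_k\sqrt{P})$, we get $1/(a h_k) < \sqrt{P}$, so the minimizer is interior, $b_k = 1/(a h_k)$, achieving $f_k = 0$. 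The extreme cases $i=0$ and $i=K$ follow from the same endpoint arguments under the conventions $h_0 \triangleq 0$ and $h_{K+1}\triangleq\infty$, which is precisely why the intervals $\{\mathcal{S}_k\}$ in \eqref{sets} were defined with those endpoints.

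The only genuine care required is the bookkeeping of strict versus non-strict inequalities at the interval boundaries, so that the classification matches the half-open structure of $\mathcal{S}_i$ exactly; the monotonicity of $\{h_k\}$ does all the remaining work. The lone degenerate point is $a=0$ (admissible when $i=K$), where $f_k \equiv 1$ is constant in $b_k$ and every feasible value is optimal, so the stated choice $b_k=\sqrt{P}$ remains a valid minimizer. I therefore do not anticipate any substantive obstacle beyond this indexing care.
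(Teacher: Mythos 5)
Your proposal is correct and follows essentially the same route as the paper: the paper's (inline) justification of Lemma~\ref{lem:1} is precisely the per-sensor decoupling of \eqref{MSE} for fixed $a$, with $\lvert a h_k b_k - 1\rvert^2$ monotonically decreasing on $[0,\sqrt{P}]$ when $k\le i$ (hence $b_k=\sqrt{P}$) and attaining zero at the feasible point $b_k = 1/(a h_k) < \sqrt{P}$ when $k>i$. Your additional bookkeeping on the half-open interval endpoints and the degenerate case $a=0$ only makes explicit what the paper leaves implicit.
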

\begin{remark}
Lemma~\ref{lem:1} shows that the critical number $i$ indicates the number of sensors using the maximum power for transmission.
Also, the Tx-scaling factors of the computation-optimal policy has a switching structure, i.e., $i$ sensors with the smallest channel coefficients have to use the maximum power for transmission, while the power consumption of any of the rest $(K-i)$ sensors is of a channel-inversion type.
Also, more sensors have to use the maximum power for transmission if the Rx-scaling factor is small.
\end{remark}

Using Lemma~\ref{lem:1}, 
since $ \sum_{k=i+1}^{K} \vert a h_k b_k -1 \vert^2 =0$, the target function of problem \eqref{problem} can be rewritten as
\begin{equation}\label{target_temp}
	\mse = \sum_{k=1}^{i} \left\lvert a h_k \sqrt{P} -1  \right\rvert^2  + \sigma^2 \vert a \vert^2, a\in \mathcal{S}_i,	
\end{equation}
which is a quadratic function of the Rx-scaling factor $a$. Then, we can directly obtain the following result.
\begin{lemma} \label{lem:2}
	\normalfont	
{Given the constraint that the Rx-scaling factor belongs to $\mathcal{S}_i$, $i=0,1,\cdots,K$, the optimal Rx-scaling factor $a_i \in \mathcal{S}_i$ is given as}
	\begin{equation}\label{opti_a}
	a_i \triangleq \left\lbrace 
	\begin{aligned}
	&\left(\frac{1}{h_{i+1} \sqrt{P}}\right)^+,&&\text{if } g_i < \mathcal{S}_i\\
	&g_i,&&\text{if } g_i \in \mathcal{S}_i\\
	&\frac{1}{h_{i} \sqrt{P}},&&\text{if } g_i > \mathcal{S}_i
	\end{aligned}
	\right.
	\end{equation}
	where the operator $(u)^+$ indicates approaching to the real number $u$ from right side, 
	and with a bit abuse of notation, $g_i > \mathcal{S}_i$ and $g_i < \mathcal{S}_i$ denote $g_i > \frac{1}{h_{i} \sqrt{P}}$ and $g_i \leq \frac{1}{h_{i+1} \sqrt{P}}$, respectively,		
	where $g_i$ is the optimal solution of \eqref{target_temp} without the constraint $a\in\mathcal{S}_i$ and is given as
	\begin{equation} \label{g_i}
	g_i \triangleq \begin{cases}
	0,&i=0\\
	\frac{\sqrt{P} \sum_{k=1}^{i}h_k}{\sigma^2 + P \sum_{k=1}^{i}h^2_k},&i=1,\cdots,K.
	\end{cases}
	\end{equation}

\end{lemma}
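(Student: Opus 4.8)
The plan is to exploit the fact that Lemma~\ref{lem:1} has already eliminated the $\{b_k\}$ and collapsed the objective on $a\in\mathcal{S}_i$ to the single-variable form \eqref{target_temp}, so that the lemma reduces to minimizing a one-dimensional quadratic over the interval $\mathcal{S}_i$. First I would expand \eqref{target_temp} (treating $a\in\mathbb{R}_0$ per the earlier phase-compensation reduction) as
\begin{equation}
\mse(a) = \left(\sigma^2 + P\sum_{k=1}^{i} h_k^2\right) a^2 - 2\sqrt{P}\left(\sum_{k=1}^{i} h_k\right) a + i .
\end{equation}
For $i\geq 1$ the leading coefficient $\sigma^2 + P\sum_{k=1}^{i}h_k^2$ is strictly positive, so this is a strictly convex upward parabola; the $i=0$ case is the empty-sum specialization $\sigma^2 a^2$. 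Setting the derivative to zero gives the unique unconstrained minimizer $a = \sqrt{P}\sum_{k=1}^i h_k/(\sigma^2 + P\sum_{k=1}^i h_k^2) = g_i$ of \eqref{g_i}, which for $i=0$ correctly yields $g_0=0$.

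Second, since $\mse(a)$ is convex with a single unrestricted minimizer $g_i$, minimizing over $\mathcal{S}_i$ is exactly the projection of $g_i$ onto $\mathcal{S}_i$. If $g_i$ already lies in $\mathcal{S}_i$, the constrained minimizer is $g_i$ itself. Otherwise convexity makes $\mse(a)$ strictly monotone on $\mathcal{S}_i$ — decreasing toward $g_i$ when $g_i$ lies to the right of the interval and increasing away from $g_i$ when it lies to the left — so the minimizer is forced onto whichever endpoint of $\mathcal{S}_i$ is nearest $g_i$. This monotonicity-on-either-side-of-the-vertex statement is the entire engine of the two outlying cases in \eqref{opti_a}.

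The only genuine care is at the endpoints, because for $1\leq i\leq K-1$ the set $\mathcal{S}_i=(1/(h_{i+1}\sqrt{P}),\,1/(h_i\sqrt{P})]$ is open on the left and closed on the right (and $h_{i+1}\geq h_i$ makes $1/(h_{i+1}\sqrt{P})$ the smaller endpoint). When $g_i>\mathcal{S}_i$ the nearest point is the right endpoint $1/(h_i\sqrt{P})$, which \emph{belongs} to $\mathcal{S}_i$, so the minimum is attained there; when $g_i<\mathcal{S}_i$ the nearest point is the left endpoint $1/(h_{i+1}\sqrt{P})$, which is \emph{not} in $\mathcal{S}_i$, so the infimum is only approached from the right — precisely the meaning of the $(1/(h_{i+1}\sqrt{P}))^+$ in \eqref{opti_a}. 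I would then verify the two boundary indices: for $i=0$, the objective $\sigma^2 a^2$ is increasing on the open unbounded $\mathcal{S}_0=(1/(h_1\sqrt{P}),\infty)$, so $g_0=0<\mathcal{S}_0$ gives the clamped answer $(1/(h_1\sqrt{P}))^+$; for $i=K$, the closed set $\mathcal{S}_K=[0,1/(h_K\sqrt{P})]$ contains its endpoints and $g_K>0$ rules out the left case entirely. I expect this endpoint bookkeeping — aligning the open/closed structure of $\mathcal{S}_i$ with the $(\cdot)^+$ convention and checking the two edge indices — to be the main (though modest) obstacle, since the quadratic minimization itself is routine.
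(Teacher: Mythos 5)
Your proposal is correct and follows exactly the route the paper intends: the paper gives no explicit proof, stating only that \eqref{target_temp} is a quadratic in $a$ so the result is obtained ``directly,'' and your argument---unconstrained vertex $g_i$, projection of a strictly convex parabola onto the interval $\mathcal{S}_i$, and the endpoint/open-set bookkeeping behind the $(\cdot)^+$ convention---is precisely the standard argument being invoked. Your explicit treatment of the half-open intervals and the edge indices $i=0$ and $i=K$ fills in detail the paper leaves implicit.
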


Based on Lemma~\ref{lem:2}, if $g_i\notin \mathcal{S}_i$, i.e., the first and the last cases in \eqref{opti_a}, the following property reveals how to find a better Rx-scaling factor $a$ leading to a smaller $\mse$.
\begin{lemma}\label{lem:3}
	\normalfont
	If $g_i < \mathcal{S}_i$ and $i<K$, there exists $a\in \mathcal{S}_{i+1}$ that achieves a smaller $\mse$ than $a_i$.
	If $g_i>\mathcal{S}_i$ and $i>0$, there exists $a\in \mathcal{S}_{i-1}$ that achieves a smaller $\mse$ than $a_i$.
\end{lemma}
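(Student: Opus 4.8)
The plan is to treat the MSE, after substituting the $a$-dependent optimal Tx-factors of Lemma~\ref{lem:1}, as a piecewise-quadratic function of the single scalar $a$ over the partition $\{\mathcal{S}_i\}$, and to exploit the way adjacent pieces are glued at the shared endpoints. On $\mathcal{S}_i$ we may write, via \eqref{target_temp}, $\mse_i(a)=\sum_{k=1}^{i}(a h_k\sqrt{P}-1)^2+\sigma^2 a^2$, which is an upward parabola whose vertex is exactly $g_i$, so that $\mse_i'(a)=2(\sigma^2+P\sum_{k=1}^{i}h_k^2)(a-g_i)$ with a strictly positive leading factor. First I would record the single algebraic fact driving everything: the neighbouring pieces $\mse_{i+1}$ and $\mse_i$ differ only by the term $(a h_{i+1}\sqrt{P}-1)^2$, which vanishes at the common endpoint $a^\star\triangleq 1/(h_{i+1}\sqrt{P})$ separating $\mathcal{S}_{i+1}$ and $\mathcal{S}_i$. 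Consequently both the value and the first derivative of the glued MSE are continuous there, i.e. $\mse_{i+1}(a^\star)=\mse_i(a^\star)$ and $\mse_{i+1}'(a^\star)=\mse_i'(a^\star)$; the analogous identities hold for $\mse_i,\mse_{i-1}$ at $1/(h_i\sqrt{P})$.

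For the first claim, $g_i<\mathcal{S}_i$ means $g_i\le a^\star$, where $a^\star$ is the (open) left endpoint of $\mathcal{S}_i$ and simultaneously the (closed) right endpoint of $\mathcal{S}_{i+1}$; by Lemma~\ref{lem:2}, $a_i=(a^\star)^+$, so the MSE of $a_i$ is the value $\mse_i(a^\star)$ that is merely approached, not attained, in $\mathcal{S}_i$. The $C^1$-gluing gives $\mse_{i+1}'(a^\star)=\mse_i'(a^\star)=2(\sigma^2+P\sum_{k=1}^{i}h_k^2)(a^\star-g_i)\ge 0$. Since $\mathcal{S}_{i+1}$ lies to the left of $a^\star$, taking $a$ slightly below $a^\star$ inside $\mathcal{S}_{i+1}$ strictly decreases $\mse_{i+1}$ below $\mse_{i+1}(a^\star)=\mse_i(a^\star)$ whenever the derivative is strictly positive, i.e. whenever $g_i<a^\star$ strictly, producing the required witness. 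The second claim is the mirror image: $g_i>\mathcal{S}_i$ means $g_i>a^{\star\star}\triangleq 1/(h_i\sqrt{P})$ strictly, $a_i=a^{\star\star}$, and the gluing gives $\mse_{i-1}'(a^{\star\star})=\mse_i'(a^{\star\star})=2(\sigma^2+P\sum_{k=1}^{i}h_k^2)(a^{\star\star}-g_i)<0$; as $\mathcal{S}_{i-1}$ lies to the right of $a^{\star\star}$, moving $a$ slightly above $a^{\star\star}$ strictly lowers the MSE and supplies the witness.

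The step I expect to be the main obstacle is the borderline subcase of the first claim, $g_i=a^\star$, where the boundary derivative is zero and the naive ``move left'' argument yields no strict decrease. I would handle it by a short direct computation: substituting the relation $P h_{i+1}\sum_{k=1}^{i}h_k=\sigma^2+P\sum_{k=1}^{i}h_k^2$ encoded by $g_i=a^\star$ into \eqref{g_i} collapses $g_{i+1}$ to $g_{i+1}=1/(h_{i+1}\sqrt{P})=a^\star$, so $g_{i+1}\in\mathcal{S}_{i+1}$ and the minimum over $\mathcal{S}_{i+1}$ is \emph{attained} at $a^\star$ with $\mse_{i+1}(a^\star)=\mse_i(a^\star)$; this already improves on $a_i$, whose value at the open endpoint is only approached. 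Beyond this, the remaining care is purely bookkeeping: keeping track of which endpoints are open versus closed so that the chosen witness genuinely lies in the target interval, and confirming that the leading coefficient $\sigma^2+P\sum_{k=1}^{i}h_k^2$ is positive so the vertex-based sign analysis is valid.
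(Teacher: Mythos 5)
Your proof is correct and follows essentially the same route as the paper's: both arguments view the objective as a piecewise quadratic in $a$ glued at the endpoints $1/(h_k\sqrt{P})$ and deduce the improvement from the position of the vertex $g_i$ relative to $\mathcal{S}_i$ --- your derivative/$C^1$-gluing phrasing corresponds to the paper's appeal to continuity (first case) and to the quadratic's decrease-then-increase monotonicity together with dropping the nonnegative term $(a h_i\sqrt{P}-1)^2$ (second case). Your explicit treatment of the borderline subcase $g_i = 1/(h_{i+1}\sqrt{P})$, where you show $g_{i+1}$ collapses to that same point so the minimum over $\mathcal{S}_{i+1}$ is attained there, is a careful refinement of what the paper compresses into ``continuity,'' not a different method.
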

\begin{proof}
	For the first case, it is not hard to see $a=\frac{1}{h_{i+1}\sqrt{P}}\in \mathcal{S}_{i+1}$ leads to a smaller $\mse$ due to the continuity of the target function \eqref{target_temp}. For the second case,  assuming that $g_i\in \mathcal{S}_{i'}$ and $i'<i$, since the quadratic function \eqref{target_temp} monotonically decreases and then increases, there exists $a'\in \mathcal{S}_{i-1}$ such that 
	\begin{equation} \label{ineq}
\begin{aligned}
\sum_{k=1}^{i-1} \left\lvert a' h_k \sqrt{P} -1  \right\rvert^2  + \sigma^2 \vert a' \vert^2 &\leq \sum_{k=1}^{i} \left\lvert a' h_k \sqrt{P} -1  \right\rvert^2  + \sigma^2 \vert a' \vert^2 
&< \sum_{k=1}^{i} \left\lvert a h_k \sqrt{P} -1  \right\rvert^2  + \sigma^2 \vert a \vert^2,
\end{aligned}
	\end{equation}
	where the first term in \eqref{ineq} is the minimum $\mse$ achieved by $a'\in \mathcal{S}_{{i-1}}$ from Lemma~\ref{lem:1}, completing the proof.
\end{proof}
Jointly using Lemma~\ref{lem:2} and Lemma~\ref{lem:3}, it can be obtained that
\begin{corollary} \label{cor:1}
	\normalfont
	The optimal Rx-scaling factor $a \notin \mathcal{S}_0$, i.e., in order to achieve the minimum $\mse$, at least one sensor needs to transmit with the maximum power.
\end{corollary}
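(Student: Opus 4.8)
The plan is to show that the best MSE attainable over $\mathcal{S}_0$ is strictly dominated by a value already available in $\mathcal{S}_1$, so the global minimizer cannot have critical number $i=0$. Since, by Lemma~\ref{lem:1}, $a\in\mathcal{S}_0$ is exactly the regime in which every sensor uses a channel-inversion factor $b_k=1/(ah_k)$ and \emph{no} sensor transmits at peak power, excluding $\mathcal{S}_0$ is precisely the assertion of the corollary that at least one sensor must use the maximum power.

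First I would specialize the unconstrained optimizer $g_i$ of \eqref{target_temp} to the case $i=0$. From \eqref{g_i} this gives $g_0=0$, which is also transparent directly: for $i=0$ the sum $\sum_{k=1}^{0}\lvert a h_k\sqrt{P}-1\rvert^2$ is empty, so the objective on $\mathcal{S}_0$ reduces to $\mse=\sigma^2 a^2$, a quadratic with vertex at $a=0$. Because $\mathcal{S}_0=(1/(h_1\sqrt{P}),\infty)$ and $0<1/(h_1\sqrt{P})$, the vertex $g_0$ lies to the left of $\mathcal{S}_0$; in the notation of Lemma~\ref{lem:2} this is exactly the first case $g_0<\mathcal{S}_0$, whence the best Rx-scaling factor inside $\mathcal{S}_0$ is the boundary limit $a_0=(1/(h_1\sqrt{P}))^+$.

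Next I would invoke the first branch of Lemma~\ref{lem:3}, whose hypotheses are met: we have $g_0<\mathcal{S}_0$, and since there is at least one sensor, $i=0<K$. Lemma~\ref{lem:3} then produces a point $a\in\mathcal{S}_1$ whose MSE is smaller than that achieved at $a_0$. Because Lemma~\ref{lem:2} certifies that $a_0$ is the optimal Rx-scaling factor over the \emph{whole} interval $\mathcal{S}_0$, no $a\in\mathcal{S}_0$ can be a global minimizer. Hence the optimal $a\notin\mathcal{S}_0$, so by Lemma~\ref{lem:1} the critical number obeys $i\geq 1$ and at least the weakest-channel sensor transmits at peak power, which is the claim.

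I do not expect a genuine obstacle here; the result falls out once $g_0=0$ is identified and the first branch of Lemma~\ref{lem:3} is applied. The only point that demands care is the half-open nature of $\mathcal{S}_0$ together with the $(\cdot)^+$ notation: $a_0$ is an infimum approached from the right rather than a value actually attained in $\mathcal{S}_0$, so I would make explicit that the comparison in Lemma~\ref{lem:3} is between the attainable value in $\mathcal{S}_1$ and this infimum, and that the endpoint $1/(h_1\sqrt{P})$ belongs to $\mathcal{S}_1$ (not to $\mathcal{S}_0$), making the comparison legitimate. I would also record the trivial edge check $K\geq 1$ so that $\mathcal{S}_1$ is nonempty.
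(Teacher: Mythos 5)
Your proposal is correct and follows essentially the same route as the paper, which derives the corollary by jointly applying Lemma~\ref{lem:2} (identifying $g_0=0<\mathcal{S}_0$, so the best value over $\mathcal{S}_0$ is only the boundary limit $a_0=(1/(h_1\sqrt{P}))^+$) and the first branch of Lemma~\ref{lem:3} (producing a strictly better $a\in\mathcal{S}_1$). Your added care about the unattained infimum on the half-open interval $\mathcal{S}_0$ and the endpoint belonging to $\mathcal{S}_1$ is a faithful elaboration of the same argument, not a different one.
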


Based on Lemma~\ref{lem:1} and Lemma~\ref{lem:2}, to find the optimal Rx-scaling factor $a^\star$, we only need to know the optimal critical number $i^\star$, i.e., $a^\star =a_{i^\star} \in \mathcal{S}_{i^\star}$, which is the most important part for solving the problem. Thus, problem \eqref{problem} is reformulated as 
	\begin{align} \label{problem2}
	&\!\min_{1 \leq i \leq K} \qquad \mse_i \triangleq \sum_{k=1}^{i} \left\lvert a_i h_k \sqrt{P} -1  \right\rvert^2  + \sigma^2 \vert a_i \vert^2\\ \label{test2}
	&\text{subject to} \quad \eqref{opti_a}.
	\end{align}


From Lemma~\ref{lem:2}, since $a_i$ depends on $g_i,\forall i \in \{1,\cdots,K\}$, the optimal $a$, $a^\star 
\in \{a_1,\cdots,a_K\}$, depends on the sequence $\{g_i\}$. In what follows, we introduce the technical lemmas of the properties of $\{g_i\}$, which will be utilized for finding $a^\star$.

\setcounter{section}{2}
\setcounter{lemma}{0}
\begin{lemma}[Switching structure] \label{lem:2.1}
	\normalfont
If $g_i \in \mathcal{S}_i$, then
\begin{equation}\label{g_results}
\left\lbrace \begin{aligned}
&g_{i+1} > \mathcal{S}_{i+1}, &&\text{ if } i<K,\\
&g_{i-1} < \mathcal{S}_{i-1}, &&\text{ if } i>0.\\
\end{aligned}
\right.
\end{equation}
\end{lemma}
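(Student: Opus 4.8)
The plan is to reduce each membership statement to a single polynomial inequality by clearing denominators, and then to recognize that the two desired conclusions are \emph{exactly} the two halves of the hypothesis $g_i\in\mathcal{S}_i$. To keep the algebra transparent, I would first abbreviate $A_i\triangleq\sum_{k=1}^{i}h_k$ and $B_i\triangleq\sum_{k=1}^{i}h_k^2$, so that $g_i=\sqrt{P}A_i/(\sigma^2+PB_i)$ for $i\geq 1$ and $g_0=0$. Since $\sigma^2>0$, $P>0$ and $B_i\geq 0$, every denominator appearing below is strictly positive, so each cross-multiplication preserves the direction of the inequality; this is the one bookkeeping point I would verify before manipulating anything.

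The hypothesis $g_i\in\mathcal{S}_i$ unpacks (for $1\leq i\leq K-1$) into the strict lower bound $g_i>1/(h_{i+1}\sqrt{P})$ together with the non-strict upper bound $g_i\leq 1/(h_i\sqrt{P})$. Clearing denominators turns these into
\begin{equation}
Ph_{i+1}A_i>\sigma^2+PB_i
\qquad\text{and}\qquad
Ph_{i}A_i\leq\sigma^2+PB_i,
\end{equation}
which I would keep as the two facts available to me. For the forward claim I would then write $A_{i+1}=A_i+h_{i+1}$ and $B_{i+1}=B_i+h_{i+1}^2$ and test $g_{i+1}>1/(h_{i+1}\sqrt{P})$ directly: after cross-multiplying, the claim becomes $Ph_{i+1}(A_i+h_{i+1})>\sigma^2+P(B_i+h_{i+1}^2)$, and the term $Ph_{i+1}^2$ cancels on both sides, leaving precisely $Ph_{i+1}A_i>\sigma^2+PB_i$. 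This is the lower-bound fact above, so $g_{i+1}>\mathcal{S}_{i+1}$ follows. The backward claim is the mirror image: with $A_{i-1}=A_i-h_i$ and $B_{i-1}=B_i-h_i^2$, the inequality $g_{i-1}\leq 1/(h_i\sqrt{P})$ reduces after the cancellation of $Ph_i^2$ to $Ph_iA_i\leq\sigma^2+PB_i$, which is the upper-bound fact; when $i=1$ this degenerates to $g_0=0\leq 1/(h_1\sqrt{P})$, which holds trivially.

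Since no estimation or limiting argument is involved, there is no genuinely hard analytic step; the entire content is the cancellation of the $Ph_{i+1}^2$ (respectively $Ph_i^2$) term, which shows that crossing the threshold $1/(h_{i+1}\sqrt{P})$ is insensitive to whether the index-$(i+1)$ sensor is appended to the sum. The only place I expect to need care is the boundary conventions: matching the \emph{strict} inequality at the left endpoint of $\mathcal{S}_i$ to the strict conclusion $g_{i+1}>\mathcal{S}_{i+1}$, and the \emph{non-strict} inequality at the right endpoint to $g_{i-1}<\mathcal{S}_{i-1}$ (which, per the convention fixed in Lemma~\ref{lem:2}, means $g_{i-1}\leq 1/(h_i\sqrt{P})$). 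Keeping these $<$ versus $\leq$ matched with the open/closed ends of the intervals $\{\mathcal{S}_k\}$ is the only way the argument can go astray, so I would state the endpoint conventions explicitly at the outset and then let the two cancellations close each case.
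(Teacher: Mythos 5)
Your proposal is correct and is essentially the paper's own argument: the paper proves this lemma in one line by invoking the property $\frac{a}{b}\geq \frac{c}{d} \iff \frac{a+c}{b+d} \geq \frac{c}{d}$ for $a,b,c,d>0$ applied to the definition of $g_i$, and your cross-multiplication with cancellation of $Ph_{i+1}^2$ (resp.\ $Ph_i^2$) is precisely the proof of that mediant-type property, with the appended term's ratio $\sqrt{P}h_{i+1}/(Ph_{i+1}^2)$ equal to the threshold $1/(h_{i+1}\sqrt{P})$. Your explicit handling of the strict/non-strict endpoint conventions matches the paper's notational convention from Lemma~\ref{lem:2}, so no gap remains.
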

\begin{proof}
\eqref{g_results} can be verified by using the definition of $g_i$ in \eqref{g_i} and the property that $\frac{a}{b}\geq \frac{c}{d} \iff \frac{a+c}{b+d} \geq \frac{c}{d}$, where $a,b,c,d>0$. The following lemmas can be verified using the similar steps, and the proofs are omitted for brevity.
\end{proof}

\begin{lemma}[Consistency]\label{lem:2.2}
	\normalfont	
	If $g_i < \mathcal{S}_i$ and $i>0$, then
$g_{i-1} < \mathcal{S}_{i-1}$.	
	If $g_i > \mathcal{S}_i$ and $i<K$, then
	$g_{i+1} > \mathcal{S}_{i+1}$.
\end{lemma}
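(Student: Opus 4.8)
The plan is to reuse the mediant identity from the proof of Lemma~\ref{lem:2.1}, namely $\frac{a}{b}\ge\frac{c}{d}\iff\frac{a+c}{b+d}\ge\frac{c}{d}$ for $a,b,c,d>0$, together with its strict counterpart (equality on one side holds exactly when $\frac{a}{b}=\frac{c}{d}$, so $>$ transfers to $>$ and therefore $\le$ transfers to $\le$). The observation that makes this work is that each $g_{i+1}$ is the \emph{mediant} of $g_i$ and the incremental single-term ratio $\frac{\sqrt{P}h_{i+1}}{Ph_{i+1}^2}$, and that this ratio simplifies to exactly $\frac{1}{h_{i+1}\sqrt{P}}$, i.e.\ precisely the lower endpoint of $\mathcal{S}_i$ appearing in the definitions of $g_i<\mathcal{S}_i$ and $g_{i+1}>\mathcal{S}_{i+1}$.

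Concretely, I would set $a=\sqrt{P}\sum_{k=1}^{i}h_k$ and $b=\sigma^2+P\sum_{k=1}^{i}h_k^2$ so that $g_i=a/b$, and $c=\sqrt{P}h_{i+1}$, $d=Ph_{i+1}^2$ so that $g_{i+1}=(a+c)/(b+d)$ with $c/d=\frac{1}{h_{i+1}\sqrt{P}}$. The identity then gives the equivalence $g_{i+1}>\frac{1}{h_{i+1}\sqrt{P}}\iff g_i>\frac{1}{h_{i+1}\sqrt{P}}$; applying the same construction with the index shifted down by one yields $g_{i-1}\le\frac{1}{h_i\sqrt{P}}\iff g_i\le\frac{1}{h_i\sqrt{P}}$. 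These two equivalences are the whole engine of the argument.

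For the second assertion I would assume $g_i>\mathcal{S}_i$, i.e.\ $g_i>\frac{1}{h_i\sqrt{P}}$; the channel ordering $h_i\le h_{i+1}$ gives $\frac{1}{h_i\sqrt{P}}\ge\frac{1}{h_{i+1}\sqrt{P}}$, hence $g_i>\frac{1}{h_{i+1}\sqrt{P}}$, and the first equivalence delivers $g_{i+1}>\frac{1}{h_{i+1}\sqrt{P}}$, which is exactly $g_{i+1}>\mathcal{S}_{i+1}$. For the first assertion I would assume $g_i<\mathcal{S}_i$, i.e.\ $g_i\le\frac{1}{h_{i+1}\sqrt{P}}\le\frac{1}{h_i\sqrt{P}}$, and the second equivalence delivers $g_{i-1}\le\frac{1}{h_i\sqrt{P}}$, i.e.\ $g_{i-1}<\mathcal{S}_{i-1}$.

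The computation is routine, so I expect the main obstacle to be conceptual bookkeeping rather than any hard estimate: (i) recognizing the mediant structure and, in particular, that the incremental ratio lands exactly on the interval endpoint, without which the inequalities look opaque; and (ii) keeping the strict-versus-weak inequalities aligned with the conventions ($g_i<\mathcal{S}_i$ is the weak inequality $g_i\le\frac{1}{h_{i+1}\sqrt{P}}$, whereas $g_i>\mathcal{S}_i$ is strict). I would also verify the two boundary indices: for the first assertion with $i=1$ the claim $g_0<\mathcal{S}_0$ holds trivially since $g_0=0$ (so the all-positive hypothesis of the identity is not even invoked), and for the second assertion the index $i=0$ is vacuous since $g_0=0$ never exceeds $\frac{1}{h_1\sqrt{P}}$, so only $i\ge1$ with strictly positive $a,b,c,d$ ever arises.
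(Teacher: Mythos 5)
Your proof is correct and takes essentially the same approach the paper intends: the paper omits this proof, stating only that it follows by ``similar steps'' to the switching-structure lemma, namely via the mediant identity $\frac{a}{b}\ge\frac{c}{d}\iff\frac{a+c}{b+d}\ge\frac{c}{d}$ for $a,b,c,d>0$, which is exactly the engine of your argument. Your write-up correctly supplies the bookkeeping the paper leaves out --- recognizing $g_{i+1}$ (resp.\ $g_i$) as the mediant of $g_i$ (resp.\ $g_{i-1}$) and the incremental ratio $\frac{1}{h_{i+1}\sqrt{P}}$ (resp.\ $\frac{1}{h_i\sqrt{P}}$), respecting the weak-versus-strict conventions for $g_i<\mathcal{S}_i$ and $g_i>\mathcal{S}_i$, and checking the boundary indices $i=1$ and $i=0$.
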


\begin{lemma}[Monotonicity]\label{lem:2.3}
	\normalfont	
	If $g_i < \mathcal{S}_i$ and $i<K$, then
	$g_{i} \leq  g_{i+1}$.
	If $g_i > \mathcal{S}_i$ and $i>0$, then
	$g_{i} \leq g_{i-1}$.
\end{lemma}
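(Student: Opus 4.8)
The plan is to work directly from the closed-form expression for $g_i$ in \eqref{g_i}, exactly as the proof of Lemma~\ref{lem:2.1} suggests, and to recognize each consecutive $g_i$ as a \emph{mediant} of its neighbour and a simple channel-dependent fraction. Writing $A_i \triangleq \sum_{k=1}^{i} h_k$ and $B_i \triangleq \sigma^2 + P\sum_{k=1}^{i} h_k^2$ so that $g_i = \sqrt{P}\,A_i / B_i$, the single algebraic fact I rely on is the mediant inequality already invoked in Lemma~\ref{lem:2.1}: for $a,b,c,d>0$ the fraction $\frac{a+c}{b+d}$ always lies between $\frac{a}{b}$ and $\frac{c}{d}$, and in particular sits on the same side of $\frac{c}{d}$ as $\frac{a}{b}$ does. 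The whole proof is then a matter of reading off which fraction each $g_i$ is the mediant of, and matching the hypothesis $g_i \lessgtr \mathcal{S}_i$ to the correct endpoint.

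For the first statement I would observe that
\[
g_{i+1} = \frac{\sqrt{P}\,A_i + \sqrt{P}\,h_{i+1}}{B_i + P h_{i+1}^2}
\]
is precisely the mediant of $g_i = \sqrt{P}\,A_i/B_i$ and $\frac{\sqrt{P}\,h_{i+1}}{P h_{i+1}^2} = \frac{1}{h_{i+1}\sqrt{P}}$. Hence $g_{i+1}$ lies between $g_i$ and $\frac{1}{h_{i+1}\sqrt{P}}$. Since the hypothesis $g_i < \mathcal{S}_i$ means exactly $g_i \leq \frac{1}{h_{i+1}\sqrt{P}}$, the term $g_i$ is the smaller bracketing value, so the mediant cannot fall below it, giving $g_i \leq g_{i+1}$ as required.

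For the second statement I would use the same splitting one index lower, namely that
\[
g_{i} = \frac{\sqrt{P}\,A_{i-1} + \sqrt{P}\,h_{i}}{B_{i-1} + P h_{i}^2}
\]
is the mediant of $g_{i-1}$ and $\frac{1}{h_{i}\sqrt{P}}$, so $g_i$ lies between these two. The hypothesis $g_i > \mathcal{S}_i$ means $g_i > \frac{1}{h_{i}\sqrt{P}}$; since a mediant cannot strictly exceed the larger of its two bracketing values, this forces $g_{i-1}$ to be that larger value, i.e.\ $g_{i} \leq g_{i-1}$. The one point needing a little care is precisely this direction-of-inequality step: I must note that $g_i$ sitting strictly above the fixed endpoint $\frac{1}{h_i\sqrt{P}}$ is possible only when the other endpoint $g_{i-1}$ dominates it. The degenerate subcase $i=1$ (where $A_0 = 0$) is harmless, since there $g_1 > \mathcal{S}_1$ cannot hold at all and the claim is vacuous. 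I expect no genuine obstacle beyond this bookkeeping; the entire lemma is a two-line consequence of the mediant inequality once the right mediant decomposition is identified.
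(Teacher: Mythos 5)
Your proof is correct and takes essentially the same route as the paper: the paper omits the details of this lemma but states it "can be verified using the similar steps" as Lemma~2a, namely from the definition of $g_i$ in \eqref{g_i} together with the fraction property $\frac{a}{b}\geq \frac{c}{d} \iff \frac{a+c}{b+d} \geq \frac{c}{d}$, which is exactly the mediant inequality you invoke. Your write-up merely supplies the bookkeeping the paper leaves implicit, including the correct identification of $g_{i\pm 1}$ as mediant endpoints and the observation that the $i=1$ case of the second claim is vacuous.
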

\setcounter{section}{3}

Jointly using Lemma~\ref{lem:2.1}, Lemma~\ref{lem:2.2} and Lemma~\ref{lem:2.3}, it  shows the unimodality of the sequence $\{g_i\}$, i.e., there exists $i^\star$ such that $g_i$ monotonically increases and decreases with $i$ for all $i\leq i^\star$ and $i\geq i^\star$, respectively.
Then, using Lemma~\ref{lem:2.1} and Lemma~\ref{lem:2.2} together with Lemma~\ref{lem:3}, the unimodality of the sequence $\{-\mse_i\}$ in \eqref{problem2} can be easily verified,
as $\mse_i$ monotonically decreases and increases with $i$ for $i\leq i^\star$ and $i\geq i^\star$, respectively.
Therefore, the optimal Tx-Rx scaling policy is given below.
\begin{theorem}[Computation-optimal policy]\label{theory:optimal}
	\normalfont
The optimal critical number of problem \eqref{problem} is
\begin{equation} \label{opt_i}
i^\star = \arg\max_{1 \leq i \leq K} g_i.
\end{equation}
The optimal Rx-scaling factor $a^\star$ is $a_{i^\star}$ by taking $i^\star$ into \eqref{opti_a}.
The optimal Tx-scaling factors $\{b_k^\star\}$ are obtained by taking $i^\star$ into \eqref{optimal_b}.
\end{theorem}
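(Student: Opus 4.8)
The plan is to reduce the theorem to the reformulated problem \eqref{problem2} and to prove that $\mse_i$ is minimized at exactly the index where the sequence $\{g_i\}$ attains its maximum. By Corollary~\ref{cor:1} the optimal critical number lies in $\{1,\dots,K\}$, so it suffices to locate the minimizer of the finite sequence $\{\mse_i\}_{i=1}^{K}$ and to identify it with $\arg\max_i g_i$. Since Lemmas~\ref{lem:1} and~\ref{lem:2} already reduce the original joint optimization over $(a,\{b_k\})$ to this one-dimensional search over the critical number, the entire burden is in pinning down this single index.

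First I would fix the boundary behaviour of $\{g_i\}$ relative to the intervals $\{\mathcal{S}_i\}$ of \eqref{sets}. Because $g_0=0\le 1/(h_1\sqrt{P})$ we have $g_0<\mathcal{S}_0$, and because $g_K>0$ while $g_K<\mathcal{S}_K$ would require $g_K\le 1/(h_{K+1}\sqrt{P})=0$, the relation $g_K<\mathcal{S}_K$ is impossible. Thus the sequence starts strictly below its interval and ends at or above it, so a transition from the ``$<\mathcal{S}_i$'' regime to the ``$>\mathcal{S}_i$'' regime must occur somewhere.

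The crux is to show that this transition happens at a single index $i^\star$ with $g_{i^\star}\in\mathcal{S}_{i^\star}$, i.e.\ that the sequence cannot jump directly from $g_{i-1}<\mathcal{S}_{i-1}$ to $g_{i}>\mathcal{S}_{i}$. I would argue this by contradiction through Lemma~\ref{lem:3}: the relation $g_{i-1}<\mathcal{S}_{i-1}$ (with $i-1<K$) forces $\mse_{i}<\mse_{i-1}$, whereas $g_{i}>\mathcal{S}_{i}$ (with $i>0$) forces $\mse_{i-1}<\mse_{i}$, which is impossible. Hence the first index $i^\star$ at which $g_{i^\star}\not<\mathcal{S}_{i^\star}$ must in fact satisfy $g_{i^\star}\in\mathcal{S}_{i^\star}$, and $i^\star\ge 1$ since $g_0=0\notin\mathcal{S}_0$. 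Invoking Lemma~\ref{lem:2.1} at $i^\star$ and propagating with the consistency relations of Lemma~\ref{lem:2.2} then yields the clean three-part structure $g_i<\mathcal{S}_i$ for $i<i^\star$, $g_{i^\star}\in\mathcal{S}_{i^\star}$, and $g_i>\mathcal{S}_i$ for $i>i^\star$.

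With this structure the two unimodality statements drop out together. On the left branch Lemma~\ref{lem:2.3} gives $g_i\le g_{i+1}$ and on the right branch $g_{i+1}\le g_i$, so $\{g_i\}$ peaks at $i^\star$, which is \eqref{opt_i}. In parallel, Lemma~\ref{lem:3} gives $\mse_{i+1}<\mse_i$ for $i<i^\star$ and $\mse_{i-1}<\mse_i$ for $i>i^\star$, so $\{\mse_i\}$ attains its unique minimum at the same $i^\star$, solving \eqref{problem2}. I would then close by substituting $i^\star$ into Lemma~\ref{lem:2} to obtain $a^\star=a_{i^\star}$ from \eqref{opti_a} and into Lemma~\ref{lem:1} to obtain $\{b_k^\star\}$ from \eqref{optimal_b}. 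The only genuinely nonroutine step is the contradiction ruling out a direct jump at the transition; this is precisely what forces the peak of $\{g_i\}$ and the valley of $\{\mse_i\}$ to coincide, and everything else is bookkeeping with the boundary cases and the monotonicity and consistency lemmas.
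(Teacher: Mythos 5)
Your proposal is correct and takes essentially the same route as the paper's own argument: reduce via Lemmas~\ref{lem:1}--\ref{lem:3} and Corollary~\ref{cor:1} to the one-dimensional search \eqref{problem2}, then combine the switching (Lemma~\ref{lem:2.1}), consistency (Lemma~\ref{lem:2.2}) and monotonicity (Lemma~\ref{lem:2.3}) properties with Lemma~\ref{lem:3} to conclude that $\{g_i\}$ is unimodal and $\{\mse_i\}$ is minimized at the same index, which yields \eqref{opt_i}. The only difference is that you make explicit what the paper leaves implicit---the boundary facts $g_0<\mathcal{S}_0$ and $g_K\not<\mathcal{S}_K$, and the contradiction via Lemma~\ref{lem:3} ruling out a transition that jumps directly from $g_{i-1}<\mathcal{S}_{i-1}$ to $g_i>\mathcal{S}_i$---which is added rigor within the same proof structure, not a different approach.
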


{We note that this result is independently arrived at~\cite{CAO} with a different proof.}

As a consequence of Theorem~\ref{theory:optimal}, the computation MSE and the power consumption induced by the computation-optimal policy are obtained below.
\begin{proposition}	
	\normalfont
	The minimum computation MSE of the AirComp system under the peak power constraint is
	\begin{equation}\label{mse_optimal}
	\mathsf{MSE}^\star = \sum_{k=1}^{i^\star} \left( a_{i^\star} h_k \sqrt{P}-1\right)^2 + \sigma^2 (a_{i^\star})^2.
	\end{equation}
	The power consumption of the AirComp system  induced by the computation-optimal policy is
	\begin{equation}\label{power}
	\mathsf{PW} 
	= \sum_{k=1}^{K} {b^{\star}_k}^2
	=P i^\star + \frac{1}{(a_{i^\star})^2} \sum_{k=i^\star +1}^{K} \frac{1}{h^2_k}.
	\end{equation}
\end{proposition}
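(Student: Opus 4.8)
The plan is to obtain both closed-form expressions by direct substitution of the computation-optimal policy from Theorem~\ref{theory:optimal} into the definitions of the computation MSE \eqref{MSE} and the power consumption \eqref{def_pw}. Since Theorem~\ref{theory:optimal} identifies the optimal critical number $i^\star$ together with the optimal Rx-scaling factor $a^\star = a_{i^\star}$ and, through Lemma~\ref{lem:1}, the corresponding optimal Tx-scaling factors $\{b_k^\star\}$, the only remaining work is to evaluate these two quantities at the optimal policy.

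First I would establish the MSE expression. By the reformulation in \eqref{problem2}, the minimum computation MSE equals $\mse_{i^\star}$, the objective value attained at the optimal critical number $i^\star$. Recalling that we have set $a,h_k\in\mathbb{R}_0$ and $b_k\in[0,\sqrt{P}]$ without loss of generality, the product $a_{i^\star}h_k\sqrt{P}$ is real and non-negative, so each modulus $\lvert a_{i^\star}h_k\sqrt{P}-1\rvert^2$ collapses to the real square $(a_{i^\star}h_k\sqrt{P}-1)^2$ and $\lvert a_{i^\star}\rvert^2=(a_{i^\star})^2$. This yields \eqref{mse_optimal} at once.

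Next I would compute the power consumption by substituting the optimal Tx-scaling factors from \eqref{optimal_b}, evaluated at $i=i^\star$, into \eqref{def_pw}. For the first $i^\star$ sensors, $b_k^\star=\sqrt{P}$, each contributing $(b_k^\star)^2=P$ and hence $P\,i^\star$ in total; for the remaining sensors with $i^\star<k\leq K$, the channel-inversion factor $b_k^\star=1/(a_{i^\star}h_k)$ contributes $(b_k^\star)^2=1/\bigl((a_{i^\star})^2 h_k^2\bigr)$. Splitting the sum $\sum_{k=1}^{K}(b_k^\star)^2$ over these two index ranges and factoring out $1/(a_{i^\star})^2$ from the second group produces exactly \eqref{power}.

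Since every step is a routine algebraic substitution justified by the already-established Lemma~\ref{lem:1} and Theorem~\ref{theory:optimal}, I do not anticipate any genuine obstacle. The only points requiring care are purely bookkeeping: handling the switching structure consistently when partitioning the sums into the ranges $1\le k\le i^\star$ and $i^\star<k\le K$, and confirming that the real-valued reduction legitimately removes the absolute values in the signal term.
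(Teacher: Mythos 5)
Your proposal is correct and matches the paper's intent: the paper states this Proposition as an immediate consequence of Theorem~\ref{theory:optimal} with no further argument, i.e., exactly the direct substitution of $a_{i^\star}$ and $\{b_k^\star\}$ into \eqref{MSE} (via the reduced form \eqref{target_temp}/\eqref{problem2}, where the terms for $k>i^\star$ vanish) and into \eqref{def_pw} that you carry out. Your added bookkeeping---splitting the power sum over the two index ranges and noting that the real non-negative convention collapses the moduli to squares---is precisely the routine verification the paper leaves implicit.
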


{
\subsection{Extension to Multi-Antenna Receiver Case} \label{sec:multi}
We extend our solution of the single-antenna AirComp system to a more practical multi-antenna receiver case, where each sensor with simple hardware components has one antenna and the receiver has $N$ antennas.
The computation MSE is rewritten as
\begin{equation}\label{mse_multi}
\mse=  \sum_{k=1}^{K} \vert \mathbf{a}^\top \mathbf{h}_k b_k -1 \vert^2 + \sigma^2 \vert \mathbf{a} \vert^2,
\end{equation}
where $\mathbf{a} \in \mathbb{C}^{N \times 1}$ is the  Rx-scaling vector at the receiver, and $\mathbf{h}_k \in \mathbb{C}^{N \times 1}$ is the channel-coefficient vector between sensor $k$ and the receiver. $(\cdot)^\top$ is the matrix-transpose operator.
Taking \eqref{mse_multi} into \eqref{problem}, such an computation-MSE-minimization problem in the multi-antenna receiver case is intractable and NP-hard~\cite{GuangxuMIMO}. Thus, we develop algorithms to find the near optimal solutions based on the single-antenna solution in Theorem~\ref{theory:optimal}.

By denoting $\mathbf{a} \triangleq \vert \mathbf{a} \vert \mathbf{v}$, where $\mathbf{v}$ is a unit vector, and from the discussion below \eqref{problem}, \eqref{mse_multi} is equivalent to
\begin{equation}\label{mse_multi2}
\mse=  \sum_{k=1}^{K} \bigg\vert \vert \mathbf{a} \vert \big\vert \mathbf{v}^\top \mathbf{h}_k \big\vert  b_k -1 \bigg\vert^2 + \sigma^2 \vert \mathbf{a} \vert^2.
\end{equation}
It is clear that if $\mathbf{v}$ is given, the optimal $\vert \mathbf{a} \vert$ and $\{b_k\}$ can be solved directly by replacing $h_k$ with $\vert \mathbf{v}^\top \mathbf{h}_k \vert$ in Theorem~\ref{theory:optimal}. Although finding the optimal $\mathbf{v}$ is not tractable, two easy-to-computation algorithms can find the suboptimal unit vectors: one is the optimal antenna selection, which finds the optimal $\mathbf{v}$ that belongs to the set of $\{[1,0,\cdots,0]^\top,[0,1,0,\cdots,0]^\top, \cdots,[0,\cdots,0,1]^\top\}$; the other is to randomly generate a sequence of $\mathbf{v}$ and find the optimal one that induces the minimum $\mse$ in \eqref{mse_multi2}.
In Sec. VI, we will show that these two methods can achieve  much better computation MSEs than the method in~\cite{GuangxuMIMO}.
}

\subsection{Benchmark Policies}\label{sec:benchmark}
The computation-optimal policy in Theorem~\ref{theory:optimal} 
needs to first sort the $K$ channel coefficients (e.g., using an insertion-sort algorithm), which has a computation complexity of $O(K^2)$, and then compute the largest parameter $g_i$, which is a non-linear function of $i$ channel coefficients. Thus, we also present two intuitive and easy-to-compute benchmark policies of AirComp systems for comparison.

\begin{definition}[Channel-inversion policy~\cite{GuangxuMIMO,Xiaoyang}]
	\normalfont
The channel-inversion policy has the critical number $i=1$, i.e., the Rx-scaling factor $a\in \mathcal{S}_1$, and the Tx and Rx-scaling factors are given as
\begin{equation}\label{benchmark1}
	\begin{aligned}
&b_k= \sqrt{P} \frac{h_1}{h_k}, \forall k,
&a=\frac{1}{\sqrt{P}h_1}.
\end{aligned}
\end{equation}
\end{definition}

The channel-inversion policy is commonly considered in the literature of AirComp~\cite{GuangxuMIMO,Xiaoyang}, which guarantees that the computing output $r$ in \eqref{y} is an unbiased estimation of the sum of the original signals, i.e.,
\begin{equation}
\myexpectation{r-\sum_{k=1}^{K} x_k \bigg\vert x_1,\cdots,x_K} = a\myexpectation{n}=0.
\end{equation}

\begin{definition} [Energy-greedy policy] \label{def:energy_greedy}
	\normalfont
	The energy-greedy policy always chooses the critical number $i=K$, i.e., the Rx-scaling factor $a\in \mathcal{S}_K$, and the Tx- and Rx-scaling factors according to Lemma~\ref{lem:1} and Lemma~\ref{lem:2} are given as
\begin{subequations}\label{benchmark2}
		\begin{alignat}{2} 
		\label{benchmark2_2}
		&b_k= \sqrt{P}, \forall k,\\		
		\label{benchmark2_1}
	&a= \min \left\lbrace \frac{1}{\sqrt{P} h_K}, \frac{\sqrt{P}\sum_{k=1}^{K}h_k}{\sigma^2+P \sum_{k=1}^{K}h^2_k} \right\rbrace.
	\end{alignat}
\end{subequations}
\end{definition}

The energy-greedy policy requires all the sensors to always transmit with the maximum power regardless of the channel conditions. 
Since the optimal critical number of an AirComp system, $i^\star$, usually takes a value between $1$ and $K$, the channel-inversion policy and the energy-greedy policy can be treated as two extreme cases.

%
%

\section{AirComp System versus Traditional MAC System} \label{sec:comparison}
{In this section, we consider a remote estimation problem based on a MAC system, which is closely related to the AirComp problem since both the systems leverage the superposition property of the MAC channel.}
The MAC system consists of $K$ sensors and one receiver, where the receiver has to recover every sensor's signal $x_k,\forall k\in\{1,\cdots,K\}$. 
The sensors adopt the coding-free transmission method same as the AirComp system, i.e., each of the $K$ sensors scales its measurement signal by the Tx-scaling factor $\tilde{b}_k$ and simultaneously sends it to the receiver. Then, the receiver estimates each of the original signals by linearly scaling the received signal with the Rx-scaling factor $\tilde{a}_k,\forall k\in\{1,\cdots,K\}$.
The estimated sensor $k$'s signal is given as
\begin{equation}
r_k=\tilde{a}_k \left(\sum_{k=1}^{K} h_k \tilde{b}_k x_k +n\right),\forall k\in\{1,\cdots,K\}.
\end{equation}
With a bit abuse of notation, the estimation MSE of sensor $k$ is denoted and obtained as
\begin{align}\label{msek_2}
&\mse_k  \triangleq \myexpectation{\lvert r_k - x_k \rvert^2}
= (\tilde{a}_k h_k \tilde{b}_k-1)^2 + \tilde{a}^2_k \!\! \sum_{j\in \{1,\cdots,K\}\backslash k} (h_j \tilde{b}_j)^2 +\tilde{a}^2_k \sigma^2.
\end{align}

\subsection{Optimal Sum of MSE}
We aim to design the optimal Tx-Rx scaling policy to minimize the sum of the estimation MSEs of the $K$ sensors' signals under the individual power constraint, and have the following problem:
\begin{subequations}\label{problem3}
	\begin{alignat}{2}
	&\!\min_{\{\tilde{a}_k\}, \{\tilde{b}_k\}}        &\qquad& \sum_{k=1}^{K}\mse_k\label{eq:optProb3}\\
	&\text{subject to} &      & \vert \tilde{b}_k \vert^2 \leq P, \forall k. \label{eq:constraint3}
	\end{alignat}
\end{subequations}
\begin{remark}
This optimal estimation problem of the MAC system is a sum-of-MSE problem, while the optimization problem of the AirComp system in Sec.~\ref{subsec:opti} can be treated as an MSE-of-sum (signal) problem. Thus, the optimal Tx-Rx scaling policies of the two problems are different in general.
\end{remark}

{When $\{\tilde{b}_k\}$ are given and satisfy the constraint \eqref{eq:constraint3}, the target function \eqref{eq:optProb3} is a quadratic function of $\{\tilde{a}_k\}$, and it is clear that the variable $\tilde{a}_k$ only has an effect on $\mse_k$ rather than the other MSE objects.}
The optimal Rx-scaling factor for sensor $k$ is obtained directly as
\begin{equation} \label{ak}
\tilde{a}_k=\frac{h_k \tilde{b}_k}{\sigma^2 + \sum_{j=1}^{K}(h_j \tilde{b}_j)^2}.
\end{equation}
{Taking \eqref{ak} into \eqref{msek_2}, which does not change the optimality compared to the original problem~\eqref{problem3}, we have}
\begin{equation}\label{mse_k}
\mse_k = 1- \frac{(h_k \tilde{b}_k)^2}{\sigma^2 + \sum_{j=1}^{K}(h_j \tilde{b}_j)^2},
\end{equation}
and thus
\begin{equation}
\sum_{k=1}^{K} \mse_k = K -\frac{ \sum_{j=1}^{K}(h_j \tilde{b}_j)^2}{\sigma^2 + \sum_{j=1}^{K}(h_j \tilde{b}_j)^2},
\end{equation}
which is a monotonic decreasing function of $\{ \tilde{b}^2_k \}$.

 Therefore, the optimal solution of problem~\eqref{problem3} is given below.
\begin{theorem}[Optimal sum-of-MSE  policy]\label{theory:mac}
\normalfont
The optimal Tx-Rx scaling policy of problem~\eqref{problem3} of the MAC system is given as
\begin{align}
\tilde{a}^\star_k&=\frac{h_k \sqrt{P}}{\sigma^2 + P \sum_{j=1}^{K}(h_j)^2}, 
\ \tilde{b}^\star_k =\sqrt{P}, \forall k.
\end{align}
\end{theorem}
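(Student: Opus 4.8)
The plan is to leverage the monotonicity relation already established immediately before the theorem statement. Once the optimal Rx-scaling factors $\{\tilde{a}_k\}$ from \eqref{ak} are substituted into the objective, the aggregate estimation error collapses to a function of a single scalar quantity, namely $T \triangleq \sum_{j=1}^{K}(h_j \tilde{b}_j)^2$, through the relation $\sum_{k=1}^{K}\mse_k = K - T/(\sigma^2 + T)$. Since $T/(\sigma^2+T) = 1 - \sigma^2/(\sigma^2+T)$ is strictly increasing in $T$, the objective $\sum_{k}\mse_k$ is strictly decreasing in $T$. Therefore the original minimization over $\{\tilde{a}_k\},\{\tilde{b}_k\}$ reduces to the \emph{maximization} of $T$ over the Tx-scaling factors subject to the peak-power constraint \eqref{eq:constraint3}.

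First I would observe that $T = \sum_{j=1}^{K} h_j^2 \tilde{b}_j^2$ is separable across sensors and, because every $h_j>0$ by the ordering assumption, each summand $h_j^2 \tilde{b}_j^2$ is a strictly increasing function of $\tilde{b}_j^2 \in [0,P]$. Hence $T$ is maximized by pushing every $\tilde{b}_j^2$ to its upper bound simultaneously, giving $\tilde{b}_k^\star = \sqrt{P}$ for all $k$ and $T^\star = P\sum_{j=1}^{K} h_j^2$. Substituting $\tilde{b}_k = \sqrt{P}$ back into \eqref{ak} then yields the claimed optimal Rx-scaling factor $\tilde{a}_k^\star = h_k\sqrt{P}/(\sigma^2 + P\sum_{j=1}^{K} h_j^2)$, completing the argument.

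The step I expect to carry the conceptual weight --- though it is computationally trivial --- is recognizing the decoupling. In this sum-of-MSE problem, raising any $\tilde{b}_k$ enlarges the common denominator $\sigma^2 + T$ appearing in \emph{every} $\mse_j$ of \eqref{mse_k}, so it simultaneously reduces all $K$ error terms; there is no tension between sensors and every peak-power constraint binds at the optimum. This stands in sharp contrast to the AirComp MSE-of-sum problem of Sec.~\ref{subsec:opti}, where increasing $b_k$ past the channel-inversion value $1/(ah_k)$ raises the term $|ah_kb_k-1|^2$, forcing the switching structure of Lemma~\ref{lem:1}. Here no such switching arises, and the optimal policy is the plain full-power one, which is exactly the qualitative distinction the paper wishes to highlight between the two problem classes.
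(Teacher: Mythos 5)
Your proposal is correct and follows essentially the same route as the paper: substitute the per-sensor optimal Rx-scaling factors \eqref{ak} into the objective, observe that the sum of MSEs collapses to $K - T/(\sigma^2+T)$ with $T=\sum_{j}(h_j\tilde{b}_j)^2$, and conclude by monotonicity that every sensor transmits at peak power. The only difference is expository --- you make the separability/decoupling argument and the contrast with the AirComp switching structure explicit, which the paper leaves implicit in its one-line monotonicity remark.
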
 
\begin{remark}[When the sum-of-MSE problem is equivalent to the MSE-of-sum problem?]
Different from the computation-optimal policy of the AirComp system in Theorem~\ref{theory:optimal}, where the optimal Tx-scaling factor of each sensor depends on $K$ channel coefficients, the optimal policy of the MAC system has the identical Tx-scaling factor $\sqrt{P}$, i.e., a type of energy-greedy policy that requires each sensor to use the maximum power for transmission.

However, from Theorems~\ref{theory:optimal} and~\ref{theory:mac}, if the sequence $\{g_i\}$ defined in \eqref{g_i} satisfies the condition 
$
g_K=\max_{1 \leq k \leq K} g_k,
$
the optimal Tx-scaling factors of the AirComp system are identical to that of the MAC system, i.e., $b^\star_k=\tilde{b}^\star_k=\sqrt{P},\forall k$. Also the optimal Rx-scaling factor of the AirComp system $a^\star= \frac{\sum_{k=1}^{K} h_k \sqrt{P}}{\sigma^2 + P \sum_{j=1}^{K}(h_j)^2}$ is equal to the sum of the optimal Rx-scaling factors of the MAC system  $\sum_{k=1}^{K} \tilde{a}^\star_k$, and thus the estimation of the sum of the signals is equal to the sum of the estimation of each individual signal, i.e., $r=\sum_{k=1}^{K} r_k$. 
In this sense, the optimal sum-of-MSE policy is equivalent to the optimal MSE-of-sum policy.

{Such a comparison study between the recently proposed AirComp system and the conventional MAC system is very important for having a better understanding of AirComp. In particular, when the condition mentioned earlier is satisfied, the AirComp system simply degrades to the conventional MAC system (i.e., one can estimate the individual signals first and then calculate the sum of all the estimates); otherwise, the AirComp system is distinguished from the MAC system.}
\end{remark}

\subsection{Achievable MSE Region and the Pareto Front}
In addition to the sum of the estimation MSEs of $K$ sensors' signals, we also care about the capability of the MAC system in  providing the estimation quality of the $K$ sensors, which is 
captured by the achievable MSE region defined below. 
\begin{definition}[Achievable MSE region]
\normalfont
Given the channel coefficients $\{h_k\}$ and the individual power constraint $P$, the achievable MSE region of a $K$-sensor MAC, $\mathcal{M}$, is defined by the set of all tuples $(\mse_1,\cdots,\mse_K)$, where $\mse_k$ is defined in \eqref{msek_2}, $\tilde{a}_k$ is given in \eqref{ak}, and $\tilde{b}_k\in [0, \sqrt{P}],\forall k\in\{1,\cdots,K\}$.
\end{definition}

Using \eqref{mse_k}, $b^2_k$ can be represented by $\{\mse_j\}$ as
\begin{equation} \label{s-x}
b^2_k= \frac{\sigma^2}{h^2_k}\frac{1-\mse_k}{\sum_{j=1}^{K} \mse_j -(K-1)},\forall k.
\end{equation}
Since $b^2_k \leq P$ and $\mse_k \leq 1$ in \eqref{mse_k}, the achievable MSE region can be derived as below.
\begin{proposition}
\normalfont
The achievable MSE region of a $K$-sensor MAC, $\mathcal{M}$, satisfies
\begin{subequations} \label{region}
\begin{alignat}{2} \label{region1}
&\mse_k + \frac{P h^2_k}{\sigma^2} \left(\sum_{j=1}^{K} \mse_j - (K-1)\right) \geq 1, \forall k\\ \label{region2}
&\mse_k \leq 1, \forall k.
\end{alignat}
\end{subequations}
\end{proposition}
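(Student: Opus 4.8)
The plan is to read off both inequalities directly from the MSE characterization \eqref{mse_k} together with the representation \eqref{s-x}, exploiting that every point of $\mathcal{M}$ is parametrized by Tx-scaling factors obeying $b_k^2\leq P$. The bound \eqref{region2} is immediate: since the fraction subtracted in \eqref{mse_k} is a ratio of non-negative quantities, it is non-negative, so $\mse_k\leq 1$ for each $k$.

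For \eqref{region1}, the essential preliminary step is to fix the sign of the denominator appearing in \eqref{s-x}, namely $\sum_{j=1}^{K}\mse_j-(K-1)$. Summing \eqref{mse_k} over $k$ gives $\sum_{k}\mse_k = K-\frac{\sum_k (h_k\tilde{b}_k)^2}{\sigma^2+\sum_j(h_j\tilde{b}_j)^2}$, and because the receiver noise variance $\sigma^2>0$ appears in the denominator, the fraction is strictly below $1$; hence $\sum_{j}\mse_j-(K-1)>0$.

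Knowing the denominator is positive, I would then impose the power budget $b_k^2\leq P$ on \eqref{s-x}, i.e. $\frac{\sigma^2}{h_k^2}\,\frac{1-\mse_k}{\sum_j\mse_j-(K-1)}\leq P$, and clear the (positive) denominator, which preserves the inequality direction: $\sigma^2(1-\mse_k)\leq P h_k^2\bigl(\sum_j\mse_j-(K-1)\bigr)$. Dividing by $\sigma^2$ and moving the $\mse_k$ term to the left-hand side reproduces \eqref{region1} exactly.

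The only real obstacle is this sign verification: were $\sum_j\mse_j-(K-1)$ not strictly positive, clearing the fraction in \eqref{s-x} would reverse the inequality and the stated direction of \eqref{region1} would be wrong. Positivity rests entirely on $\sigma^2>0$, after which the remaining manipulations are sign-preserving elementary algebra.
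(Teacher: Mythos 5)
Your proof is correct and takes essentially the same route as the paper: invert \eqref{mse_k} to obtain the representation \eqref{s-x}, then impose the power budget $b_k^2 \leq P$ together with $\mse_k \leq 1$ to read off \eqref{region1} and \eqref{region2}. The one point where you go beyond the paper's (very terse) argument is the explicit verification that $\sum_{j}\mse_j-(K-1)>0$, which follows from $\sigma^2>0$ and is indeed required to clear the denominator in \eqref{s-x} without flipping the inequality --- the paper leaves this sign check implicit.
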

\begin{remark}
It can be verified that the achievable MSE region \eqref{region} is convex, and \eqref{region1} and \eqref{region2} define the inner and outer boundaries of the region, respectively, as illustrated in Fig.~\ref{fig:region}. Specifically, the inner boundary \eqref{region1} is achieved by letting the $k$th sensor transmit with the maximum power, i.e., $b^2_k=P$.
\end{remark}
\begin{figure}[t]
	\renewcommand{\captionfont}{\small} \renewcommand{\captionlabelfont}{\small}	
	\centering
	\includegraphics[scale=0.9]{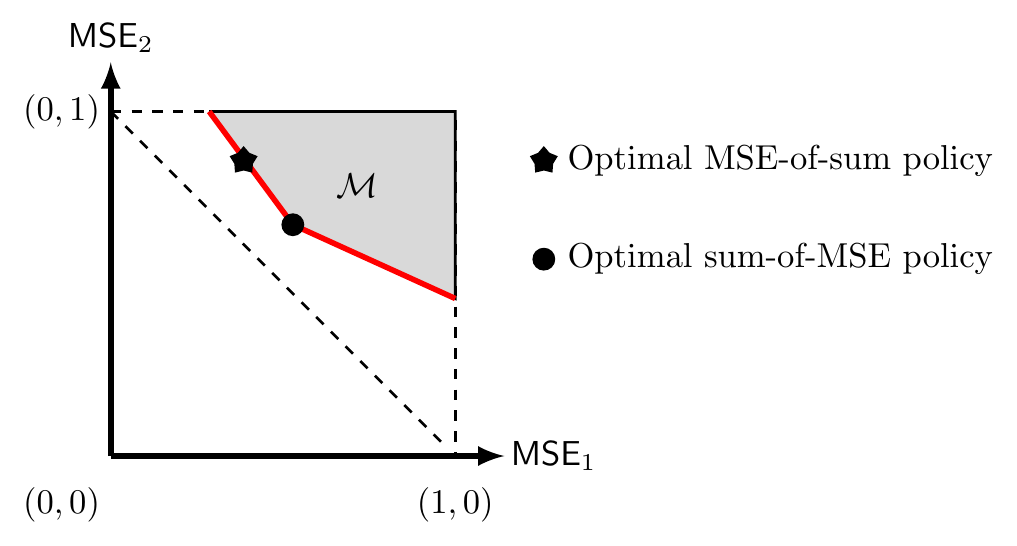}
	\vspace{-0.5cm}	
	\caption{Achievable MSE region of a two-sensor MAC system, where the red broken line is the Pareto front of the sum-of-MSE minimization problem, and the dashed diagonal line indicates that the sum of MSE is larger than $(K-1)=1$, which can be obtained from \eqref{region1} and \eqref{region2}, and the horizontal and vertical dashed lines indicate the outer boundaries of the achievable MSE region, i.e., $\mse_k \leq 1$.}
	\label{fig:region}
	\vspace{-0.5cm}	
\end{figure}
\begin{remark}
	
It can be observed that the Pareto front of the problem~\eqref{problem3} locates on the boundary of $\mathcal{M}$, where it is not possible to reduce an MSE object without increasing the others. 
In particular, we see that the inner boundaries \eqref{region1} bound $\{\mse_k\}$ away from zero. When the equality in \eqref{region1} holds, to reduce $\mse_k,\forall k =1,\cdots,K$, one needs to increase at lease one of the others.
Therefore, by defining the sets
\begin{equation}
\mathcal{B}_k = \left\lbrace (\mse_1,\cdots,\mse_K):  \mse_k + \frac{P h^2_k}{\sigma^2} \left(\sum_{j=1}^{K} \mse_j - (K-1)\right) = 1\right\rbrace , \forall k, 
\end{equation}
the Pareto front of problem~\eqref{problem3} is 
$
\mathcal{P} \triangleq \left(\mathcal{B}_1 \cup  \mathcal{B}_2 \cup \cdots \cup \mathcal{B}_K\right) \cap \mathcal{M},
$
which is illustrated by the red broken line in Fig. 2.
Furthermore, it can be verified that the MSE tuple induced by the optimal sum-of-MSE solution, i.e., $\mse_k = 1- \frac{(h_k)^2 P}{\sigma^2 + \sum_{j=1}^{K}(h_j)^2 P}$, locates at the intersection of the inner boundaries~\eqref{region1}, i.e., $\left(\mathcal{B}_1 \cap  \mathcal{B}_2 \cap \cdots \cap \mathcal{B}_K\right)$, which belongs to the Pareto front $\mathcal{P}$ and hence verifies the correctness of the solution.
\end{remark}

From Corollary~\ref{cor:1}, the computation-optimal policy of the AirComp system assigns at least one sensor using the maximum power for information transmission.
Thus, applying the optimal Tx-scaling factors of the AirComp system $\{b^\star_k\}$ to the MAC system, i.e., letting $\tilde{b}_k=b^\star_k,\forall k$, the achievable MSE tuple $(\mse_1,\cdots,\mse_K)$ falls on the inner boundaries of the achievable MSE region $\mathcal{M}$  as illustrated in Fig.~\ref{fig:region}.
Recall that for the optimal Tx-scaling factors of the MAC system, i.e., $\tilde{b}^\star_k = \sqrt{P}$, the equalities of the $K$ constraints~\eqref{region1} hold, thus the optimal achievable MSE tuple $(\mse_1,\cdots,\mse_K)$ is the intersection of the inner boundaries as illustrated in Fig.~\ref{fig:region}.

\section{Ergodic Performance of AirComp: Computation Effectiveness versus Energy Efficiency}\label{sec:scaling1}
In Sec.~\ref{sec:opti}, the performance of the AirComp system under instantaneous channel conditions has been investigated.
In this section, we focus on the ergodic performance of AirComp systems with different Tx-Rx scaling policies under Rayleigh fading channels,
where each channel coefficient $h_k,\forall k\in\{1,\cdots,K\}$, independently varies with time and has the standard Rayleigh stationary distribution~\cite{GuangxuInference}.
In particular, we investigate the average computation MSE and the average power consumption of the AirComp system defined below.\footnote{Although the following analysis are for AirComp systems with Rayleigh-distributed channel coefficients, the analysis framework can be applied to the cases with other channel-coefficient distributions.}

%

\begin{definition} \label{def:sumpower}
	\normalfont
	The average computation MSE and the average power consumption of an AirComp system are defined respectively as 
	${\myexpectation{\mse}}/{K}$ and ${\myexpectation{\pw}}/{K}$, where $\mse$ and $\pw$ are given in \eqref{MSE} and \eqref{def_pw}, respectively, and the expectation is taken with respect to $K$ random channel coefficients.
\end{definition}
Thus, the average computation MSE and the average power consumption indicate how the computation accuracy/performance and the total power consumption of the AirComp system scale with the increasing computation load, i.e., the increasing $K$.

\begin{definition} \label{def:performance-effective}
	\normalfont
	A Tx-Rx scaling policy of the AirComp system is a computation-effective policy~iff
	\begin{equation}
	\lim\limits_{K\rightarrow \infty} \frac{\myexpectation{\mse}}{K} = 0.
	\end{equation} 
	The policy is an energy-efficient policy iff
	\begin{equation}
	\lim\limits_{K\rightarrow \infty} \frac{\myexpectation{\mathsf{PW}}}{K} = 0.
	\end{equation} 
\end{definition}

For a computation-effective policy, the average computation MSE approaches to zero, while for an energy-efficient policy, the average power consumption approaches to zero, with the increasing computation load.
Therefore, it is interesting to see whether the benchmark policies (including the channel-inversion policy and the energy-greedy policy), and the computation-optimal policy in Sec.~\ref{sec:opti} are computation-effective or energy-efficient or both.

In the following analysis, we denote the channel power gains as $U_k\triangleq h^2_k$, where $U_{k_1}\leq U_{k_2},\forall 1\leq k_1\leq k_2\leq K$. In other words, $\{U_k\}$ are the order statistics of $K$ independent random samples from the standard exponential distributions.

\subsection{Benchmark Policy 1: Channel-Inversion Policy}
\subsubsection{Average Computation MSE}
Taking \eqref{benchmark1} into \eqref{MSE}, the average computation MSE can be derived as
\begin{equation}\label{average_mse}
\frac{\myexpectation{\mathsf{MSE}}}{K} = \frac{1}{K}\myexpectation{\frac{\sigma^2}{P U_1}}.
\end{equation}

Using the property of order statistics~\cite{orderstatistics}, the probability density function (pdf) of the minimum value of $K$ sample from the standard exponential distribution is given as
\begin{equation}\label{pdf_1}
f_{U_1}(u_1) = 
\left\lbrace 
\begin{aligned}
&K e^{-K u_1}, &&u_1\geq 0\\
&0,&& u_1<0
\end{aligned}
\right.
\end{equation}
Thus, $U_1$ follows an exponential distribution and $1/U_1$ follows an inverse exponential distribution with 
the pdf
\begin{equation} \label{pdf1}
f_{1/U_1}(y)= \exp\left(- \frac{K}{y}\right)\frac{K}{y^2}.
\end{equation}
Taking \eqref{pdf1} into \eqref{average_mse}, the average computation MSE is calculated as
\begin{equation}
\frac{\myexpectation{\mse}}{K}=\frac{\sigma^2}{P}\frac{1}{K}\int_{0}^{\infty} y \exp\left(- \frac{K}{y}\right)\frac{K}{y^2} \mathrm{d}y = \infty.
\end{equation}
Therefore, the channel-inversion method actually leads to a poor computation performance, and we have the following result.
\begin{corollary}\label{cory:1}
	\normalfont
	The average computation MSE of the channel-inversion policy is infinite. The policy is not a computation-effective one.
\end{corollary}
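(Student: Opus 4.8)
The plan is to reduce the average computation MSE of the channel-inversion policy to a single scalar expectation and then show that this expectation is infinite, which forces the computation-effectiveness limit to fail. First I would substitute the channel-inversion factors from \eqref{benchmark1}, namely $a=1/(\sqrt{P}h_1)$ and $b_k=\sqrt{P}\,h_1/h_k$, into the MSE expression \eqref{MSE}. With these choices every term $\lvert a h_k b_k -1\rvert^2$ vanishes identically because $a h_k b_k = 1$ for all $k$, leaving only the noise term $\sigma^2\lvert a\rvert^2 = \sigma^2/(P h_1^2) = \sigma^2/(P U_1)$. This reproduces \eqref{average_mse}, so the entire question collapses to evaluating $\myexpectation{1/U_1}$.

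Next I would pin down the law of $U_1$. Since $U_1 = h_1^2$ is the smallest of $K$ independent standard-exponential channel power gains, the order-statistics formula gives the density in \eqref{pdf_1}, i.e. $U_1$ is exponential with rate $K$. The task is therefore to compute
\begin{equation}
\myexpectation{\frac{1}{U_1}} = \int_{0}^{\infty} \frac{1}{u_1}\,K e^{-K u_1}\,\mathrm{d}u_1.
\end{equation}

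The crux of the argument is that this integral diverges at its lower endpoint. Near $u_1=0$ the integrand behaves like $K/u_1$, and $\int_{0}^{\epsilon} u_1^{-1}\,\mathrm{d}u_1 = \infty$ for any $\epsilon>0$; moreover the exponential factor is bounded below by $e^{-K\epsilon}$ on $(0,\epsilon]$, so it cannot tame the singularity. Hence $\myexpectation{1/U_1}=\infty$ for every finite $K$. Equivalently, after changing variables to $y=1/U_1$ with the density \eqref{pdf1}, the integrand is $\tfrac{K}{y}e^{-K/y}\to K/y$ as $y\to\infty$, and $\int^{\infty} y^{-1}\,\mathrm{d}y$ diverges — which is exactly the evaluation already displayed just above the corollary.

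Finally, since $\myexpectation{\mse}/K = \tfrac{\sigma^2}{PK}\,\myexpectation{1/U_1} = \infty$ for all $K$, the limit $\lim_{K\to\infty}\myexpectation{\mse}/K$ cannot equal $0$, so the channel-inversion policy violates the computation-effectiveness criterion of Definition~\ref{def:performance-effective}. The only genuine obstacle is recognising \emph{why} the reciprocal is non-integrable: the probability mass of $U_1$ near the origin — precisely the event that the weakest of the $K$ links is arbitrarily bad, which the channel-inversion scheme amplifies by dividing by $h_1$ — is too heavy for $1/U_1$ to have finite mean. Everything else is a direct substitution into \eqref{MSE} and \eqref{average_mse}.
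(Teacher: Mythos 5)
Your proof is correct and follows essentially the same route as the paper: substitute the channel-inversion factors from \eqref{benchmark1} into \eqref{MSE} to reduce the problem to $\myexpectation{1/U_1}$ with $U_1$ exponential of rate $K$, then observe that this expectation diverges. The only cosmetic difference is that you exhibit the non-integrability at the lower endpoint of the exponential density rather than (as the paper does) after the change of variables $y=1/U_1$, and you yourself note these are the same computation.
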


\subsubsection{Average Power Consumption}
From \eqref{benchmark1}, the average power consumption is derived as
\begin{equation} \label{ave_power_1}
\frac{\myexpectation{\pw}}{K}=\frac{P}{K} \left(1+\myexpectation{\sum_{k=2}^{K} \frac{U_1}{U_k} }\right),
\end{equation}
where
\begin{equation}
\begin{aligned}
&\myexpectation{\sum_{k=2}^{K} \frac{U_1}{U_k} }
=\int_{u_1,\cdots,u_K}^{}\left(\sum_{k=2}^{K} \frac{u_1}{u_k}\right)
f(u_1,u_2,\cdots,u_K) \mathrm{d} u_1,\cdots,\mathrm{d} u_K,
\end{aligned}
\end{equation}
and $f(u_1,u_2,\cdots,u_K)$ is the joint distribution of $U_1,\cdots,U_K$.
Then, we have the following result.
\begin{corollary}\label{cory:2}
	\normalfont
The average power consumption of the channel-inversion policy is
\begin{equation}
\frac{\myexpectation{\pw}}{K} = \frac{P\ln K}{K-1},
\end{equation}
which has the scaling law as
\begin{equation}
\frac{\myexpectation{\pw}}{K} \sim \frac{\ln K}{K},K\rightarrow \infty.
\end{equation}
The policy is an energy-efficient one.
\end{corollary}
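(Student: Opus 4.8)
The plan is to evaluate the expectation $\myexpectation{\sum_{k=2}^{K} U_1/U_k}$ in closed form and substitute it into \eqref{ave_power_1}. First I would exploit the memorylessness of the exponential distribution. Since $\{U_k\}$ are the order statistics of $K$ i.i.d.\ standard exponential samples, the minimum $U_1$ is distributed as $\mathrm{Exp}(K)$, and conditioned on $U_1$ the remaining $K-1$ gains each exceed $U_1$ by an independent standard-exponential amount. Hence I would write $U_k = U_1 + V_{k-1}$ for $k=2,\dots,K$, where $V_1\le\cdots\le V_{K-1}$ are the order statistics of $K-1$ i.i.d.\ standard exponentials, independent of $U_1$.

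The crucial simplification is that $\sum_{k=2}^{K} U_1/U_k = U_1\sum_{m=1}^{K-1} 1/(U_1+V_m)$ is a symmetric function of the $V_m$, so the ordering is irrelevant: replacing the ordered $\{V_m\}$ by the underlying unordered i.i.d.\ sample $\{E'_1,\dots,E'_{K-1}\}$ (each $\mathrm{Exp}(1)$, independent of $U_1$) leaves the sum unchanged. By linearity and the i.i.d.\ property this collapses the entire sum into a single bivariate expectation,
\[
\myexpectation{\sum_{k=2}^{K} \frac{U_1}{U_k}} = (K-1)\,\myexpectation{\frac{U_1}{U_1+E'}},
\]
with $E'\sim\mathrm{Exp}(1)$ independent of $U_1\sim\mathrm{Exp}(K)$. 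I expect this decoupling to carry the most weight: it is what lets us avoid integrating against the full $K$-dimensional joint order-statistic density, which would be unwieldy, so the conceptual obstacle is really spotting that the shift-plus-exchangeability argument reduces the problem to two variables.

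What then remains is routine. Using $1/(U_1+E')=\int_0^\infty e^{-(U_1+E')s}\,ds$, taking expectations inside by independence, and using $\myexpectation{U_1 e^{-U_1 s}}=K/(K+s)^2$ and $\myexpectation{e^{-E's}}=1/(1+s)$, I would reduce the expectation to $\int_0^\infty \tfrac{K}{(K+s)^2(1+s)}\,ds$, which a partial-fraction decomposition evaluates to $\myexpectation{U_1/(U_1+E')} = -\tfrac{1}{K-1} + \tfrac{K\ln K}{(K-1)^2}$. Multiplying by $(K-1)$ and restoring the $k=1$ term (which equals $1$) gives $1+\myexpectation{\sum_{k=2}^{K}U_1/U_k} = \tfrac{K\ln K}{K-1}$, and substituting into \eqref{ave_power_1} yields $\myexpectation{\pw}/K = P\ln K/(K-1)$. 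Finally, writing $P\ln K/(K-1) = \tfrac{K}{K-1}\cdot \tfrac{P\ln K}{K}$ shows the ratio to $\ln K/K$ tends to the constant $P$, so $\myexpectation{\pw}/K \sim \ln K/K$; since $\ln K/K\to 0$, Definition~\ref{def:performance-effective} is satisfied and the channel-inversion policy is energy-efficient.
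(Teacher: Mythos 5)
Your proposal is correct and follows essentially the same route as the paper's Appendix A: both exploit the memorylessness of the exponential distribution to reduce the order statistics $\{U_k\}_{k\geq 2}$, conditioned on the minimum $U_1$, to independent standard-exponential shifts of $U_1$, and then reduce the sum to $(K-1)$ copies of a single bivariate expectation. Your Laplace-transform and partial-fraction evaluation simply supplies the explicit computation that the paper states as \eqref{temp_result} without derivation, and your final values agree with the paper's $\myexpectation{\sum_{k=2}^{K} U_1/U_k} = \frac{K\ln K - (K-1)}{K-1}$.
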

\begin{proof}
	See Appendix A
\end{proof}

\subsection{Benchmark Policy 2: Energy-Greedy Policy}
\subsubsection{Average Computation MSE}
From \eqref{benchmark2},
the Rx-scaling factor of the energy-greedy policy may have the sum of channel coefficients and the sum of channel power gains in the numerator and the denominator, respectively, which makes the analysis of average computation MSE difficult. Thus, we analyze an upper bound of the average computation MSE.
From Lemma~\ref{lem:2}, letting $a = \frac{1}{\sqrt{P}h_K}$ always results in an $\mse$ no smaller than that in \eqref{benchmark2_1}, and thus we have
\begin{align}\label{upperbound2}
\frac{\myexpectation{\mse}}{K} 
&\leq \frac{1}{K} \myexpectation{\sum_{k=1}^{K} \left\lvert 1- \frac{h_k}{h_K} \right\rvert^2 +\frac{\sigma^2}{P h^2_K}}
\leq 1 + \myexpectation{\frac{\sigma^2}{P K U_K}},
\end{align}
where the last inequality in \eqref{upperbound2} is due to the fact that $h_k \leq h_K,\forall k$.

Again, using the property of order statistics~\cite{orderstatistics}, the largest sample of $K$ standard exponential distribution $U_K$ has the pdf as
\begin{equation}\label{pdf_K}
f_{U_K}(u_K) = \begin{cases}
K e^{-u_K} (1-e^{-u_K})^{K-1},&u_K \geq 0\\
0, &u_K < 0
\end{cases} 
\end{equation}
Applying \eqref{pdf_K} onto \eqref{upperbound2}, 
it is obtained as 
\begin{equation}
\begin{aligned}
\frac{\myexpectation{\mse}}{K} 
&\leq 1+ \int_{0}^{\infty} \frac{e^{-x} (1-e^{-x})^{K-1}}{x} \mathrm{d}x
\leq 1+ \int_{0}^{\infty} \frac{e^{-x} (1-e^{-x})}{x} \mathrm{d}x 
= 1+\ln 2 < \infty.
\end{aligned}
\end{equation}
Also, it can be directly proved that $\limsup_{K\rightarrow \infty} \myexpectation{\mse}/K$ is greater than a positive constant by using Theorem~\ref{theory:first_i} from the latter part of the paper.
We have the following result.
\begin{corollary}\label{cor:benchmark2-1}
	\normalfont
The average computation MSE of the energy-greedy policy is upper bounded by $1+\ln 2$,
and the scaling law of the average computation MSE can be written as
\begin{equation}
\frac{\myexpectation{\mse}}{K}\sim 1, K\rightarrow \infty.
\end{equation}
This policy is not a computation-effective one.
\end{corollary}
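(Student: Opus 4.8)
The upper bound $\myexpectation{\mse}/K\le 1+\ln 2$ is already secured by the chain of inequalities displayed just before the statement, which gives $\limsup_{K\to\infty}\myexpectation{\mse}/K<\infty$, i.e.\ the ``$\myexpectation{\mse}/K=O(1)$'' half of the scaling law. What remains is a matching lower bound showing that $\myexpectation{\mse}/K$ is bounded away from $0$; a single such estimate simultaneously delivers the ``$1=O(\myexpectation{\mse}/K)$'' half of $\myexpectation{\mse}/K\sim 1$ and the non-effectiveness claim, since $\liminf_{K\to\infty}\myexpectation{\mse}/K>0$ forbids $\lim_{K\to\infty}\myexpectation{\mse}/K=0$. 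The plan is to first produce a deterministic (per-realization) lower bound on $\mse$, and then pass to the normalized expectation by a law-of-large-numbers argument on the order statistics $\{U_k\}$.

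First I would use that the energy-greedy policy has critical number $i=K$, so by Definition~\ref{def:energy_greedy} its Rx-scaling factor obeys $a=a_K\in\mathcal{S}_K=[0,1/(h_K\sqrt{P})]$; hence $0\le a_K h_k\sqrt{P}\le h_k/h_K\le 1$ for every $k$. Dropping the non-negative noise term $\sigma^2 a_K^2$ in \eqref{MSE} (with $b_k=\sqrt{P}$) and using $(1-a_K h_k\sqrt{P})^2\ge(1-h_k/h_K)^2$ yields
\begin{equation}
\mse\ \ge\ \sum_{k=1}^{K}\left(1-\frac{h_k}{h_K}\right)^2,
\end{equation}
which holds for every channel realization and for either branch of the $\min$ in \eqref{benchmark2_1}. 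Writing $h_k/h_K=\sqrt{U_k/U_K}$ and expanding the square gives
\begin{equation}
\frac{1}{K}\sum_{k=1}^{K}\left(1-\frac{h_k}{h_K}\right)^2
=1-\frac{2}{\sqrt{U_K}}\,\frac{1}{K}\sum_{k=1}^{K}\sqrt{U_k}
+\frac{1}{U_K}\,\frac{1}{K}\sum_{k=1}^{K}U_k.
\end{equation}

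Next I would couple all the $K$-sensor systems on one probability space by fixing an i.i.d.\ standard-exponential sequence $V_1,V_2,\dots$ and letting the $K$-th system use $\{V_1,\dots,V_K\}$, so that $\{U_k\}$ are the sorted values; this leaves every expectation $\myexpectation{\mse}/K$ unchanged because it depends only on the marginal law of the $K$ channels. Under this coupling the SLLN gives $\frac{1}{K}\sum_{k=1}^{K}U_k\to 1$ and $\frac{1}{K}\sum_{k=1}^{K}\sqrt{U_k}\to\myexpectation{\sqrt{U}}<\infty$ almost surely, while $U_K=\max_{k\le K}V_k\to\infty$ almost surely. The two correction terms therefore vanish and $\frac{1}{K}\sum_{k=1}^{K}(1-h_k/h_K)^2\to 1$ almost surely; since this average lies in $[0,1]$, bounded convergence upgrades the limit to $\myexpectation{\frac{1}{K}\sum_{k=1}^{K}(1-h_k/h_K)^2}\to 1$, so $\liminf_{K\to\infty}\myexpectation{\mse}/K\ge 1>0$. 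Together with the upper bound this sandwiches $\myexpectation{\mse}/K$ between positive finite constants, establishing $\myexpectation{\mse}/K\sim 1$ and, a fortiori, that the policy is not computation-effective; this is a self-contained replacement for the deferred appeal to Theorem~\ref{theory:first_i}.

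The main obstacle is the passage from the per-realization inequality to the statement about the deterministic sequence $\myexpectation{\mse}/K$: one must justify both the limit–expectation interchange and the asymptotics of the order-statistic averages. The bound $a_K\le 1/(h_K\sqrt{P})$ is what makes this clean, since it removes any need to track which branch of $\min\{1/(h_K\sqrt{P}),g_K\}$ is active, and the uniform bound $\frac{1}{K}\sum_{k=1}^{K}(1-h_k/h_K)^2\in[0,1]$ makes bounded convergence immediate; the only genuinely probabilistic inputs are the SLLN for $\frac{1}{K}\sum_k U_k$ and $\frac{1}{K}\sum_k\sqrt{U_k}$ and the a.s.\ divergence $U_K\to\infty$, all standard for exponential samples. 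I would also note that with a little more care the upper bound sharpens to $1$, so in fact $\myexpectation{\mse}/K\to 1$; the corollary as stated, however, only requires the two one-sided estimates.
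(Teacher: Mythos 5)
Your proof is correct, and its lower-bound half takes a genuinely different route from the paper's. For the upper bound you reuse, as the paper does, the chain of inequalities in \eqref{upperbound2} and the order-statistics integral yielding $1+\ln 2$, so the two proofs coincide there. For the lower bound, however, the paper does not argue from scratch: it notes that the energy-greedy policy is a first-$\imath$ policy with $\imath(K)=K$, hence $\limsup_{K\rightarrow\infty}\imath(K)/K=1$, and invokes case 2) of Theorem~\ref{theory:first_i} to conclude $\limsup_{K\rightarrow\infty}\myexpectation{\mse}/K\geq\mu(1/2)\approx 0.12>0$; the proof of that theorem (Appendix B) goes through moment bounds on ratios of exponential order statistics (Lemma~\ref{lem:ratio}), Jensen's inequality, and a Riemann-sum approximation. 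Your argument --- dropping the noise term, using $0\leq a h_k\sqrt{P}\leq h_k/h_K\leq 1$ (valid for both branches of the $\min$ in \eqref{benchmark2_1}) to get $\mse\geq\sum_{k=1}^{K}\left(1-h_k/h_K\right)^2$, coupling all the $K$-sensor systems on one i.i.d.\ exponential sequence, and then applying the SLLN, the a.s.\ divergence $U_K\rightarrow\infty$, and bounded convergence --- is self-contained, more elementary, and gives the sharper conclusion $\liminf_{K\rightarrow\infty}\myexpectation{\mse}/K\geq 1$ (indeed $\myexpectation{\mse}/K\rightarrow 1$, as you remark). It also repairs a small looseness in the paper: under the paper's definition of $\sim$, the claim $\myexpectation{\mse}/K\sim 1$ requires $\liminf_{K\rightarrow\infty}\myexpectation{\mse}/K>0$, whereas Theorem~\ref{theory:first_i} as stated only supplies a $\limsup$ bound --- enough to rule out computation-effectiveness, but not, read literally, the full two-sided scaling law; your coupling argument furnishes the $\liminf$ directly. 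What the paper's shortcut buys in exchange is economy: it reuses machinery that must be developed anyway for general first-$\imath$ policies, so the corollary follows with no additional probabilistic work, at the price of a much weaker constant.
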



\subsubsection{Average Power Consumption}
From \eqref{benchmark2_2}, each sensor uses the same power $P$ for information transmission, and we have the result below.
\begin{corollary} \label{cor:benchmark2-2}
	\normalfont
The average power consumption of the energy-greedy policy~is	
\begin{equation}
\frac{\myexpectation{\pw}}{K} = \lim\limits_{K\rightarrow \infty} \frac{\myexpectation{\pw}}{K} = P \neq 0.
\end{equation}
This policy is not an energy-efficient one.
\end{corollary}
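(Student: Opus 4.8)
The plan is to exploit the fact that the energy-greedy policy prescribes a fixed, channel-independent transmit power for every sensor, which collapses the expectation to a deterministic computation. First I would recall from equation~\eqref{benchmark2_2} in Definition~\ref{def:energy_greedy} that the policy sets $b_k=\sqrt{P}$ for every $k$, irrespective of the channel realizations $\{h_k\}$. Substituting these Tx-scaling factors into the power-consumption definition~\eqref{def_pw} gives $\pw=\sum_{k=1}^{K}|b_k|^2=\sum_{k=1}^{K}P=KP$.

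Since $\pw=KP$ is a constant that does not depend on the random channel coefficients, the expectation over the $K$ channel realizations leaves it unchanged, so $\myexpectation{\pw}=KP$. Dividing by $K$ then yields $\myexpectation{\pw}/K=P$ for every $K\in\mathbb{N}$, whence the limit as $K\rightarrow\infty$ is also exactly $P$; this establishes the displayed equality. Finally, because $P\neq 0$, the energy-efficiency criterion $\lim_{K\rightarrow\infty}\myexpectation{\pw}/K=0$ of Definition~\ref{def:performance-effective} is violated, so the energy-greedy policy is not energy-efficient.

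The only point requiring attention—hardly an obstacle—is to notice that, in sharp contrast to the channel-inversion analysis of Corollary~\ref{cory:2}, no order-statistics machinery is needed here: because the energy-greedy Tx-scaling factors are deterministic, the averaging over the Rayleigh fading is vacuous and both the exact value and the scaling law follow immediately. I would therefore expect this proof to be a one-line substitution rather than an integral evaluation, and the bulk of the write-up will simply be invoking the relevant definitions.
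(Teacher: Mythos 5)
Your proof is correct and matches the paper's own reasoning: the paper likewise notes that under \eqref{benchmark2_2} every sensor transmits with constant power $P$, so $\pw = KP$ deterministically and the conclusion follows immediately from Definition~\ref{def:performance-effective}. Your observation that no order-statistics machinery is needed (unlike Corollary~\ref{cory:2}) is exactly why the paper states this result without a separate appendix proof.
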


Comparing Corollaries~\ref{cor:benchmark2-1} and~\ref{cor:benchmark2-2} with Corollaries~\ref{cory:1} and~\ref{cory:2}, the energy-greedy policy provides a better computation performance but has a lower energy efficiency than the channel-inversion policy when $K$ is large.
Therefore, there exists a design tradeoff between the computation effectiveness and the energy efficiency, and it is important to see whether there exists a Tx-Rx scaling policy that is both computation-effective and energy-efficient.

%
%

\subsection{The Existence of Computation-Effective and Energy-Efficient Policies}\label{sec:scaling2}
We introduce a new type of policy and study its scaling laws in terms of average computation MSE and average power consumption, which will further shed lights on the existence of computation-effective and energy-efficient policies and the scaling laws of the computation-optimal policy.
\subsubsection{Construction of A New Policy}

\begin{definition}[First-$\imath$ policy] \label{def:policy}
	\normalfont
	A Tx-Rx scaling policy of the AirComp system is a first-$\imath$ policy if it satisfies:\\
	i) the critical number is determined by a function, i.e., $i=\imath(K)$, where $\imath: \mathbb{N} \rightarrow \mathbb{N}$ and $\imath(K)\leq K$, \\
	ii) the Rx-scaling factor $a\in \mathcal{S}_i$, and\\
	iii) the Tx-scaling factor $b_k$ is given by \eqref{optimal_b}, $\forall k\in \{1,\cdots,K\}$. 
\end{definition}
\begin{remark}
	Different from the optimal policy, where the critical number depends on all the values of the channel coefficients, a first-$\imath$ policy 
	simply determines its critical number based on the total number of sensors $K$. The first $i=\imath(K)$ sensors with the smallest channel coefficients use the maximum power for transmission.
\end{remark}


\subsubsection{Scaling Law of Average Computation MSE}
For a first-$\imath$ policy, using Definition~\ref{def:policy}, it is clear that
\begin{equation}\label{a_inequal}
\frac{1}{h_{\imath(K)+1} \sqrt{P}} <  a \leq \frac{1}{h_{\imath(K)} \sqrt{P}}.
\end{equation}
Taking the inequality \eqref{a_inequal} into \eqref{target_temp}, an upper bound of $\mse$ is obtained as
\begin{equation} \label{upper_bound}
\begin{aligned}
\mse &\leq  \sum_{k=1}^{\imath(K)} \left(\frac{h_k}{h_{\imath(K)+1}} -1\right)^2
+
\frac{\sigma^2 }{P} \frac{1}{h^2_{\imath(K)}}
\leq
\imath(K) + \frac{\sigma^2 }{P} \frac{1}{U_{\imath(K)}}.
\end{aligned}
\end{equation} 
and a lower bound of $\mse$ is obtained as 
\begin{equation} \label{low_bound}
\begin{aligned}
\mathsf{MSE}&>\sum_{k=1}^{\imath(K)} \left(\frac{h_k}{h_{\imath(K)}} -1\right)^2
+
\frac{\sigma^2 }{P} \frac{1}{U_{\imath(K)+1}}.
\end{aligned}
\end{equation} 
Then, we have the following result.
\begin{theorem}\label{theory:first_i}
	\normalfont
For a first-$\imath$ policy, the average computation MSE of the AirComp system has the following properties that
\begin{enumerate}
	\item if $\liminf_{K\rightarrow \infty} \imath(K)\leq 2$,
\begin{equation} \label{key_scaling2}
\limsup\limits_{K\rightarrow \infty}\frac{\myexpectation{\mse}}{K} \geq \frac{\sigma^2}{3 P},
\end{equation}
\item if $\liminf_{K\rightarrow \infty} \imath(K)> 2$ and $\limsup_{K\rightarrow \infty} \imath(K)/K = c >0$, 
\begin{equation} \label{key_scaling1}
\limsup\limits_{K\rightarrow \infty}\frac{\myexpectation{\mse}}{K} \geq \frac{1}{c}\mu\left(\frac{c}{1+c}\right),
\end{equation}
where 
\begin{equation}\label{haha}
\mu(x) \triangleq \frac{x}{1-x}-\log (1-x)-2 \sqrt{\frac{x}{1-x}} \sin ^{-1}\left(\sqrt{x}\right),
\end{equation}
\item if $\liminf_{K\rightarrow \infty} \imath(K)> 2$ and $\lim_{K\rightarrow \infty} \imath(K)/K = 0$, 
%
\begin{equation} \label{key_scaling}
\frac{\myexpectation{\mse}}{K} \sim \frac{\imath(K)}{K}+\frac{1}{\imath(K)}, K\rightarrow \infty.
\end{equation}
\end{enumerate}

\end{theorem}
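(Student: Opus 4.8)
The plan is to sandwich $\mse$ between the bounds \eqref{upper_bound} and \eqref{low_bound}, take expectations over the channel randomness, and reduce everything to two families of order-statistic expectations: the reciprocal noise moments $\myexpectation{1/U_{\imath(K)}}$ and $\myexpectation{1/U_{\imath(K)+1}}$, and the bias sums $\myexpectation{\sum_{k=1}^{\imath(K)}(h_k/h_j-1)^2}$ with $j\in\{\imath(K),\imath(K)+1\}$. The workhorse estimate I would establish first is the reciprocal-moment asymptotic $\myexpectation{1/U_j}\sim K/j$: using the R\'enyi representation $U_j=\sum_{l=1}^{j}E_l/(K-l+1)$ (equivalently the beta densities behind \eqref{pdf_1} and \eqref{pdf_K}) one has $KU_j\Rightarrow\Gamma(j,1)$ for fixed $j$, whose reciprocal has mean $1/(j-1)$ and is integrable exactly when $j\ge 2$; for $j\to\infty$ the same order follows from concentration of $U_j$ around its mean $\approx j/K$. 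This integrability threshold at $j=2$ is precisely what produces the $\imath(K)\le 2$ dichotomy in the statement, so I would pin it down, together with the uniform integrability needed to pass to expectations, before splitting into the three regimes.

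Cases (1) and (3) are the lighter ones. For (1), along a subsequence with $\imath(K)\le 2$ I keep only the noise term in \eqref{low_bound}; since $1/U_{\imath(K)+1}\ge 1/U_3$ pointwise (as $\imath(K)+1\le 3$), I get $\myexpectation{1/U_{\imath(K)+1}}\ge\myexpectation{1/U_3}\sim K/2$, whence $\frac1K\myexpectation{1/U_{\imath(K)+1}}\to 1/2$ and $\limsup_{K\to\infty}\frac{\myexpectation{\mse}}{K}\ge\frac{\sigma^2}{2P}\ge\frac{\sigma^2}{3P}$, with the $\imath(K)\in\{0,1\}$ subcases only enlarging the bound (indeed making it infinite when $\imath(K)=0$). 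For (3), where $\imath(K)/K\to 0$, only matching orders are needed, so I would avoid exact constants: the upper bound gives $\myexpectation{\mse}\le\imath(K)+\frac{\sigma^2}{P}\myexpectation{1/U_{\imath(K)}}=O\!\left(\imath(K)+K/\imath(K)\right)$, while for the lower bound I use that $U_k/U_{\imath(K)}\approx k/\imath(K)$ concentrates in the small-fraction regime, so a constant fraction of the terms of $\sum_{k}(h_k/h_{\imath(K)}-1)^2$ stays bounded away from $1$, making the bias of order $\imath(K)$ and the noise of order $K/\imath(K)$. Dividing by $K$ then gives $\frac{\myexpectation{\mse}}{K}\sim\frac{\imath(K)}{K}+\frac{1}{\imath(K)}$, which is \eqref{key_scaling}.

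Case (2) is where the real work sits, and I expect it to be the main obstacle. When $\imath(K)/K\to c>0$ the noise term is negligible, since $U_{\imath(K)+1}\to-\log(1-c)=\Theta(1)$ and hence $\frac1K\myexpectation{1/U_{\imath(K)+1}}\to 0$, so the bias dominates. I would pass the Riemann sum $\frac1K\sum_{k=1}^{\imath(K)}(h_k/h_{\imath(K)}-1)^2$ to its empirical-quantile limit, replacing $U_k$ by $-\log(1-k/K)$ and $U_{\imath(K)}$ by $-\log(1-c)$, which turns it into $\int_0^c\big(\sqrt{-\log(1-x)}/\sqrt{-\log(1-c)}-1\big)^2\,dx$. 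The delicate points are (i) justifying the interchange of expectation and limit, i.e. uniform integrability of the ratios $U_k/U_{\imath(K)}$ together with control of the $o(K)$ small-index terms where concentration fails, and (ii) reducing the resulting integral to the closed form $\frac1c\mu\!\left(\tfrac{c}{1+c}\right)$ of \eqref{key_scaling1}. The appearance of $\sin^{-1}$ and of the argument $c/(1+c)$ signals that the clean expression is obtained not from the integral as written (which evaluates through an error function and is in fact strictly larger) but from a tractable lower bound on the bias — using an elementary inequality for $-\log(1-x)$ to expose a $\sqrt{x/(1-x)}$ integrand, which the substitution $x=\sin^2\theta$ integrates into the inverse-sine terms of $\mu$ in \eqref{haha}. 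Establishing that this lower bound is both valid and in the stated closed form is the step I would budget the most effort for; once it is in hand, the inequality $\limsup_{K\to\infty}\frac{\myexpectation{\mse}}{K}\ge\frac1c\mu\!\left(\tfrac{c}{1+c}\right)$ is immediate, since I have only discarded the nonnegative noise term from \eqref{low_bound}.
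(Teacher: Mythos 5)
Your proposal is correct, and for the hard step it takes a genuinely different route from the paper, even though the outer skeleton is necessarily the same: sandwich $\mse$ between \eqref{upper_bound} and \eqref{low_bound}, control the noise term via the R\'enyi representation and inverse-gamma moments, turn the bias sum into a Riemann integral, and evaluate it with $x=\sin^2\theta$ to produce $\mu$. The paper never touches quantile convergence: its Lemma~\ref{lem:ratio} bounds each $\myexpectation{U_k/U_{\imath(K)}}$ \emph{non-asymptotically} by Cauchy--Schwarz applied to the R\'enyi bounds (inverse-gamma second moments, which is where the threshold $\imath(K)>2$ really comes from), feeds this into Jensen term by term, and only at the end approximates a deterministic sum by an integral. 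You instead pass the random normalized sum to its empirical-quantile limit, replacing $U_k/U_{\imath(K)}$ by $\frac{-\log(1-k/K)}{-\log(1-c)}$, and then lower-bound the limiting integral through the elementary inequalities $-\log(1-x)\le x/(1-x)$ and $-\log(1-c)\ge c$. The two relaxations coincide algebraically --- both replace the ratio by $\frac{x}{c(1-x)}$, whose bracket $\left[1-\sqrt{x/(c(1-x))}\right]^+$ vanishes at $x=c/(1+c)$ --- so you land on exactly the paper's constant $\frac{1}{c}\mu\left(\frac{c}{1+c}\right)$; your diagnosis that this constant arises from a tractable relaxation rather than from the (strictly larger) true quantile integral is precisely right. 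What each approach buys: the paper's moment route needs no uniform-integrability or process-level justification, whereas yours still owes that argument; it is, however, routine here, since every ratio $U_k/U_{\imath(K)}$ with $k\le\imath(K)$ lies in $[0,1]$, so bounded convergence suffices once almost-sure convergence of the normalized sum is established. Two smaller points: your case 1 is actually sharper than the paper's ($\sigma^2/(2P)$ via $\myexpectation{1/U_3}\sim K/2$, versus the paper's Jensen bound $\frac{\sigma^2}{PK}\cdot\frac{K-\imath(K)}{\imath(K)+1}$ yielding $\sigma^2/(3P)$), which is fine since only a lower bound is claimed; and in case 3 your bias estimate should be phrased as the ratios $U_k/U_{\imath(K)}$, $k\le\imath(K)/2$, staying bounded away from one (so the squared terms stay bounded away from zero), an estimate that is only needed when $\imath(K)$ is of order at least $\sqrt{K}$, where $\imath(K)\to\infty$ legitimizes the concentration you invoke.
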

\begin{proof}
	See Appendix B.
\end{proof}
For case 1) in \eqref{key_scaling2}, the average computation MSE does not converge to zero and such the first-$\imath$ policy is not computation-effective.
From case 2) in \eqref{key_scaling1}, interestingly, we see that the energy-greedy policy in Definition~\ref{def:energy_greedy}, which has $\limsup_{K\rightarrow \infty} \imath(K)/K = 1$, is not computation-effective, since $\limsup_{K\rightarrow \infty} \myexpectation{\mse}/K \geq \mu(1/2)\approx 0.12 >0$.
For case 3), from \eqref{key_scaling}, ${\myexpectation{\mse}}/{K}$ converges to zero iff $\imath(K)=o(K)$ and $\liminf_{K\rightarrow \infty}\imath(K) \rightarrow \infty$. Moreover, the average computation MSE has the decay rate of $1/\sqrt{K}$ when $\imath(K) \sim \sqrt{K}$, 
while if $ \imath(K) = o(\sqrt{K})$ or $ 1/\imath(K) = o(1/\sqrt{K})$, we have $1/\imath(K)\gg \imath(K)/K$ or $1/\imath(K)\ll \imath(K)/K$ in \eqref{key_scaling}, respectively, and the decay rate of the average computation MSE is larger than $1/\sqrt{K}$. Therefore, we have the following result.
%
\begin{proposition}\label{scaling-mse}
	\normalfont
	A first-$\imath$ policy is computation-effective iff	
	$\imath(K)=o(K)$ and $\liminf_{K\rightarrow \infty}\imath(K) \rightarrow\infty$.	
	The largest decay rate of the average computation MSE achieved by first-$\imath$ policies~is
	\begin{equation}
		\frac{\myexpectation{\mse}}{K} \sim \frac{1}{\sqrt{K}}, K\rightarrow \infty,
	\end{equation}
when the critical-number function $\imath(K) \sim \sqrt{K}$.
\end{proposition}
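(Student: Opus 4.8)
The plan is to deduce the proposition entirely from the three‑case scaling law of Theorem~\ref{theory:first_i}, which already quantifies $\myexpectation{\mse}/K$ for every first‑$\imath$ policy, and to match its cases against Definition~\ref{def:performance-effective}. The first thing I would record is that the three cases are exhaustive: either $\liminf_{K\to\infty}\imath(K)\leq 2$ (case 1), or $\liminf_{K\to\infty}\imath(K)>2$, and in the latter situation, because $\imath(K)/K\geq 0$, exactly one of $\limsup_{K\to\infty}\imath(K)/K=c>0$ (case 2) or $\lim_{K\to\infty}\imath(K)/K=0$ (case 3) must hold. With this exhaustiveness in hand the entire proposition is a matter of bookkeeping over the three regimes.

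For the sufficiency half of the ``iff'', suppose $\imath(K)=o(K)$ and $\liminf_{K\to\infty}\imath(K)=\infty$, i.e.\ $\imath(K)\to\infty$. Then for large $K$ we are in case 3, so $\myexpectation{\mse}/K\sim \imath(K)/K+1/\imath(K)$, and both terms vanish; hence the policy is computation‑effective. For necessity I would argue by contraposition, so that the negation of the condition is $(\imath(K)\neq o(K))$ or $(\liminf_{K\to\infty}\imath(K)<\infty)$. I would split this into three sub‑cases that are covered by the theorem: (i) $\liminf\imath(K)\leq 2$, where case 1 gives $\limsup_{K\to\infty}\myexpectation{\mse}/K\geq \sigma^2/(3P)>0$; (ii) $\liminf\imath(K)>2$ but $\imath(K)\neq o(K)$, where case 2 gives $\limsup_{K\to\infty}\myexpectation{\mse}/K\geq \tfrac{1}{c}\mu\!\left(\tfrac{c}{1+c}\right)$; and (iii) $\liminf\imath(K)>2$ with $\imath(K)=o(K)$ but $\liminf\imath(K)<\infty$, where case 3 applies and the $\sim$ relation forces $\limsup_{K\to\infty}\myexpectation{\mse}/K>0$, since along a subsequence attaining the finite liminf one has $1/\imath(K)\to 1/L>0$. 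In each sub‑case the average computation MSE fails to vanish, proving necessity.

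For the optimal decay rate I would work inside the computation‑effective regime (case 3), where $\myexpectation{\mse}/K\sim \imath(K)/K+1/\imath(K)$, and minimize the right‑hand side over admissible $\imath$. By the AM--GM inequality $\imath(K)/K+1/\imath(K)\geq 2\sqrt{1/K}$, with equality exactly when $\imath(K)^2=K$, i.e.\ $\imath(K)\sim\sqrt{K}$. Because $\sim$ absorbs constant factors, this shows that no first‑$\imath$ policy can drive $\myexpectation{\mse}/K$ below order $1/\sqrt{K}$, and that this fastest rate is attained at $\imath(K)\sim\sqrt{K}$, giving $\myexpectation{\mse}/K\sim 1/\sqrt{K}$.

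The main obstacle is the necessity direction rather than the rate computation. I must confirm that the three failure modes really partition the negation and that each yields a strictly positive $\limsup$. The only genuinely non‑routine point is the positivity of the case‑2 bound: I would verify that $\mu(x)>0$ for all $x\in(0,1)$ (for instance from $\mu(0)=0$ together with $\mu'(x)>0$ on $(0,1)$), so that $\tfrac{1}{c}\mu\!\left(\tfrac{c}{1+c}\right)>0$ whenever $c>0$. Sub‑case (iii) also needs mild care, since $\imath(K)$ may oscillate: I would extract the lower bound from the $\sim$ relation of case 3 by passing to the subsequence on which $\imath(K)$ realizes its finite liminf.
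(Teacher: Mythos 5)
Your proposal is correct and follows essentially the same route as the paper: both deduce the proposition by matching the three cases of Theorem~\ref{theory:first_i} against Definition~\ref{def:performance-effective}, with computation-effectiveness characterized via case~3 and the failure modes supplied by cases~1 and~2. Your AM--GM bound $\imath(K)/K + 1/\imath(K) \geq 2/\sqrt{K}$ is a slightly tidier and uniformly valid justification of the optimal $1/\sqrt{K}$ rate than the paper's informal dominance comparison of the two terms (which does not explicitly cover oscillating $\imath(K)$), but it is the same argument in substance.
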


From \eqref{mse_optimal}, the computation-optimal policy surely results in an average computation MSE no larger than that of a first-$\imath$ policy, we have the following result.
\begin{proposition}\label{scaling-mse:2}
	\normalfont
	The computation-optimal policy is a computation-effective one, and its average computation MSE has at least a decay rate of $O({1}/{\sqrt{K}})$ when $K\rightarrow \infty$.
\end{proposition}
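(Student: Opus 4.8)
The plan is to exploit the pointwise optimality of the computation-optimal policy together with the scaling law already established for first-$\imath$ policies in Proposition~\ref{scaling-mse}, exactly as hinted by the remark ``From \eqref{mse_optimal}\dots'' preceding the statement. First I would observe that any first-$\imath$ policy (Definition~\ref{def:policy}) is a \emph{feasible} solution of problem~\eqref{problem}: its Tx-scaling factors satisfy $b_k \in [0,\sqrt{P}]$ by \eqref{optimal_b}, and its Rx-scaling factor is chosen in $\mathcal{S}_i$. Since the computation-optimal policy of Theorem~\ref{theory:optimal} attains the \emph{global} minimum of the MSE for every realization of the channel gains $\{h_k\}$, its MSE $\mse^\star$ in \eqref{mse_optimal} satisfies $\mse^\star \leq \mse^{\imath}$ pointwise (i.e. for each channel realization), where $\mse^{\imath}$ is the MSE induced by any fixed first-$\imath$ policy. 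Taking expectation over the random channels and dividing by $K$ preserves this inequality, giving $\myexpectation{\mse^\star}/K \leq \myexpectation{\mse^{\imath}}/K$ for every choice of the critical-number function $\imath(\cdot)$.

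Next I would specialize to the critical-number function $\imath(K) = \lceil \sqrt{K}\,\rceil$, which obeys $\imath(K)\leq K$, $\imath(K) = o(K)$, $\liminf_{K\rightarrow\infty}\imath(K) = \infty$, and $\imath(K)\sim\sqrt{K}$. By Proposition~\ref{scaling-mse} (equivalently case 3) of Theorem~\ref{theory:first_i}), this particular first-$\imath$ policy is computation-effective and achieves the largest attainable decay rate, $\myexpectation{\mse^{\imath}}/K \sim 1/\sqrt{K}$. Combining this with the domination inequality from the previous paragraph yields $\myexpectation{\mse^\star}/K \leq \myexpectation{\mse^{\imath}}/K = O(1/\sqrt{K})$. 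Since the left-hand side is non-negative and bounded above by a quantity vanishing as $K\rightarrow\infty$, it converges to zero, so the computation-optimal policy is computation-effective, and its average computation MSE decays at least as fast as $1/\sqrt{K}$, which is the claim.

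This is essentially a domination (sandwich) argument, so the only point that needs care is verifying that the chosen $\sqrt{K}$-scaling first-$\imath$ policy is a genuine feasible competitor for the very optimization whose minimum the computation-optimal policy attains; this is where the \emph{per-realization} nature of the optimality in Theorem~\ref{theory:optimal} is essential, as it is what allows the inequality to be passed through the expectation. I do not expect a real obstacle here: the heavy analytic lifting has already been done in Theorem~\ref{theory:first_i}, and the present statement follows immediately by comparing the optimal policy against the explicit $\imath(K)\sim\sqrt{K}$ policy.
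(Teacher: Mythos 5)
Your proposal is correct and follows essentially the same route as the paper, which justifies the proposition in a single sentence by noting that the pointwise-optimal policy's MSE is dominated by that of any first-$\imath$ policy and then invoking the $\imath(K)\sim\sqrt{K}$ scaling law of Proposition~\ref{scaling-mse}. Your write-up merely makes explicit the feasibility of the first-$\imath$ competitor and the passage of the per-realization inequality through the expectation, both of which the paper leaves implicit.
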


\subsubsection{Scaling Law of Average Power Consumption}
From Definition~\ref{def:policy}, the power consumption of a first-$\imath$ policy is
\begin{equation} \label{Power}
\pw = \imath(K) P + \frac{1}{a^2} \sum_{k=\imath(K) +1}^{K}  \frac{1}{U_k},\ a\in\mathcal{S}_{\imath(K)}.
\end{equation}
We have the following result of its average power consumption.
\begin{theorem} \label{theory:power}
	\normalfont
For a first-$\imath$ policy, the average power consumption of the AirComp system has the following properties that
\begin{enumerate}
	\item if $\liminf_{K\rightarrow \infty} \imath(K)/K =1$,
	\begin{equation} \label{kkey_scaling2}
	\limsup\limits_{K\rightarrow \infty}\frac{\myexpectation{\pw}}{K} = P,
	\end{equation}
	\item if $\limsup_{K\rightarrow \infty} \imath(K)/K = c' >0$, 
	\begin{equation} \label{kkey_scaling1}
	\limsup\limits_{K\rightarrow \infty}\frac{\myexpectation{\pw}}{K} \geq c' P,
	\end{equation}
	\item if $\lim_{K\rightarrow \infty} \imath(K)/K = 0$, 	
	\begin{equation} \label{kkey_scaling}
	O\left(\frac{\imath(K)}{K}\right) \leq  \limsup\limits_{K\rightarrow \infty} \frac{\myexpectation{\mathsf{PW}}}{K} \leq O\left(\frac{\imath(K)\log(K)}{K}\right).
	\end{equation}
\end{enumerate}	
\end{theorem}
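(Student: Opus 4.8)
The plan is to read everything off the power identity \eqref{Power}, namely $\pw = \imath(K)P + \frac{1}{a^2}\sum_{k=\imath(K)+1}^{K}\frac{1}{U_k}$, together with the admissible range of the Rx-scaling factor in \eqref{a_inequal}, which (recalling $U_k=h_k^2$) translates into $P\,U_{\imath(K)} \le \frac{1}{a^2} < P\,U_{\imath(K)+1}$. Writing $i=\imath(K)$ for brevity, the first term contributes the deterministic quantity $iP$ and carries the whole argument in Cases~1 and 2, whereas the random second term must be controlled in expectation for Case~3. The main probabilistic tool is the R\'enyi (spacings) representation of exponential order statistics: $U_k \stackrel{d}{=}\sum_{j=1}^{k} D_j$ with independent $D_j\sim\mathrm{Exp}(K-j+1)$, so that $\myexpectation{U_{i+1}}=\sum_{j=1}^{i+1}\frac{1}{K-j+1}$ and, crucially, $U_k = U_{i+1}+W_k$ with $W_k=\sum_{j=i+2}^{k}D_j$ independent of $U_{i+1}$ for every $k> i+1$.

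Cases~1 and 2 follow from purely deterministic bounds. Since $\frac{U_{i+1}}{U_k}\le 1$ for $k\ge i+1$, the bound $\frac{1}{a^2}<P\,U_{i+1}$ gives $\frac{1}{a^2}\sum_{k=i+1}^{K}\frac{1}{U_k}\le P\sum_{k=i+1}^{K}\frac{U_{i+1}}{U_k}\le P(K-i)$, hence $iP\le \pw< PK$ pointwise, i.e. $\frac{iP}{K}\le\frac{\pw}{K}<P$. For Case~1, $\liminf_{K}\imath(K)/K=1$ forces $\imath(K)/K\to1$ (as $\imath(K)\le K$); taking expectations and letting $K\to\infty$ squeezes $\myexpectation{\pw}/K$ between $\frac{iP}{K}\to P$ and $P$, giving $\limsup_K\myexpectation{\pw}/K=P$. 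For Case~2 the single bound $\pw\ge iP$ yields $\myexpectation{\pw}/K\ge iP/K$, and taking $\limsup$ gives $\limsup_K\myexpectation{\pw}/K\ge c'P$.

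Case~3 is the substantive one. The lower bound $\myexpectation{\pw}/K\ge iP/K$, of order $i/K$, again follows from $\pw\ge iP$. For the matching upper bound of order $\frac{i\log K}{K}$ I would bound $\myexpectation{\pw}< iP+P\sum_{k=i+1}^{K}\myexpectation{U_{i+1}/U_k}$ and estimate the sum using the decomposition $U_k=U_{i+1}+W_k$. For $k\ge i+3$ I use $\frac{U_{i+1}}{U_k}\le\frac{U_{i+1}}{W_k}$ together with the independence $U_{i+1}\perp W_k$ to factor $\myexpectation{U_{i+1}/U_k}\le\myexpectation{U_{i+1}}\,\myexpectation{1/W_k}$; the two remaining terms $k\in\{i+1,i+2\}$ are bounded by $1$ each. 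Since $D_j\ge E_j/K$ with $E_j\sim\mathrm{Exp}(1)$ i.i.d., one has $W_k\ge\frac1K\sum_{j=i+2}^{k}E_j$, a $\frac1K$-scaled $\mathrm{Gamma}(k-i-1,1)$ variable, so $\myexpectation{1/W_k}\le \frac{K}{k-i-2}$ for $k\ge i+3$, while $\myexpectation{U_{i+1}}\le\frac{i+1}{K-i}$, which is of order $i/K$ because $i/K\to0$. Summing, $\sum_{k=i+3}^{K}\myexpectation{U_{i+1}/U_k}\le \frac{(i+1)K}{K-i}\sum_{m=1}^{K-i-2}\frac1m=O(i\log K)$, whence $\myexpectation{\pw}=O(i\log K)$ and $\myexpectation{\pw}/K=O(i\log K/K)$.

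The crux, and the only place where more than bookkeeping is needed, is controlling $\myexpectation{U_{i+1}\sum_{k=i+1}^{K}1/U_k}$, where the numerator and each denominator are strongly positively correlated, so a naive factorization is invalid and the crude bound $U_{i+1}/U_k\le1$ yields only the useless $O(K)$. The decomposition $U_k=U_{i+1}+W_k$ into an independent increment is what breaks this correlation and lets the factor $\myexpectation{U_{i+1}}$, of order $i/K$, be extracted; an equivalent route would invoke the Harris/FKG inequality for the increasing functional $U_{i+1}$ against the decreasing functional $1/U_k$ of the i.i.d. spacings. The second delicate point is the integrability of $\myexpectation{1/W_k}$ near the lower end of the sum: $1/W_k$ is integrable only once $W_k$ is a sum of at least two spacings, which is exactly why $k=i+1,i+2$ must be peeled off and bounded trivially, and why the identity $\myexpectation{1/S}=1/(m-1)$ for $S\sim\mathrm{Gamma}(m,1)$ (finite for $m\ge2$) is the right input. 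Everything else reduces to summation of harmonic-type series.
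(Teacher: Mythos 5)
Your proposal is correct, and its skeleton coincides with the paper's: Cases 1 and 2 are handled by the same deterministic bounds $\imath(K)P \le \pw < KP$ (the paper's lower bound in \eqref{upper2} plus $\pw/K \le P$), the Case-3 lower bound is identical, and your starting upper bound $\myexpectation{\pw}/K \le \tfrac{P}{K}\bigl(\imath(K)+\sum_{k=\imath(K)+1}^{K}\myexpectation{U_{\imath(K)+1}/U_k}\bigr)$ is exactly \eqref{upper2}. Where you genuinely diverge is the crux step of decoupling the correlated ratio $U_{\imath(K)+1}/U_k$. The paper does this via Lemma~\ref{lem:ratio}, i.e., a Cauchy--Schwarz bound combined with the stochastic sandwich \eqref{exp2}, so that second moments of Gamma and inverse-Gamma variables appear; you instead use the R\'enyi spacing decomposition $U_k = U_{\imath(K)+1}+W_k$ with $W_k$ \emph{independent} of $U_{\imath(K)+1}$, so the expectation factors exactly and only first moments are needed. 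Both routes end in the same harmonic-series estimate and the same $O(\imath(K)\log K/K)$ rate. Your route buys two things: (i) it avoids Cauchy--Schwarz altogether, a step the paper states imprecisely (as printed, the first inequality of \eqref{Holder} is not Cauchy--Schwarz; the displayed constants show that second moments were intended); and (ii) by peeling off the terms $k=\imath(K)+1,\imath(K)+2$ you explicitly resolve the integrability edge case that the paper's chain \eqref{cauchy2}--\eqref{harmonic} glosses over, since Lemma~\ref{lem:ratio} requires $j>2$ and the harmonic bound $\sum_k 1/(k-2)$ degenerates when $\imath(K)\le 1$, a case permitted under $\imath(K)/K\to 0$. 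What the paper's choice buys is reuse: Lemma~\ref{lem:ratio} also drives the MSE scaling proof of Theorem~\ref{theory:first_i} in Appendix B, whereas your decomposition is tailored to the power-consumption bound; your alternative FKG remark is a valid substitute for the independence argument but is not needed given the exact spacing structure.
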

\begin{proof}
	See Appendix C.
\end{proof}

From \eqref{kkey_scaling}, a first-$\imath$ policy is energy-efficient as long as the critical number function $\imath(K)$ has a lower divergence rate of $K/\log(K)$, and the decay rate of the average power consumption is no larger than $O\left(\imath(K)/K\right)$.
Using Proposition~\ref{scaling-mse} and Theorem~\ref{theory:power}, the following result can be obtained directly.
\begin{proposition} \label{prop:sqrtK}
	\normalfont	
The computation-effective first-$\imath$ policy achieving the minimum average computation MSE, i.e., $\imath(K)\sim \sqrt{K}$ when $K\rightarrow\infty$, is also energy-efficient, and its average power consumption has a decay rate between $O\left(1/\sqrt{K}\right)$ and $O\left(\log(K)/\sqrt{K}\right)$.
\end{proposition}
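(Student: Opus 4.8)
The plan is to obtain the result as a direct consequence of Proposition~\ref{scaling-mse} and Theorem~\ref{theory:power}, since the policy with $\imath(K)\sim\sqrt{K}$ falls squarely within the regimes those results already cover.

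First I would verify computation-effectiveness via Proposition~\ref{scaling-mse}, whose criterion is that a first-$\imath$ policy is computation-effective iff $\imath(K)=o(K)$ and $\liminf_{K\to\infty}\imath(K)\to\infty$. For $\imath(K)\sim\sqrt{K}$ both conditions hold at once: $\sqrt{K}/K=1/\sqrt{K}\to 0$ yields $\imath(K)=o(K)$, and $\sqrt{K}\to\infty$ yields the required divergence. Proposition~\ref{scaling-mse} then also supplies the matching computation-MSE decay $\myexpectation{\mse}/K\sim 1/\sqrt{K}$, so the policy is indeed the computation-effective first-$\imath$ policy of smallest average computation MSE.

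Next I would treat the power consumption. Because $\imath(K)\sim\sqrt{K}$ forces $\imath(K)/K\to 0$, the policy satisfies the hypothesis $\lim_{K\to\infty}\imath(K)/K=0$ of case~3) of Theorem~\ref{theory:power}. Substituting $\imath(K)\sim\sqrt{K}$ into its two-sided bound, the lower bound becomes $O(\imath(K)/K)=O(\sqrt{K}/K)=O(1/\sqrt{K})$ and the upper bound becomes $O(\imath(K)\log K/K)=O(\sqrt{K}\log K/K)=O(\log K/\sqrt{K})$, which is precisely the claimed decay range. The one point needing a word of care is that the asymptotic relation $\imath(K)\sim\sqrt{K}$ must be inserted correctly into the order symbols of Theorem~\ref{theory:power}; this is legitimate by the paper's definition of $\sim$, namely $\imath(K)=O(\sqrt{K})$ together with $\sqrt{K}=O(\imath(K))$, which permits replacing $\imath(K)$ by $\sqrt{K}$ inside the $O(\cdot)$ bounds.

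Finally, since the upper bound $O(\log K/\sqrt{K})\to 0$ as $K\to\infty$, we get $\lim_{K\to\infty}\myexpectation{\pw}/K=0$, so the policy is energy-efficient in the sense of Definition~\ref{def:performance-effective}, which finishes the argument. I do not anticipate a genuine obstacle: the whole content is the single observation that $\imath(K)\sim\sqrt{K}$ simultaneously meets the $o(K)$-with-divergence condition of Proposition~\ref{scaling-mse} and the $\imath(K)/K\to 0$ regime of Theorem~\ref{theory:power}, so the two results combine without requiring any new estimate.
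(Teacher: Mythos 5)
Your proposal is correct and follows exactly the route the paper itself takes: the paper states Proposition~\ref{prop:sqrtK} as a direct consequence of Proposition~\ref{scaling-mse} (for computation-effectiveness) and case~3) of Theorem~\ref{theory:power} (for the power-consumption decay bounds), which is precisely your argument, only with the substitution steps spelled out. No gaps; the verification that $\imath(K)\sim\sqrt{K}$ satisfies both hypotheses and the insertion of $\sqrt{K}$ into the $O(\cdot)$ bounds are exactly the "directly obtained" content the paper leaves implicit.
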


Note that the scaling-law results of first-$\imath$ policies for average power consumption cannot provide insights directly into that of the computation-optimal policy. This is because the power consumption of the computation-optimal policy in \eqref{power} relies on the optimal critical number, which is determined by the sequence $\{g_i\}$ in \eqref{g_i}, and the statistics of the optimal critical number is difficult to analyze. 
Nevertheless, we numerically show that this policy is also an energy-efficient one in the following section.
Therefore, by using the results obtained in this section, the computation effectiveness and energy efficiency of benchmark policies 1 and 2, computation-optimal policy, and two first-$\imath$ policies, i.e., $\imath(K)=\max\{1,\lfloor \sqrt{K} \rfloor\}$ and $\imath(K)=\max\{1,\lfloor {K}/2 \rfloor\}$ are summarized in Table~\ref{tab:1}. 
\begin{table*}[t]
	\centering
	\renewcommand{\captionfont}{\small} \renewcommand{\captionlabelfont}{\small}		
	\caption{Computation effectiveness and energy efficiency of AirComp Policies.}
	\begin{tabular}{|l|c|c|}
		\hline
		& Computation-Effective Policy & Energy-Efficient Policy\\ \hline\hline
		Benchmark Policy 1~\cite{GuangxuMIMO,Xiaoyang} & \xmark  &  \cmark \\ \hline
		Benchmark Policy 2 & \xmark & \xmark\\ \hline
		Computation-Optimal Policy & \cmark & \cmark\\ \hline
		First-$\imath$ Policy with $\imath(K)=\sqrt{K}$ & \cmark  & \cmark \\ \hline
		First-$\imath$ Policy with $\imath(K)={K}/2$ & \xmark & \xmark \\ \hline
	\end{tabular}
	\label{tab:1}
	\vspace{-0.8cm}
\end{table*}

\vspace{-0.5cm}
\section{Numerical Results} \label{sec:num}
In this section, we numerically evaluate the average computation MSE and the average power consumption of the five Tx-Rx scaling policies of the AirComp system in Table~\ref{tab:1}.
{Recall that benchmark policy 1 is the channel-inversion policy which has been commonly considered in the literature of AirComp~\cite{GuangxuMIMO,Xiaoyang}, and benchmark policy 2 is the energy-greedy policy defined in Sec.~\ref{sec:benchmark}.}
Note that for the first-$\imath$ policies, the Rx-scaling factor is chosen as $a = \left(1/h_{\imath(K)+1}+1/h_{\imath(K)}\right)/(2\sqrt{P}) \in \mathcal{S}_{\imath(K)}$.
{Unless otherwise stated, we set the number of sensors $K=10$, the sensor's transmission power limit $P=10$, the receiving noise power $\sigma^2=1$, and the channel coefficients between the sensors and the receiver follow the i.i.d. Rayleigh distribution with unit variance, i.e., if a sensor transmits signal with peak power, the average received SNR is $10$~dB.}
The average computation MSE, $\myexpectation{\mse}/K$, and the average power consumption, $\myexpectation{\pw}/K$, induced by different policies are evaluated by using Monte Carlo simulation with $10^6$ random channel realizations for calculating the average of ${\mse}/K$ and ${\pw}/K$ based on \eqref{MSE} and \eqref{def_pw}, respectively.
Also, the standard deviations of $\mse/K$ and $\pw/K$ are evaluated as the confidence intervals of the average computation MSE and the average power consumption, respectively.
For the computation-optimal policy, we also evaluate the average and the standard deviation of its critical number using Monte Carlo simulation with $10^6$ points.

In Fig.~\ref{fig:i}, using Theorem~\ref{theory:optimal}, we plot the average critical number of the computation-optimal policy, $\myexpectation{i^\star}$, and the confidence region of $i^\star$ with different number of sensors $K$.
We see both the average and the standard deviation of the  critical number monotonically increase with $K$.
Also, we plot the critical number of two first-$\imath$ policies with $\imath(K) =\sqrt{K}$ and ${K}^{1/3}$, respectively.
It shows that the scaling law of the average critical number of the computation-optimal policy has the properties that $\myexpectation{i^\star}>{K}^{1/3}$ and $\myexpectation{i^\star} < \sqrt{K}$, when $K>10$.

\begin{figure}[t]
	\renewcommand{\captionfont}{\small} \renewcommand{\captionlabelfont}{\small}	
	\centering
	\includegraphics[scale=0.6]{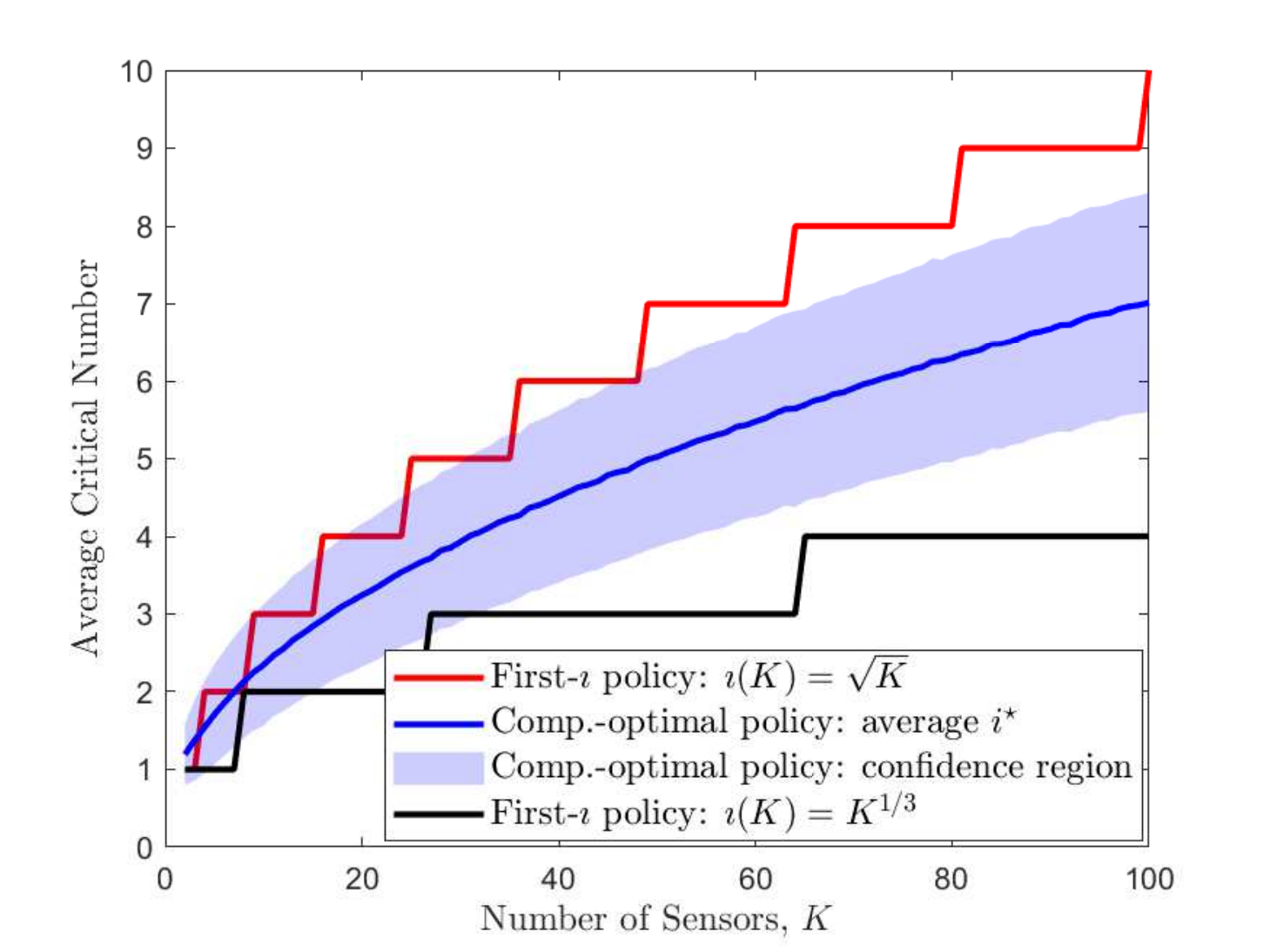}
	\vspace{-0.5cm}
	\caption{The average critical number of the computation-optimal policy versus the number of sensors.}
	\label{fig:i}
	\vspace{-0.5cm}	
\end{figure}

\begin{figure*}[t]
	\renewcommand{\captionfont}{\small} \renewcommand{\captionlabelfont}{\small}
	\minipage{0.5\textwidth}
	\centering
	\includegraphics[scale=0.6]{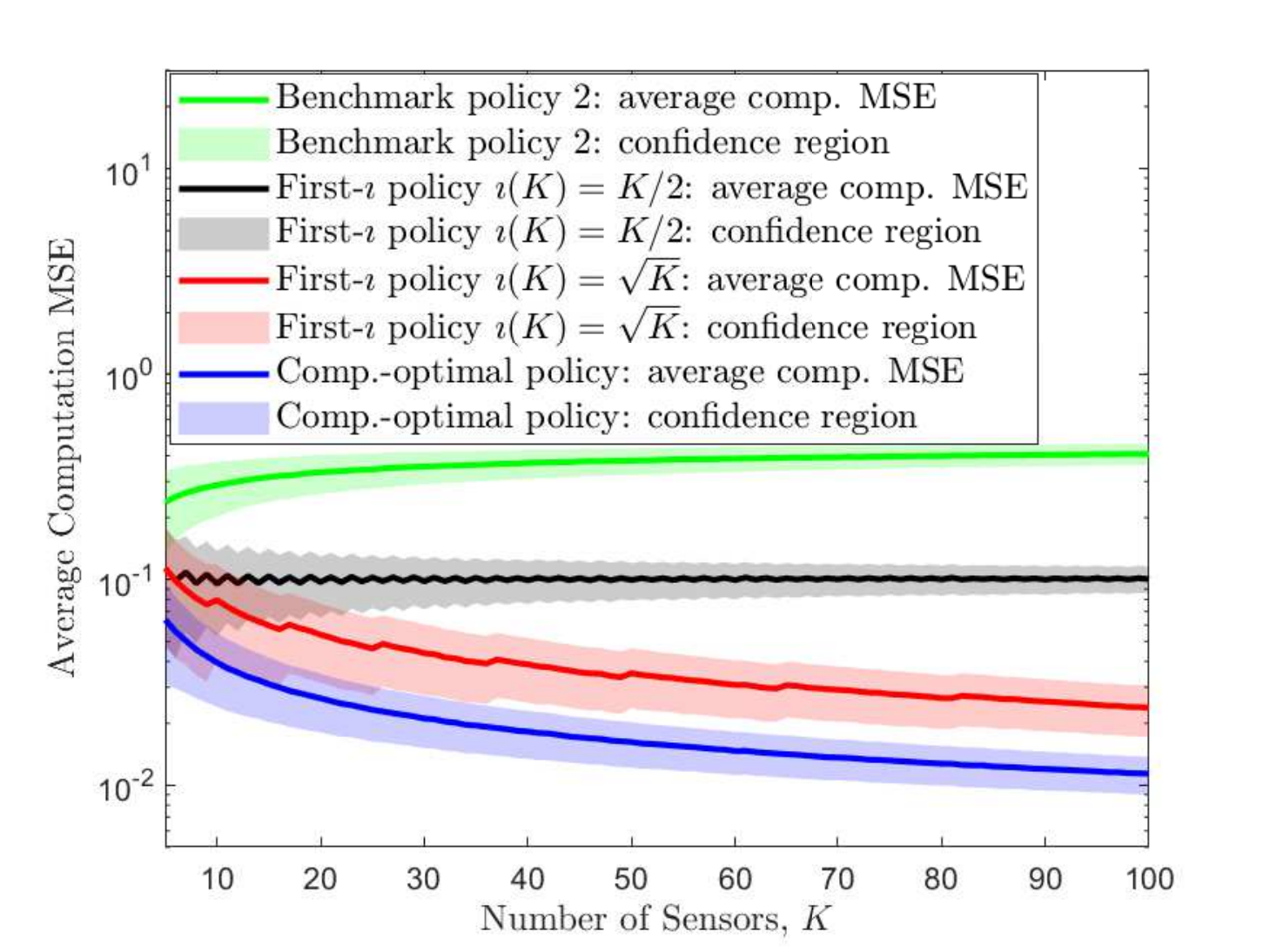}
	\vspace{-1.2cm}
	\caption{The average computation MSE versus $K$.}
	\vspace{-0.5cm}
	\label{fig:MSE}
	\endminipage
	\minipage{0.5\textwidth}
	\centering
	\includegraphics[scale=0.6]{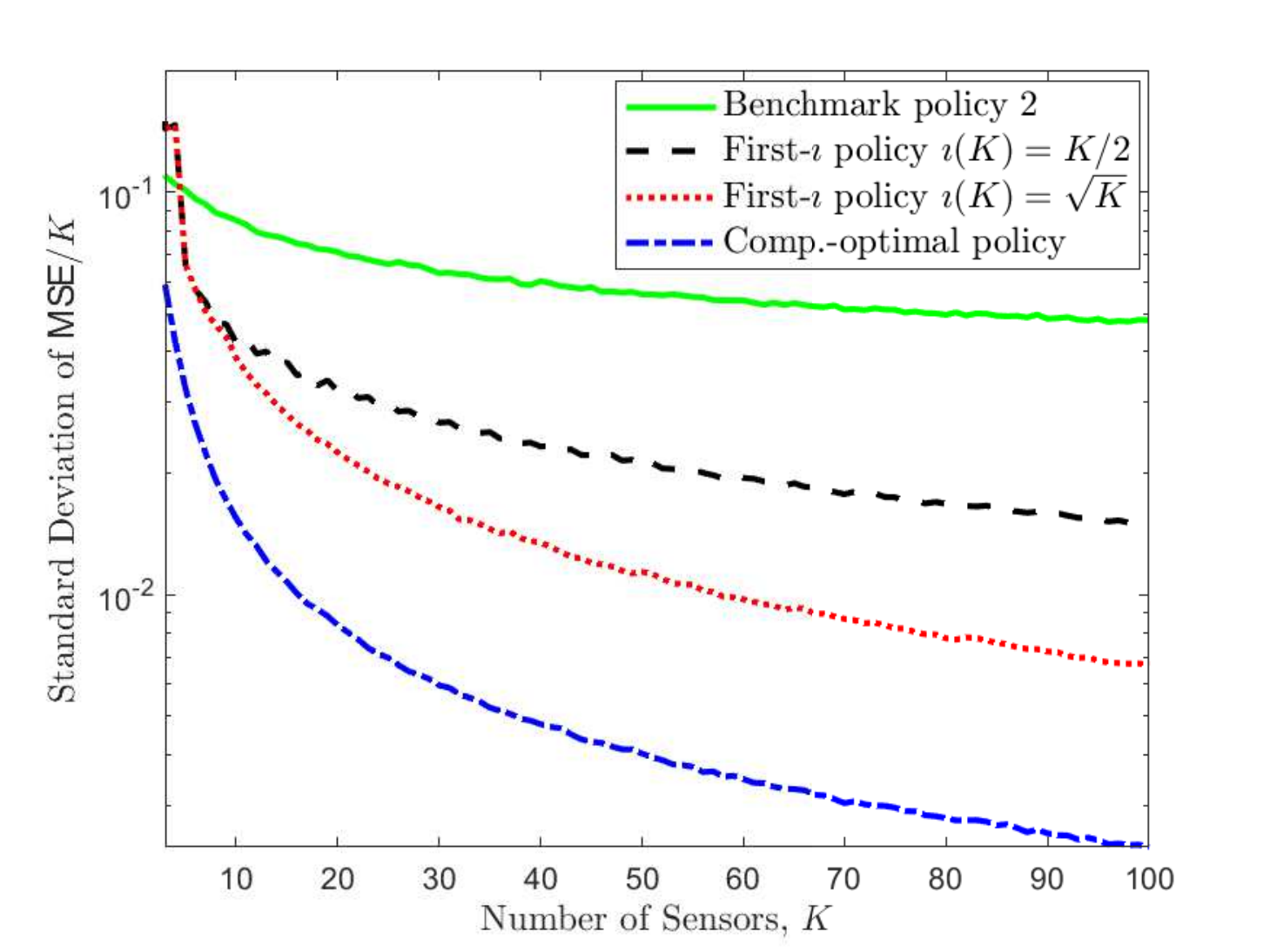}
	\vspace{-1.2cm}
	\caption{The standard deviation of $\mse/K$ versus $K$.}
	\vspace{-0.5cm}
	\label{fig:MSE_deviation}
	\endminipage
	\vspace*{-0.5cm}
\end{figure*}

{In Figs.~\ref{fig:MSE} and~\ref{fig:MSE_deviation}, we plot the average computation MSE of the AirComp system and the standard deviation of $\mse/K$, respectively, with different policies in Table~\ref{tab:1}, excluding benchmark policy 1, which has an infinite average computation MSE.}
We see that the computation-optimal policy has a remarkably lower average computation MSE than the other policies, and the  policy with $\imath(K)=\sqrt{K}$ is better than the one with $\imath(K)={K}/2$, which is better than benchmark policy 2.
Also, it can be observed that both the policy with $\imath(K)=\sqrt{K}$ and the computation-optimal policy have average computation MSEs approaching to zero with the increasing $K$, which verifies Propositions~\ref{scaling-mse} and \ref{scaling-mse:2}, respectively. However, benchmark policy 2 and the policy with $\imath(K)=K/2$ have average computation MSEs bounded away from zero and converge to $0.35$ and $0.15$, respectively, which are in line with Corollary~\ref{cor:benchmark2-1} and Theorem~\ref{theory:first_i}.
Nevertheless, all these four policies have diminishing standard deviations of $\mse/K$ with the increasing $K$, which means that $\mse/K$ convergences to average computation MSE in probability when $K\rightarrow \infty$. Also, it is interesting to see that the policy with a lower average computation MSE has a smaller standard deviation when $K>10$.

%
%
%
%
%

\begin{figure*}[t]
	\renewcommand{\captionfont}{\small} \renewcommand{\captionlabelfont}{\small}
	\minipage{0.5\textwidth}
	\centering
	\includegraphics[scale=0.6]{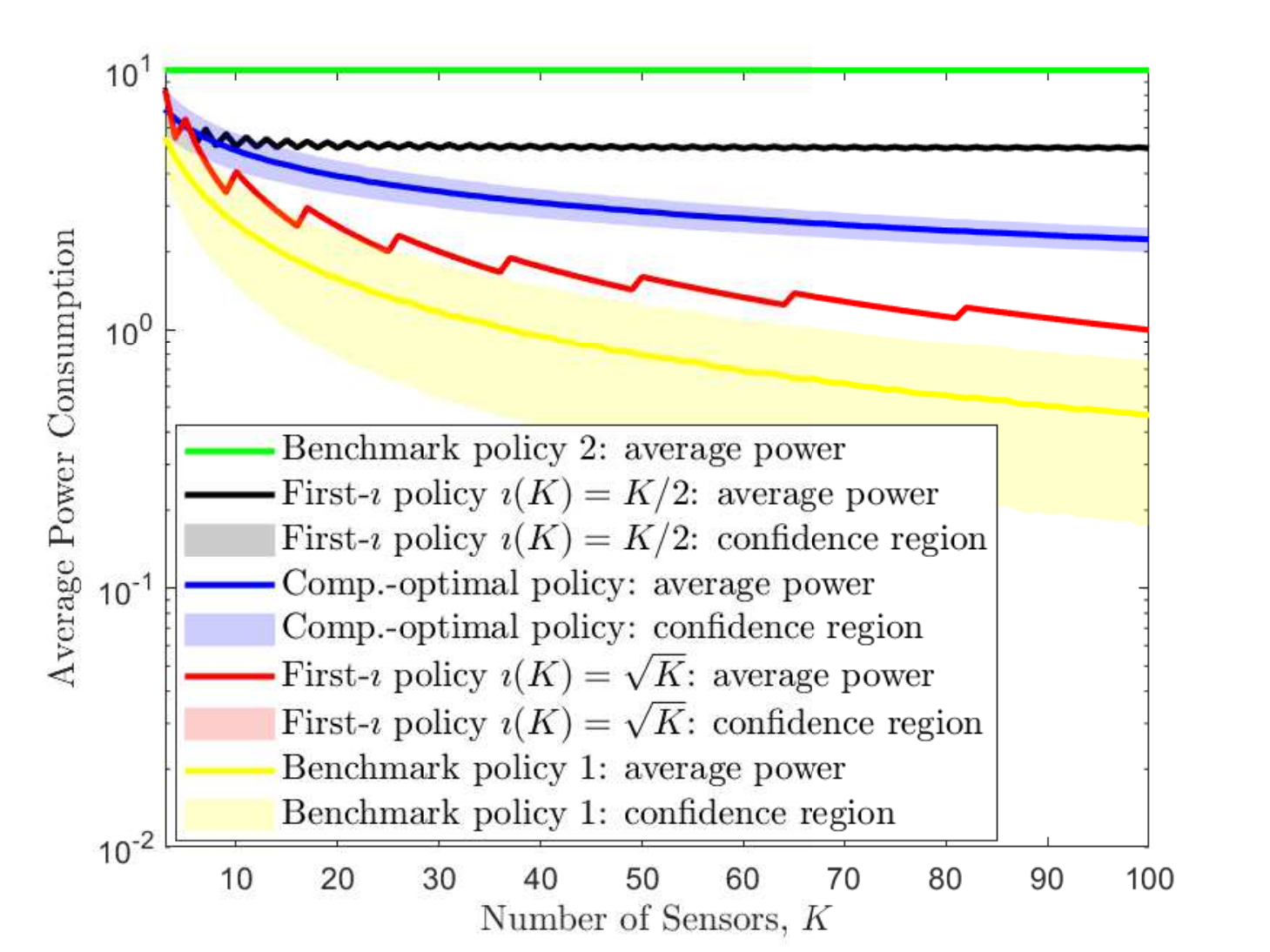}
	\vspace{-1.1cm}
	\caption{The average power consumption versus  $K$.}
	\vspace{0.6cm}
	\label{fig:power}
	\endminipage
	\minipage{0.5\textwidth}
	\centering
	\includegraphics[scale=0.6]{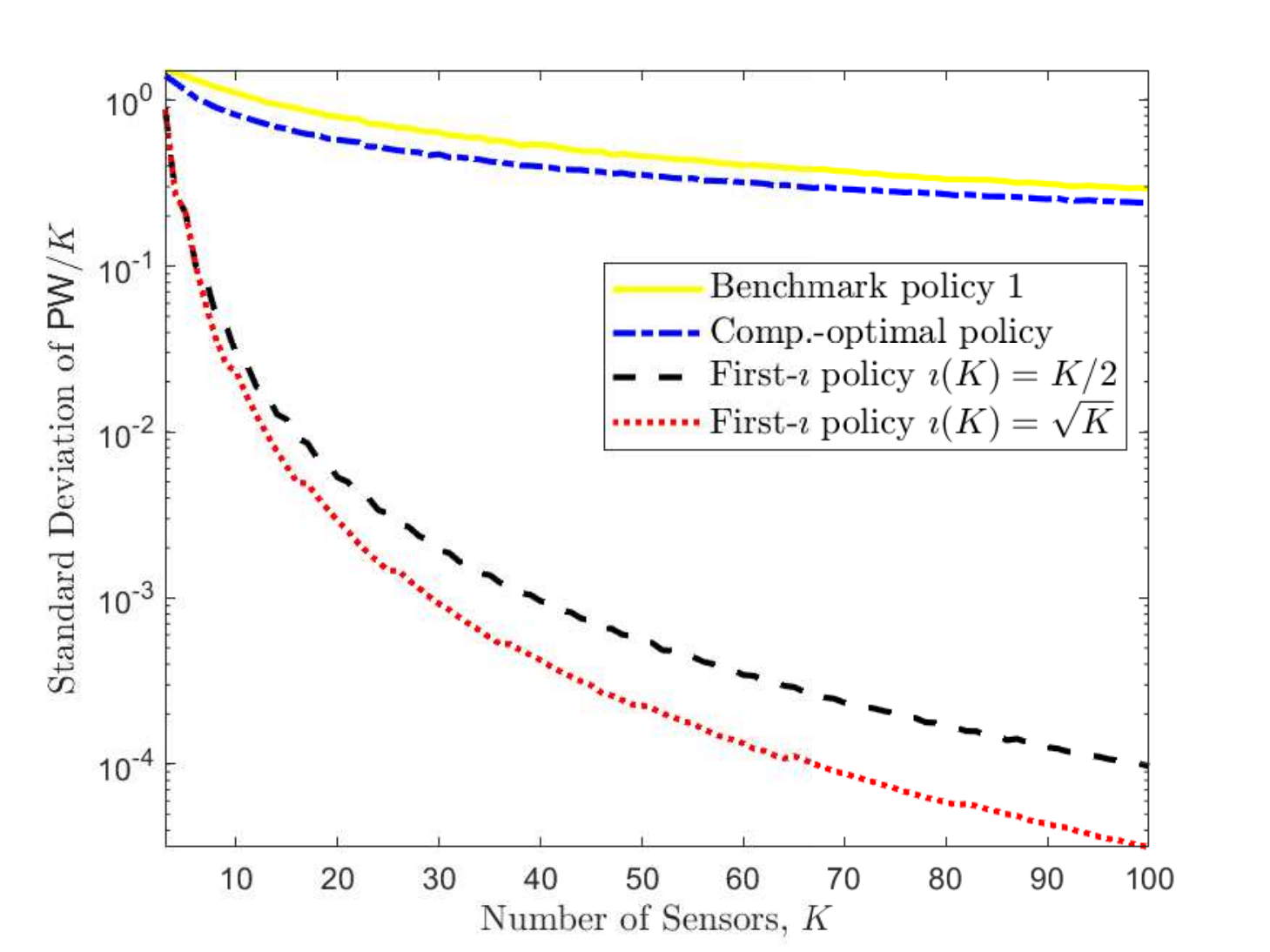}
	\vspace{-1.1cm}
	\caption{The standard deviation of ${\pw}/K$ versus $K$, where benchmark policy 2 which has zero standard deviation of ${\pw}/K$, is not included in the logarithmic-scale plot.}
	\vspace{-0.5cm}
	\label{fig:power_deviation}
	\endminipage
	\vspace*{-0.5cm}
\end{figure*}

In Figs.~\ref{fig:power} and~\ref{fig:power_deviation}, we plot the average power consumption of the AirComp system, $\myexpectation{\pw}/K$ and the standard deviation of $\pw/K$, respectively, with different policies in Table~\ref{tab:1}.
{Note that the confidence regions (standard derivations) of the first-$\imath$ policies (i.e., the red and black lines) in Fig.~\ref{fig:power} is too narrow to be visible, and the standard derivations have been clearly illustrated as the red and black dashed lines in Fig.~\ref{fig:power_deviation}.}
We see that benchmark policies 1 and 2 have the lowest and the highest power consumption, respectively.
The policy with $\imath(K)=\sqrt{K}$ and the computation-optimal policy both have average power consumption approaching to zero with the increasing $K$, which is in line with Propositions~\ref{prop:sqrtK}, and the former has a lower power consumption than the latter.
Comparing Fig.~\ref{fig:power} with Fig.~\ref{fig:MSE}, the computation-optimal policy has a better computation performance but a higher power consumption than the policy with $\imath(K)=\sqrt{K}$, which again shows the design tradeoff between computation effectiveness and energy efficiency.
Also, we see the average power consumption of the policy with $\imath(K)=K/2$ converges to $5$, which is greater than $P/3=3.3$, which is in line with Theorem~\ref{theory:power}.

From Fig.~\ref{fig:power_deviation}, it can be observed that all the polices have diminished standard deviations of $\pw/K$ with the increasing $K$, which means that $\pw/K$ convergences to average power consumption in probability when $K\rightarrow \infty$.
It is interesting to see that the standard deviations of the first-$\imath$ policies are much smaller than that of the computation-optimal policy. 
This is mainly because the critical number is deterministic for the former and is stochastic for the latter. Recall that for a first-$\imath$ policy, the critical number does not rely on the random channel realizations, while the critical number of the computation-optimal policy heavily relies on the channel realizations, and thus has a large variance as shown in Fig.~\ref{fig:i}.

\begin{figure*}[t]
	\renewcommand{\captionfont}{\small} \renewcommand{\captionlabelfont}{\small}
	\minipage{0.5\textwidth}
	\centering
	\includegraphics[scale=0.6]{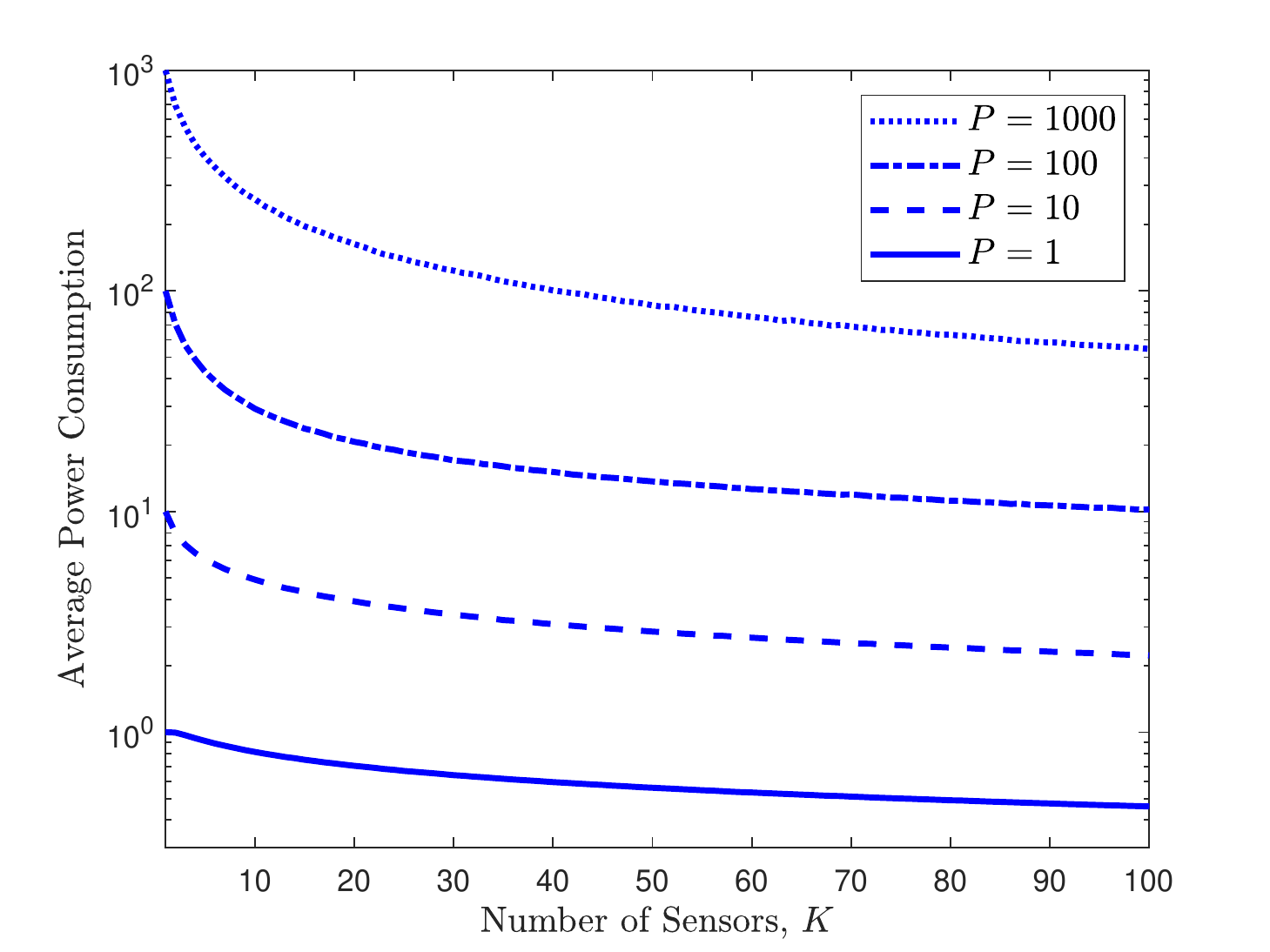}
	\vspace{-1.2cm}
	\caption{The average critical number of the computation-optimal policy versus the number of sensors.}
	\label{fig:compare}
	\endminipage
	\hspace{0.1cm}
	\minipage{0.5\textwidth}
	\centering
	\includegraphics[scale=0.6]{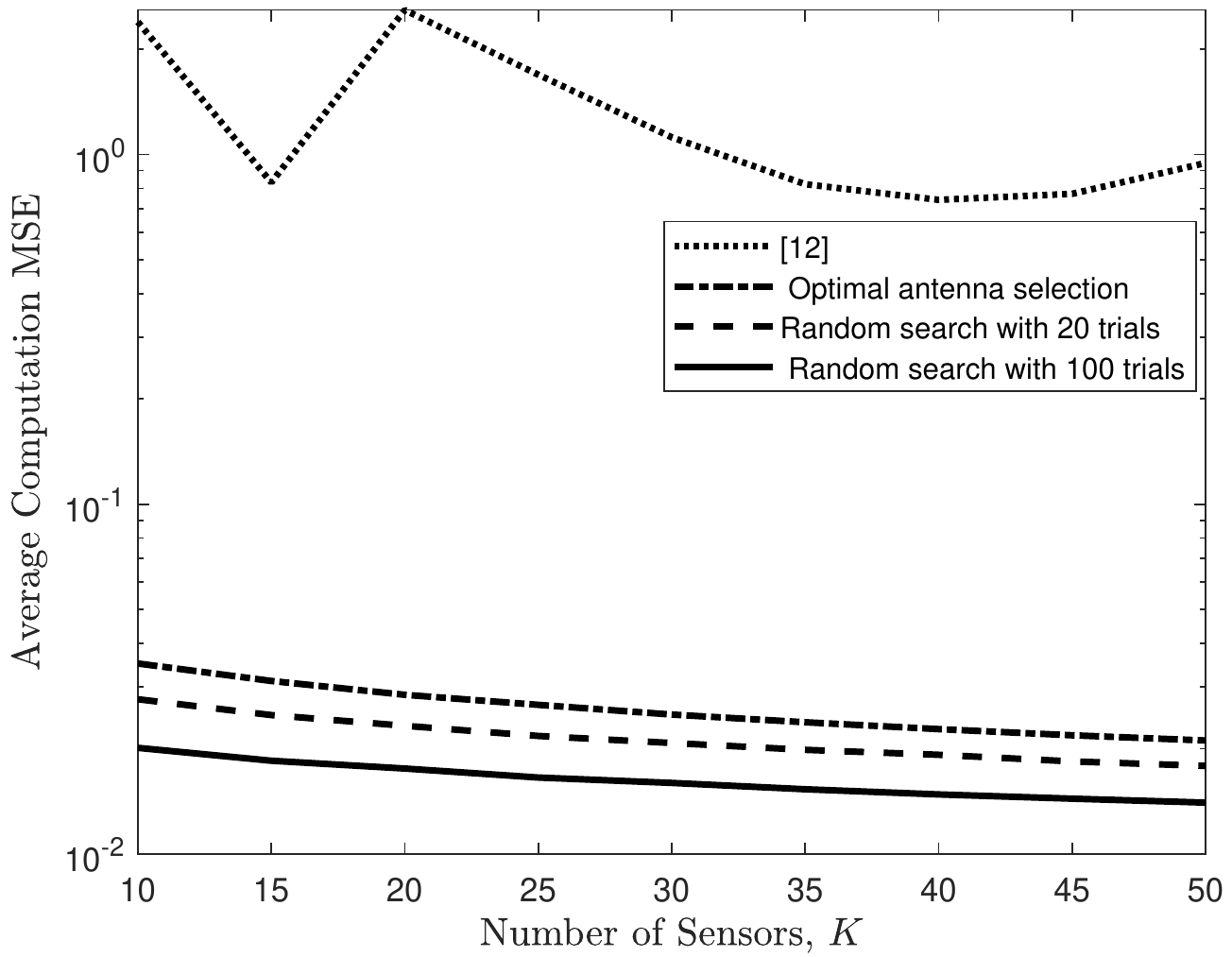}
	\vspace{-0.2cm}
	\caption{Multi-antenna receiver case: the average power consumption versus  $K$, where $N=8$.}
	\label{fig:last}
	\endminipage
	\vspace*{-0.9cm}
\end{figure*}

Fig.~\ref{fig:compare} shows the average power consumption of the computation-optimal policy with different transmission-power limits. We see that the average power consumption in all different cases decays to zero with the increasing number of sensors, which verifies that the policy is an energy-efficient one. 


{In Fig.~\ref{fig:last}, we evaluate the average computation MSE achieved in the multi-antenna receiver case with different policies, including the policy given in~\cite{GuangxuMIMO}, the optimal antenna-selection policy and the random search of unit vector policy given in Sec.~\ref{sec:multi}. The number of receiver antenna is $N=8$.
We see that the proposed optimal antenna-selection policy can achieve at least a $40$ times lower $\mse$ than that of the method in~\cite{GuangxuMIMO}.
As expected, the MSE reduces with the increasing random search trials.
It is important to see that similar to the single-antenna case, the the average computation MSEs achieved by the optimal antenna selection and the random search policies monotonically decrease with the increasing number of sensors in the multi-antenna receiver case.
The fluctuations of the policy~\cite{GuangxuMIMO} imply that the average computation MSE of the policy does not exist in the Rayleigh fading scenario.}


\vspace{-0.5cm}
\section{Concluding Remarks}
In this work, we have derived the computation-optimal policy of the AirComp system, compared the AirComp system with the traditional MAC system, and investigated the ergodic performance of the AirComp system with different Tx-Rx scaling policies in terms of the number of sensors.
Our results have shown that the computation-optimal policy has a vanishing average computation MSE and a vanishing average power consumption with the increasing number of sensors.
By comparing the performance of the computation-optimal policy with that of the proposed first-$\imath$ policies, it reveals a design tradeoff between computation effectiveness and energy efficiency, which is very important for AirComp-system implementation with practical constraints of computation accuracy and energy consumption. {Inspired by such a tradeoff in AirComp, we will investigate energy-efficiency optimization problems in our future work.}

\setcounter{equation}{0}
\renewcommand\theequation{A.\arabic{equation}}
\vspace{-0.3cm}
\section*{Appendix A: Proof of Corollary~\ref{cory:2}}
The joint distribution function $f(u_1,u_2,\cdots,u_K )$ can be rewritten as
\begin{equation}\label{app:1}
f(u_1,u_2,\cdots,u_K )
=f(u_2,\cdots,u_K \vert u_1) f(u_1).
\end{equation}
Since the average power in \eqref{ave_power_1} does not rely on the order of the largest $K-1$ channel power gains, $U_k, k>1$, in the rest of the proof, we treat $\{U_k\}$ as $K$ independent exponential random variables that $U_k, \forall k>1$, is no smaller than $U_1$.
In this sense, the conditional joint distribution in \eqref{app:1} can be rewritten as
\vspace{-0.5cm}
\begin{equation}
\begin{aligned}
&f(u_2,\cdots,u_K \vert u_1, \Xi_1)
\!= f(u_2 \vert u_1, \Xi_1)\cdots f(u_K \vert u_1, \Xi_1)
\!=\!\left\lbrace
\begin{aligned}
&e^{(K-1)u_1 - (u_2+\cdots+u_K)},&&\!\!u_2,\cdots,u_K>u_1\\
&0, &&\!\!\text{else},
\end{aligned}
\right.
\end{aligned}
\end{equation}
where $\Xi_1$ is the event that $U_1\leq U_2,\cdots,U_K$, and $f(u_1)$ given in \eqref{pdf_1} can be denoted as $f(u_1\vert \Xi_1)$. 

Then, it can be obtained that 
\begin{align}\label{temp_result}
&\myexpectation{\sum_{k=2}^{K} \frac{U_1}{U_k} \bigg\vert \Xi_1}
=\frac{K \ln K - (K-1)}{K-1}.
\end{align}

Corollary~\ref{cory:2} is obtained by taking \eqref{temp_result} into \eqref{ave_power_1}.

\section*{Appendix B: Proof of Theorem~\ref{theory:first_i}}
From \eqref{upper_bound}, an upper bound of the average computation MSE is derived~as
\begin{align}\label{upp_upp}
\frac{\myexpectation{\mse}}{K} 
&\leq \frac{\imath(K)}{K}+ \frac{\sigma^2 }{P K} \myexpectation{\frac{1}{U_{\imath(K)}}}
= \frac{\imath(K)}{K}+ \frac{\sigma^2 }{P K} \myexpectation{\frac{1}{\sum_{j=1}^{\imath(K)} \frac{Z_j}{K-j+1}}}\\\label{inverse_gamma}
&\leq \frac{\imath(K)}{K}+ \frac{\sigma^2 }{P K} \myexpectation{\frac{K}{\sum_{j=1}^{\imath(K)} {Z_j}}}
= \frac{\imath(K)}{K}+ \frac{\sigma^2 }{P} \frac{1}{\imath(K)-1},
\end{align}
where the equality in \eqref{upp_upp} is due to the property that for $K$ random samples from an exponential distribution with parameter $1$, the order statistics $U_i$ for $i = 1,2,3, \cdots, K$ each has the distribution
$
U_i \stackrel{d}{=} \sum_{j=1}^{i} \frac{Z_j}{K-j+1}
$~\cite{orderstatistics},
and $Z_j,j=1,\cdots,i$ are i.i.d. standard exponential random variables.
The equality in \eqref{inverse_gamma} is due to the fact that  ${\frac{1}{\sum_{j=1}^{\imath(K)} {Z_j}}}$ follows inverse gamma distribution with mean $\frac{1}{\imath(K)-1}$.

From \eqref{low_bound}, a lower bound of the average computation MSE is derived~as
\begin{align}\label{low_low}
\frac{\myexpectation{\mse}}{K} 
&\geq \frac{1}{K}\sum_{k=1}^{\imath(K)} \myexpectation{ \left(\frac{h_k}{h_{\imath(K)}} -1\right)^2}
+
\myexpectation{\frac{\sigma^2 }{P K} \frac{1}{U_{\imath(K)+1}}}.
\end{align}
Then, we derive the lower bounds of the first and second terms on the right-hand side (RHS) of \eqref{low_low} using the following technical lemma.
\setcounter{section}{3}
\setcounter{lemma}{0}
\begin{lemma}\label{lem:ratio}
	\normalfont
	Let $X_1,X_2,\cdots,X_K$ be a random sample from the standard exponential distribution, and let $X_{(1)},X_{(2)},\cdots,X_{(K)}$ denote
	the order statistics obtained from this sample. The expectation of the ratio $\frac{X_{(i)}}{X_{(j)}}$ has the inequality
	$
	\myexpectation{\frac{X_{(i)}}{X_{(j)}}} 
	< \frac{(i+1)}{(K-i+1)}\frac{K}{(j-2)}, \forall j>2.
	$
\end{lemma}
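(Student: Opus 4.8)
The plan is to sidestep the strong positive correlation between $X_{(i)}$ and $X_{(j)}$ by passing to the same order-statistics representation that already appears in Appendix~B: on a common probability space one may write $X_{(m)} \stackrel{d}{=} \sum_{\ell=1}^{m} \frac{Z_\ell}{K-\ell+1}$ for all $m$ simultaneously, where $Z_1,Z_2,\dots$ are i.i.d. standard exponentials. Because this representation couples every order statistic to the same increments $\{Z_\ell\}$, the ratio $X_{(i)}/X_{(j)}$ becomes an explicit functional of the partial sums $S_m\triangleq\sum_{\ell=1}^{m}Z_\ell$, which are Gamma-distributed and whose overlapping structure is easy to exploit. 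I will treat the relevant regime $i\le j$ (the only one used in Appendix~B, where the ratio is $X_{(k)}/X_{(\imath(K))}$ with $k\le \imath(K)$); for $i=j$ the ratio is identically $1$ and the claimed bound is trivially satisfied.

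First I would bound numerator and denominator separately but pointwise on this common space. Since $K-\ell+1\ge K-i+1$ for $\ell\le i$, we get $X_{(i)}\le S_i/(K-i+1)$; since $K-\ell+1\le K$ for all $\ell$, we get $X_{(j)}\ge S_j/K$. Dividing yields the pointwise inequality $\frac{X_{(i)}}{X_{(j)}}\le \frac{K}{K-i+1}\cdot\frac{S_i}{S_j}$, which survives taking expectations. The problem thus reduces to controlling $\myexpectation{S_i/S_j}$, and here the correlation has been tamed: writing $S_j=S_i+T$ with $T\triangleq\sum_{\ell=i+1}^{j}Z_\ell$ independent of $S_i$, we have $S_i\sim\mathrm{Gamma}(i,1)$ and $T\sim\mathrm{Gamma}(j-i,1)$, so $S_i/S_j=S_i/(S_i+T)$ is a $\mathrm{Beta}(i,j-i)$ variable with mean $i/j$.

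Combining these gives $\myexpectation{X_{(i)}/X_{(j)}}\le \frac{K}{K-i+1}\cdot\frac{i}{j}$, a bound valid for every $j>i\ge1$. Since $\frac{i}{j}<\frac{i+1}{j-2}$ whenever $j>2$ (equivalently $-2i<j$, which always holds), this immediately yields the stated $\myexpectation{X_{(i)}/X_{(j)}}<\frac{(i+1)}{(K-i+1)}\frac{K}{j-2}$, the strict inequality being inherited from the constant comparison. I expect the only genuinely delicate point to be the handling of the correlation between $X_{(i)}$ and $X_{(j)}$: a naive factorization $\myexpectation{X_{(i)}/X_{(j)}}=\myexpectation{X_{(i)}}\,\myexpectation{1/X_{(j)}}$ is false, and the whole argument hinges on the coupling above, which reduces the ratio to the Gamma/Beta pair $S_i/S_j$ with an exactly known mean. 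As an alternative that avoids even naming the Beta distribution, one can invoke the association (FKG) of order statistics: $S_i$ is increasing and $1/S_j$ decreasing in the $Z_\ell$'s, so $\myexpectation{S_i/S_j}\le \myexpectation{S_i}\,\myexpectation{1/S_j}=i/(j-1)$, again tighter than required.
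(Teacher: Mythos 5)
Your proof is correct, and it takes a genuinely different route from the paper's. Both arguments rest on the R\'enyi representation $X_{(m)} \stackrel{d}{=} \sum_{\ell=1}^{m} Z_\ell/(K-\ell+1)$, but the paper decouples numerator and denominator \emph{in expectation}: it applies Cauchy--Schwarz, $\myexpectation{X_{(i)}/X_{(j)}} \le \sqrt{\myexpectation{X_{(i)}^2}\,\myexpectation{1/X_{(j)}^2}}$, and then bounds the two second moments separately via the stochastic orderings $X_{(i)} \preceq S_i/(K-i+1)$ and $1/X_{(j)} \preceq K/S_j$, which give the $\mathrm{Gamma}$ and inverse-$\mathrm{Gamma}$ second moments $i(i+1)/(K-i+1)^2$ and $K^2/\big((j-1)(j-2)\big)$; there the condition $j>2$ is forced by finiteness of the inverse-Gamma second moment. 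You instead decouple \emph{pathwise}: using the fact that the R\'enyi representation holds jointly for all order statistics on one probability space, you obtain the almost-sure bound $X_{(i)}/X_{(j)} \le \frac{K}{K-i+1}\,S_i/S_j$, and then use the exact $\mathrm{Beta}(i,j-i)$ law of $S_i/S_j$, with mean $i/j$. This buys you several things: no Cauchy--Schwarz and no second moments are needed, so $j>2$ enters only through the harmless constant comparison $i/j<(i+1)/(j-2)$; your intermediate bound $\frac{iK}{j(K-i+1)}$ is strictly tighter than the paper's $\frac{\sqrt{i(i+1)}\,K}{\sqrt{(j-1)(j-2)}\,(K-i+1)}$; and you sidestep the paper's display \eqref{Holder}, whose middle term is written without squares (as $\sqrt{\myexpectation{X_{(i)}}\myexpectation{1/X_{(j)}}}$), which is not what Cauchy--Schwarz yields and only matches the subsequent expression once the squares are restored. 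Your fallback via the Harris/FKG inequality ($S_i$ coordinatewise increasing, $1/S_j$ decreasing in independent variables, hence $\myexpectation{S_i/S_j}\le i/(j-1)$) is also valid. The one point to patch: the lemma as stated does not restrict $i\le j$, while your argument (unlike the paper's) treats only $i\le j$. That is indeed the only regime invoked in Appendices B and C, and your coupling extends immediately to $i>j$ --- there $S_i/S_j = 1 + T'/S_j$ with $T'\sim\mathrm{Gamma}(i-j,1)$ independent of $S_j$, giving mean $(i-1)/(j-1)<(i+1)/(j-2)$ --- but you should state this explicitly to claim the lemma in full generality.
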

\setcounter{section}{2}
\setcounter{lemma}{0}
\begin{proof}
It can be derived that	
	\begin{align} \label{Holder}
	\myexpectation{\frac{X_{(i)}}{X_{(j)}}} 
	& \leq \sqrt{\myexpectation{X_{(i)}} \myexpectation{\frac{1}{X_{(j)}}} }
	\leq \sqrt{\frac{(i^2+i)}{(K-i+1)^2}\frac{K^2}{(j-1)(j-2)}} 
	< \frac{(i+1)}{(K-i+1)}\frac{K}{(j-2)}, \forall j>2,
	\end{align}
	where the first inequality in \eqref{Holder} is due to the Cauchy-Schwarz inequality, i.e., $\myexpectation{\vert X Y \vert}\leq \sqrt{\myexpectation{ X^2}\myexpectation{ Y^2}}$,
	and the second inequality in \eqref{Holder} is obtained by using
	the inequalities
	\begin{equation}\label{exp2}
	\frac{1}{K}\sum_{j=1}^{i} {Z_j}\preceq X_i \stackrel{d}{=} \sum_{j=1}^{i} \frac{Z_j}{K-j+1} \preceq \frac{1}{K-i+1}\sum_{j=1}^{i} {Z_j}, \forall i,
	\end{equation}
	and the property that $\sum_{j=1}^{i}Z_j$ and $\frac{1}{\sum_{j=1}^{i}Z_j}$ follows the gamma distribution $\mathrm{Gamma}(i,1)$ and the inverse gamma distribution $\mathrm{Inv-Gamma}(i,1)$, respectively.
\end{proof}

For the second term on the RHS of \eqref{low_low}, using the Jensen's inequality and \eqref{exp2}, we have
\begin{equation} \label{second}
\myexpectation{\frac{\sigma^2 }{P K} \frac{1}{U_{\imath(K)+1}}} \geq \frac{\sigma^2 }{P K} \frac{K-\imath(K)}{\imath(K)+1}.
\end{equation} 
Thus, if $\liminf\limits_{K\rightarrow\infty} \imath(K)\leq 2$, $\limsup\limits_{K\rightarrow\infty} \frac{\myexpectation{\mse}}{K} \geq \frac{\sigma^2 }{P K} \frac{K-\liminf\limits_{K\rightarrow\infty} \imath(K)}{\liminf\limits_{K\rightarrow\infty} \imath(K)+1}\geq  \frac{\sigma^2 }{3 P}, K\rightarrow \infty$, completing the proof of \eqref{key_scaling2}. In the following, we assume that $\liminf\limits_{K\rightarrow\infty} \imath(K) > 2$.

For the first term on the RHS of \eqref{low_low}, we have the following inequality
\begin{align} \label{jensen}
&\myexpectation{\! \left(\!\frac{h_k}{h_{\imath(K)}} -1\!\right)^2\!}
\!\geq\! \left(\!1-\sqrt{\myexpectation{\frac{U_k}{U_{\imath(K)}}}}\!\right)^2 
\!\geq\!
\left(\!\left[\!1-\sqrt{\frac{K}{K-k+1} \frac{k+1}{\imath(K)-2}}\!\right]^+\!\right)^2\!,\forall \imath(K)\!>\!2,k\!\leq \!\imath(K).
\end{align}
where the first inequality is due to the Jensen's inequality and the convexity of the function $(1-\sqrt{x})^2$, the second inequality is obtained by Lemma~\ref{lem:ratio}, and $[x]^+ \triangleq \max\{x,0\}$.

Therefore, it can be obtained that
\begin{align}\label{sum}
&\frac{1}{K}\sum_{k=1}^{\imath(K)} \myexpectation{ \left(\frac{h_k}{h_{\imath(K)}} -1\right)^2} 
\!\!\geq \!
\frac{1}{K}\sum_{k=1}^{\imath(K)} \left(\left[1-\sqrt{\frac{K}{K-k+1} \frac{k+1}{\imath(K)-2}}\right]^+\right)^2\!\!,\forall \imath(K)\!>\!2\\
\label{change_k}
&\geq \frac{1}{K}  \sum_{k=2}^{\imath(K)} \left(\left[1-\sqrt{\frac{K}{K-k} \frac{k}{\imath(K)-2}}\right]^+\right)^2\\
\label{integral}
&\geq \frac{1}{\frac{\imath(K)-2}{K}} \sum_{k=2}^{g(K)}\! \left(\!\sqrt{\frac{\frac{g(K)}{K}}{1-\frac{g(K)}{K}}}\!-\! \sqrt{ \frac{\frac{k}{K}}{1-\frac{k}{K}}}\!\right)^2\!\frac{1}{K}\!=\!\frac{1}{\frac{\imath(K)-2}{K}} \left(\!\int_{0}^{\frac{g(K)}{K}}\left(\!\sqrt{\frac{\frac{g(K)}{K}}{1-\frac{g(K)}{K}}}-\sqrt{\frac{x}{1-x}}\!\right)^2\!\mathrm{d}x \!+\! o(\frac{1}{K})\!\!\right)\\ \label{last}
&=\frac{1}{\frac{\imath(K)-2}{K}} \left(\mu\left(\frac{g(K)}{K}\right)\right)+o(\frac{1}{\imath(K)}),
\end{align}
where
$
g(K)=\left\lfloor\frac{\imath(K)-2}{1+(\imath(K)-2)/K} \right\rfloor,
$
and $\mu(x)$ is defined in \eqref{haha} and it can be proved that
$
\mu(x)=\frac{1}{6}x^2+o(x^3), x\rightarrow 0.
$
\eqref{change_k} is due to $K-k+1>K-k-1$\footnote{Note that here we assume that $\imath(K)+1<K$ in \eqref{sum}. For the case that $\imath(K)+1 \geq K$, the summation in \eqref{sum} can be rewritten as $\sum_{k=1}^{\imath(K)-2} \left(\left[1-\sqrt{\frac{K}{K-k+1} \frac{k+1}{\imath(K)-2}}\right]^+\right)^2+\sum_{k=\imath(K)-1}^{\imath(K)} \left(\left[1-\sqrt{\frac{K}{K-k+1} \frac{k+1}{\imath(K)-2}}\right]^+\right)^2$. Following the similar steps of the $\imath(K)+1<K$ case, this one has the same asymptotic results as the $\imath(K)+1<K$ case and the detailed analysis is omitted for brevity.} and is obtained by replacing $(k+1)$ with $k$,
\eqref{integral} is due to the facts that $g(K)/K<1/2$ and the function $\sqrt{x/(1-x)}$ is monotonic and bounded in $[0,1/2]$, and is obtained by using Riemann integral to approximate Riemann sum when $K$ is large.

Assuming that $\limsup\limits_{K\rightarrow \infty}\frac{\imath(K)}{K} =c \neq 0$, from \eqref{last}, we have 
$
\limsup\limits_{K\rightarrow\infty} 
\frac{\myexpectation{\mse}}{K} 
\geq 
\limsup\limits_{K\rightarrow\infty}  \frac{1}{K} \times \\
\sum_{k=1}^{\imath(K)} \myexpectation{ \left(\frac{h_k}{h_{\imath(K)}} -1\right)^2} 
\geq \frac{1}{c} \mu (\frac{c}{1+c}),
$
completing the proof of \eqref{key_scaling1}.

Assuming that $\lim\limits_{K\rightarrow \infty}\frac{\imath(K)}{K}= 0$, we have
$
\mu(g(K)/K)= \frac{1}{6} \left(\frac{\imath(K)}{K}\right)^2+o\left(\left(\frac{\imath(K)}{K}\right)^3\right)+o(\frac{1}{\imath(K)}).
$
Taking it into \eqref{last} and jointly using \eqref{second} in \eqref{low_low}, it can be obtained that
\begin{equation} \label{lower}
\frac{\myexpectation{\mse}}{K} \geq \frac{1}{6} \frac{\imath(K)}{K} + \frac{\sigma^2}{P} \frac{1}{\imath(K)}+o\left(\frac{\imath(K)}{K}\right)+o(\frac{1}{\imath(K)}),\!K\!\rightarrow\!\infty.
\end{equation}
From the upper and lower bounds \eqref{inverse_gamma} and \eqref{lower}, \eqref{key_scaling} can be obtained.

\section*{Appendix C: Proof of Theorem~\ref{theory:power}}
From \eqref{Power}, an upper bound and a lower bound of the average power consumption can be obtained as
\begin{align} \label{upper2}
\frac{\myexpectation{\mathsf{PW}}}{K} & \leq \frac{P}{K}
\left(\imath(K)+ \sum_{k=\imath(K)+1}^{K} \myexpectation{\frac{h^2_{\imath(K)+1}}{h^2_{k}}}\right),\ 
\frac{\myexpectation{\mathsf{PW}}}{K}  \geq \frac{P}{K} \imath(K).
\end{align}

For the case that $\liminf_{K\rightarrow \infty} \imath(K)=K$, using the lower bound in \eqref{upper2} and the fact that $\pw/K\leq P$, we have $\limsup_{K\rightarrow \infty} \frac{\myexpectation{\mathsf{PW}}}{K} = P$, which completes the proof of \eqref{kkey_scaling2}.
For the case that $\limsup_{K\rightarrow \infty} \imath(K)/K=c'\neq 0$, using the lower bound in \eqref{upper2}, we have $\limsup_{K\rightarrow \infty} \frac{\myexpectation{\mathsf{PW}}}{K} \geq c'P$, which completes the proof of \eqref{kkey_scaling1}.

For the case that $\limsup_{K\rightarrow \infty} \imath(K)/K=0$, using \eqref{upper2}, we further have
\begin{align}\label{cauchy2}
&\frac{\myexpectation{\mathsf{PW}}}{K} 
 \!<\! \frac{P}{K}\!
\left(\!\imath(K)\!+ \sum_{k=\imath(K)+1}^{K} \frac{(\imath(K)+2)}{(K-\imath(K))}\frac{K}{(k-2)}\!\right), \forall i(K)\!<\!K\\ \label{harmonic}
& < \frac{P}{K}
\left(\!\imath(K)\!+\!\frac{(\imath(K)+2)K}{(K-\imath(K))} \sum_{k=1}^{K} \frac{1}{k}\!\right)
\! =\! P
\left(\!\frac{\imath(K)}{K}\!+\!\frac{\imath(K)+2}{K-\imath(K)} O(\log(K))\!\right),K\rightarrow\infty\\ \label{P_up}
&=O\left(\frac{\imath(K)\log(K)}{K}\right),K\rightarrow\infty,
\end{align}
and
$
\frac{\myexpectation{\mathsf{PW}}}{K}  \geq \frac{P}{K} \imath(K)= O\left(\frac{\imath(K)}{K}\right),K\rightarrow\infty,
$
where \eqref{cauchy2} is a consequence of Lemma~\ref{lem:ratio}, and the equality in \eqref{harmonic} is due to the property of the harmonic series, which completes the proof of \eqref{kkey_scaling}.


\vspace{-0.8cm}

\ifCLASSOPTIONcaptionsoff
\fi


\end{document}